\newif\ifsubmission
\newif\ifcomment
\newif\ifstoc 
\author{Anonymous Authors}
\author{Yotam Kenneth-Mordoch}
\affiliation{%
  \institution{Weizmann Institute of Science}
  \city{Rehovot}
  \country{Israel}
}
\email{yotam.kenneth@weizmann.ac.il}
\author{Robert Krauthgamer}
\affiliation{%
  \institution{Weizmann Institute of Science}
  \city{Rehovot}
  \country{Israel}
}
\email{robert.krauthgamer@weizmann.ac.il}
  \author{
    Yotam Kenneth-Mordoch
    \qquad
    Robert Krauthgamer%
    \thanks{ The Harry Weinrebe Professorial Chair of Computer Science.
      Work partially supported by the Israel Science Foundation grant \#1336/23.
    }
    \\ Weizmann Institute of Science
    \\ \texttt{\{yotam.kenneth,robert.krauthgamer\}@weizmann.ac.il}
  }
\newcommand{\mintcut}{\mathrm{cut}}
\newcommand{\R}{\mathbb{R}}
\newcommand{\vol}[1]{\text{vol}(#1)}
\newcommand{\G}{\mathcal{G}}
\newcommand{\f}{\varphi}
\newcommand{\tO}{\tilde{O}}
\DeclareMathOperator{\poly}{poly}
\DeclareMathOperator{\polylog}{polylog}
\providecommand{\set}[1]{{\{#1\}}}
\newcommand{\eqdef}{\coloneqq}
\newcommand{\colnote}[3]{\textcolor{#1}{$\ll$\textsf{#2}$\gg$\marginpar{\tiny\bf #3}}}
\newcommand{\rnote}[1]{\colnote{purple}{#1--Robi}{RK}}
\newcommand{\ynote}[1]{\colnote{blue}{#1--Yotam}{YK}}
\newcommand{\rnote}[1]{}
\newcommand{\ynote}[1]{}
\newtheorem{theorem}{Theorem}[section]
\newtheorem{claim}[theorem]{Claim}
\newtheorem{lemma}[theorem]{Lemma}
\newtheorem{corollary}[theorem]{Corollary}
\newtheorem{definition}[theorem]{Definition}
\newtheorem{fact}[theorem]{Fact}
\newtheorem{remark}[theorem]{Remark}
\newcommand{\cross}[2]{\crosscut_{#2}(#1)}
\DeclareMathOperator{\crosscut}{cr}
\newcommand{\sparsifiername}{\text{APMC sparsifier}}
\newcommand{\ap}{\textnormal{ap}}
\newcommand{\fr}{\textnormal{fr}}
\newcommand{\friendly}{H_{\fr}}
\newcommand{\allpairs}{H_{\ap}}
\newcommand{\cutspar}{H_{\textnormal{c}}}
\newcommand{\powercut}{H_{\textnormal{p}}}
\newcommand{\A}{\mathcal{A}}
\newcolumntype{P}[1]{>{\centering\arraybackslash}p{#1}}
\newcolumntype{M}[1]{>{\centering\arraybackslash}m{#1}}
\title{Faster All-Pairs Minimum Cut: Bypassing Exact Max-Flow}
\begin{document}

\ifstoc
\begin{abstract}
All-Pairs Minimum Cut (APMC) is a fundamental graph problem
that asks to find a minimum $s,t$-cut for every pair of vertices $s,t$.
A recent line of work on fast algorithms for APMC 
has culminated with a reduction of APMC to $\mathrm{polylog}(n)$-many max-flow computations.
But unfortunately, no fast algorithms are currently known for exact max-flow 
in several standard models of computation,
such as the cut-query model and the fully-dynamic model.

Our main technical contribution is a sparsifier
that preserves all minimum $s,t$-cuts in an unweighted graph,
and can be constructed using only approximate max-flow computations.
We then use this sparsifier to devise new algorithms for APMC in unweighted graphs
in several computational models:
(i) a randomized algorithm that makes $\tilde{O}(n^{3/2})$ cut queries to the input graph; 
(ii) a deterministic fully-dynamic algorithm with $n^{3/2+o(1)}$ worst-case update time; and
(iii) a randomized two-pass streaming algorithm with space requirement $\tilde{O}(n^{3/2})$.
These results improve over the known bounds,
even for (single pair) minimum $s,t$-cut in the respective models.
\end{abstract}

\begin{CCSXML}
<ccs2012>
   <concept>
       <concept_id>10003752.10003809.10003635</concept_id>
       <concept_desc>Theory of computation~Graph algorithms analysis</concept_desc>
       <concept_significance>500</concept_significance>
       </concept>
   <concept>
       <concept_id>10003752.10003809.10003635.10010038</concept_id>
       <concept_desc>Theory of computation~Dynamic graph algorithms</concept_desc>
       <concept_significance>500</concept_significance>
       </concept>
   <concept>
       <concept_id>10003752.10003809.10003635.10010036</concept_id>
       <concept_desc>Theory of computation~Sparsification and spanners</concept_desc>
       <concept_significance>500</concept_significance>
       </concept>
   <concept>
       <concept_id>10003752.10003809.10010055</concept_id>
       <concept_desc>Theory of computation~Streaming, sublinear and near linear time algorithms</concept_desc>
       <concept_significance>500</concept_significance>
       </concept>
 </ccs2012>
\end{CCSXML}

\ccsdesc[500]{Theory of computation~Graph algorithms analysis}
\ccsdesc[500]{Theory of computation~Dynamic graph algorithms}
\ccsdesc[500]{Theory of computation~Sparsification and spanners}
\ccsdesc[500]{Theory of computation~Streaming, sublinear and near linear time algorithms}

\keywords{all-pairs minimum cut, graph sparsifier, cut query, dynamic streaming, dynamic graph algorithms}
\maketitle
\else
\maketitle

\fi

\ifstoc\else
\setcounter{page}{0}
\thispagestyle{empty}
\newpage
\fi
\section{Introduction}
\label{sec:introduction}
A powerful technique for speeding up the computation of graph algorithms is to employ sparsifiers, which are graphs that capture some property of the original graph while being significantly smaller.
For example, Nagamochi and Ibaraki \cite{NI92} showed that given a graph on $n$ vertices and a parameter $k>1$, one can construct a subgraph with $O(nk)$ edges that preserves all its cuts with at most $k$ edges.
This sparsifier is a key ingredient in several fast algorithms,
from finding a global minimum cut \cite{NI92},
through constructing $(1\pm\epsilon)$-cut sparsifiers \cite{FHHP19},
to solving the all-pairs minimum cut problem \cite{JPS21,AKT21a,AKT21b,AKT22b}.
Another example are non-trivial minimum cut (NMC) sparsifiers, introduced by Kawarabayashi and Thorup \cite{KT19}, which preserve all non-trivial minimum cuts of a graph while reducing the number of vertices in the graph to $\tilde{O}(n/\delta)$, where $\delta$ is the minimum degree of the graph
and $\tO(\cdot)$ hides polylogarithmic factors.
This sparsifier has been the basis of many recent algorithms
for finding a global minimum cut
in several models of computation \cite{KT19,HRW20,GNT20,AEGLMN22,GHNSTW23,KK25a,HKMR25,KK25c}.
Additional examples include friendly-cut sparsifiers \cite{AKT22b},
approximate cut sparsifiers \cite{BK96},
and many more variants that preserve other properties,
like distances (called spanners), flows, spectrum, and resistances;
or preserve such properties only for terminal vertices (called vertex sparsifiers).

We focus on the all-pairs minimum cut problem (APMC),
where the input is an edge-weighted graph $G=(V,E,w)$,
and the goal is to find for every pair $s,t\in V$
the minimum $s,t$-cut value, which is formally defined as 
\[
  \lambda_G(s,t)\eqdef \min \big\{ \mintcut_G(S) :\ {\{s\} \subseteq S \subseteq V\setminus\{t\}}  \big\} ,
\]
where $\mintcut_G(S)$ denotes the total weight of edges crossing the cut $(S,V\setminus S)$.
Throughout, we restrict attention to the case where the input $G$ is unweighted,
i.e., all edges have unit weight. 
Our method is based on efficiently constructing a sparsifier that preserves all these minimum cuts, and then solving it on the sparsifier.
A key difference from prior work is that our approach adds new vertices to the graph, 
as formalized in the next definition.

\begin{definition}
An all-pairs minimum cut sparsifier (APMC sparsifier) of a graph $G=(V,E)$ 
is a weighted graph $H=(U,E_H,w)$ where $U\supseteq V$,
such that for every $s,t\in V$,
there exists a minimum $s,t$-cut $S_H^{s,t}$ in $H$
such that $S_H^{s,t}\cap V$ is a minimum $s,t$-cut in $G$ of the same value, i.e.,  
\begin{equation*}
  \lambda_{H}(s,t)
  = \mintcut_H(S_H^{s,t})
  = \mintcut_G(S_H^{s,t}\cap V)
  = \lambda_{G}(s,t)
  .
\end{equation*}
\end{definition}

A well-known example of an APMC sparsifier 
is the celebrated Gomory-Hu tree~\cite{GH61},
which is a tree built on the same vertices ($U=V$ in our notation)
using $n-1$ calls to a minimum $s,t$-cut procedure.
Besides the obvious sparsity of a tree ($n-1$ edges), 
their algorithm achieves a quadratic improvement
over the $\binom{n}{2}$ calls needed by a straightforward computation for all pairs.
A recent line of work shows that a Gomory-Hu tree can be computed
using only $\polylog(n)$ calls to a minimum $s,t$-cut procedure \cite{AKT20,JPS21,AKT21a,AKT21b,AKT22b,AKLPST22,ALPS23,AKL+25,GKYY25}
and minimal additional computation, 
hence the complexity of solving all-pairs minimum cuts
is essentially equivalent to that of solving a single-pair minimum cut
(known to be equivalent to solving maximum $s,t$-flow, and thus called a max-flow computation).

Unfortunately, computing exact minimum $s,t$-cut is currently still expensive in several models of computation including the cut-query, streaming and fully-dynamic models.
Therefore, constructing a Gomory-Hu tree in these settings remains expensive as well.
Our main contribution is a new construction of an $\sparsifiername$
that uses only \emph{approximate} minimum $s,t$-cut,
yielding efficient algorithms in models where finding approximate cuts is cheap,
such as the cut-query and streaming models.
Furthermore, our APMC sparsifier is robust to changes in the input $G$
and hence can be maintained efficiently in the fully-dynamic model.
In fact, our sparsifier is so succinct that it yields algorithms for all-pairs minimum cut
that are more efficient than current algorithms for single-pair minimum cut in the aforementioned settings.

\subsection{Results}

\paragraph*{Cut-Query, Two-Party Communication and Submodular Minimization}
Our first result is an improved algorithm for all-pairs minimum cut in the cut-query model.
It not only improves over the $\tO(n^{7/4})$ query complexity known for this problem \cite{KK25b},
but also the $\tO(n^{8/5})$ query complexity known for (single pair) minimum $s,t$-cut \cite{JNS25}.
\begin{theorem}
    \label{theorem:main-result-cut-query}
    There exists a randomized algorithm, that given cut-query access to an unweighted graph $G$ on $n$ vertices, solves the all-pairs minimum cut problem using $\tO(n^{3/2})$ cut queries and succeeds with probability $1-1/\poly(n)$.
\end{theorem}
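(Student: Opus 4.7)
The plan is to decouple the theorem into two tasks: (a) constructing the APMC sparsifier $H$ of $G$ using cut queries, and (b) solving APMC on $H$ without any further access to $G$. By the definition of an APMC sparsifier, $\lambda_G(s,t)=\lambda_H(s,t)$ for every $s,t\in V$, so (b) is immediate from any offline all-pairs minimum-cut routine on the explicit graph $H$ (e.g., a classical Gomory-Hu tree construction run internally). All cut queries are therefore spent on (a).

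For (a), I would first implement an approximate minimum $s,t$-cut subroutine in the cut-query model. A natural route is to extract a $(1\pm\eps)$-cut sparsifier \cite{BK96} of $G$ via \Benczur-Karger-style importance sampling, where the edge strengths and connectivity estimates needed for the sampling probabilities are themselves computed from a small number of cut queries; one then runs any polynomial-time approximate max-flow algorithm on the extracted cut sparsifier without further queries. Plugging this subroutine into the paper's main technical contribution --- the APMC sparsifier construction, which according to the abstract uses only approximate max-flow computations --- gives the output $H$. The total query complexity decomposes as (number of approximate max-flow calls) $\times$ (cost per call), plus the auxiliary queries needed for bookkeeping (e.g., verifying or reading off structural pieces of $H$); I would balance the parameters of the cut sparsifier subroutine and the APMC sparsifier so that both summands fit into $\tO(n^{3/2})$.

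Once $H$ is materialized, correctness of the returned APMC values follows verbatim from the sparsifier's definitional guarantee, and the $1-1/\poly(n)$ success probability follows by standard amplification of the randomized sparsifier construction together with a union bound over the $O(n^2)$ pairs $(s,t)$.

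The main obstacle I foresee is the tight accounting of cut queries. The $\tO(n^{8/5})$ bound of \cite{JNS25} for a single exact minimum $s,t$-cut shows that the target budget of $\tO(n^{3/2})$ does not even accommodate one exact max-flow, so every saving must come from substituting approximate max-flow inside the sparsifier construction. To reach $\tO(n^{3/2})$ rather than a larger polynomial, I expect the argument will need at least one of: (i) each approximate max-flow call is executed on a smaller residual/quotient graph whose effective size is sub-$n$, so the per-call cost is sublinear; (ii) queries made during the $\polylog(n)$ approximate max-flow calls are amortized by reusing a globally maintained cut sparsifier of $G$ across calls; or (iii) the APMC sparsifier is engineered to have at most $\tO(n^{3/2})$ edges, which upper-bounds the number of queries needed to certify its structure. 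Designing this balancing and justifying it rigorously is where I expect the bulk of the technical work to lie; once that is in place, the remaining steps --- invoking the APMC sparsifier's correctness and running an offline Gomory-Hu construction on $H$ --- are routine.
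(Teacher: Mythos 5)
Your high-level decomposition---spend all cut queries building the APMC sparsifier $\allpairs$, then run an offline Gomory--Hu algorithm on $\allpairs$---is exactly what the paper does. However, the mechanism you propose for building $\allpairs$ has a genuine gap. The paper does not implement an approximate max-flow subroutine in the cut-query model at all; instead it invokes \Cref{lemma:friendly-cut-sparsifier-cut-query} (Lemma 1.7 of \cite{KK25b}), which in one shot recovers all edges of a $(1/6,2n)$-friendly cut sparsifier $\friendly$, together with their original endpoints in $G$, using $\tO(\alpha^{-1}n\sqrt{w}) = \tO(n^{3/2})$ cut queries. One then queries the $n$ vertex degrees and applies the star transform of \Cref{theorem:apmc-sparsifier-construction} offline. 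The entire $n^{3/2}$ budget is accounted for by that single black-box citation.

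Your proposed route through a $(1\pm\eps)$-cut sparsifier \cite{BK96} cannot replace this step. The friendly cut sparsifier $\friendly$ is a \emph{contraction} of $G$ with $\tO(n^{3/2})$ edges, and the star transform that builds $\allpairs$ from it needs the exact multiset of these crossing edges and their endpoints (to compute, e.g., $|E(v,u\setminus v)| = \deg(v) - |E(v,V\setminus u)|$). A \Benczur-Karger $(1\pm\eps)$-cut sparsifier is a sparse subgraph with only $\tO(n/\eps^2)$ edges and preserves cut values only approximately; it simply does not contain, and no offline computation on it can recover, the $\tO(n^{3/2})$ exact edge counts that $\allpairs$ requires. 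Of your three candidate fixes, (iii) is the one that gestures at the right reason the budget is $n^{3/2}$---the friendly cut sparsifier indeed has $\tO(n^{3/2})$ edges, and recovering them is exactly where the queries go---but you present it as a speculative fallback rather than as the argument, and you do not identify the friendly cut sparsifier or the \cite{KK25b} construction as the object to aim for. Options (i) and (ii) are red herrings here: there are no max-flow calls to amortize in this model.
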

By an easy reduction, this result also implies
the first (non-trivial) algorithm for all-pairs minimum cut in the two-party communication model.
In this model, the edges of the input graph are partitioned between two players that wish to jointly compute some function of the graph.
The goal is to design a communication protocol that allows the computation while minimizing the total number of bits exchanged between the two players.
We note that our protocol matches the known bound for (single-pair) minimum $s,t$-cut \cite{JNS25,GJ25}. 
\begin{corollary}
    \label{corollary:two-party-comm-result}
    There exists a randomized two-player communication protocol that computes all-pairs minimum cut problem of an unweighted graph with $n$ vertices using $\tO(n^{3/2})$ bits of communication and succeeds with probability $1-1/\poly(n)$.
\end{corollary}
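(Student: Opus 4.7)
The plan is to simulate the cut-query algorithm guaranteed by \cref{theorem:main-result-cut-query} in the two-party communication setting. Alice holds edge set $E_A$ and Bob holds $E_B$ with $E=E_A\cup E_B$; I equip both players with public (shared) randomness so that they can maintain identical copies of the algorithm's internal state.

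The players simulate the algorithm in lockstep: at each step, both know (from the shared randomness and from the history of query answers) exactly which cut $(S,V\setminus S)$ the algorithm wishes to probe next. To answer this query, Alice locally computes $|E_A\cap E(S,V\setminus S)|$ and Bob locally computes $|E_B\cap E(S,V\setminus S)|$; each of these is a nonnegative integer bounded by $\binom{n}{2}$, so transmitting it uses only $O(\log n)$ bits. After exchanging the two partial counts, both players obtain $\mintcut_G(S)$ and continue the simulation identically.

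Summing over the $\tO(n^{3/2})$ queries made by the algorithm, the total communication is $\tO(n^{3/2})\cdot O(\log n)=\tO(n^{3/2})$ bits, and by symmetry of the simulation both players end up knowing the algorithm's output (e.g., a Gomory-Hu tree, whose $O(n\log n)$ representation is subsumed in the above bound). The success probability carries over directly from \cref{theorem:main-result-cut-query}. The only point requiring any thought is to verify that the cut-query algorithm is amenable to a public-coin implementation, but since its correctness depends only on its random tape and the sequence of oracle answers---neither of which is tied to a particular partition of $E$---this is automatic, so there is no real obstacle.
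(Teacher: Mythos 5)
Your proof is correct and is exactly the ``easy reduction'' the paper alludes to but does not spell out: simulate the cut-query algorithm of \Cref{theorem:main-result-cut-query} with public coins, answer each query by exchanging the two players' local counts of cut edges ($O(\log n)$ bits per query since the edges are partitioned between the players), and let both players reconstruct the output from the shared transcript and random tape. Nothing more is needed, and the $\tO(n^{3/2})\cdot O(\log n)=\tO(n^{3/2})$ accounting is right.
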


Furthermore, recent work implies
that one can compute a Gomory-Hu tree of a symmetric submodular function
using $\polylog (n)$ calls to a submodular minimization procedure
on functions with total ground-set size $\tO(n)$ \cite{GKYY25}.%
\footnote{The original work focuses on reducing the construction of Gomory-Hu trees for graphs to $\polylog n$ max-flow instances, we verify that their approach extends to all symmetric submodular functions.}
It then follows easily that the query complexity APMC is at most $\polylog(n)$ times
the query complexity of minimizing a submodular function (abbreviated SFM) on $n$ elements.%
\footnote{The algorithms solves many small SFM instances on contraction of the ground set, whose total size is $\tO(n)$.}
\ifstoc
This is stated formally in the following theorem, whose proof is provided in the full version of the paper.
\else
For completeness, we state this formally in the next theorem and prove it in \Cref{sec:gomory-hu-submodular}.
\fi
It basically rules out the natural approach of proving a query lower bound for SFM
that is based on the hardness of APMC in the cut-query model
and significantly exceeds $\Omega(n^{3/2})$,
as that would contradict our cut-query algorithm above.

\begin{restatable}[Query Complexity Version of \cite{GKYY25}]{theorem}{ghsubmodular}
    \label{theorem:submodular-gomory-hu}
    Let $Q(n)$ be the query complexity of minimizing a symmetric submodular function over a ground set of size $n$, and assume $Q(n)=n\cdot g(n)$ for some non-decreasing function $g$.
    Then, there exists an algorithm for constructing a Gomory-Hu tree of a symmetric submodular $f:2^{[n]}\to \R_+$ using $Q(n)\cdot \polylog (n)$ queries.
\end{restatable}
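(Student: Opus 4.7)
The plan is to follow the reduction of \cite{GKYY25} from Gomory-Hu tree construction on graphs to $\polylog(n)$ max-flow computations, and verify that it goes through for symmetric submodular functions when each max-flow call is replaced by an SFM query. First I would set up the correspondence between graph primitives and submodular ones. A minimum $s,t$-cut in a graph corresponds, for a symmetric submodular $f:2^{[n]}\to\R_+$, to computing $\min\{f(S) : s\in S\subseteq [n]\setminus\{t\}\}$, which is a single SFM instance on $n-2$ elements. Vertex contraction of a set $A\subseteq [n]$ into a single element $v_A$ corresponds to the function $f_A$ on the ground set $([n]\setminus A)\cup\{v_A\}$ defined by $f_A(S)=f(S)$ if $v_A\notin S$ and $f_A(S)=f((S\setminus\{v_A\})\cup A)$ otherwise; one checks that $f_A$ is again symmetric and submodular, and that a single query to $f_A$ is answered by a single query to $f$.

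Next I would walk through the construction of \cite{GKYY25} and replace each access to the graph with its submodular counterpart. Their algorithm, like other modern Gomory-Hu constructions, touches the graph only through (i) minimum $s,t$-cut computations on contracted instances and (ii) isolating-cut / star-contraction subroutines, each of which itself reduces to a small number of min-cut computations on contracted instances. Substituting SFM queries for max-flow calls in each such subroutine yields an algorithm that accesses $f$ only through SFM queries, together with trivial bookkeeping on a partial tree.

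The final step is the query accounting. The \cite{GKYY25} reduction produces $\polylog(n)$ phases; within each phase it solves a collection of \emph{vertex-disjoint} contracted instances whose total ground-set size is $\tO(n)$. Using monotonicity of $Q$ in the ground-set size (and that adding isolated dummy elements to a symmetric submodular function preserves symmetry and submodularity, so each small instance can be reduced to an instance on at most $n$ elements without extra queries), the total SFM cost within one phase is $\tO(Q(n))$, and summing over $\polylog(n)$ phases yields the claimed $Q(n)\cdot\polylog(n)$ bound.

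The main obstacle is not conceptual but verification: one has to inspect every subroutine invoked by \cite{GKYY25} and confirm that it interacts with the graph solely through its cut function on contractions, so that the translation above is faithful. Known extensions of the isolating-cuts framework to symmetric submodular functions indicate this verification goes through, but writing it out carefully in \Cref{sec:gomory-hu-submodular} is the bulk of the proof.
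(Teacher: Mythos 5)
Your reduction and the translation of graph primitives to submodular ones (min $s,t$-cut as SFM, contraction as the contracted function $f_A$) are correct and match the paper's intent. The gap is in the final query accounting. You bound the cost of each contracted instance on a ground set of size $n_i$ by padding it to size $n$ and invoking monotonicity of $Q$, but this gives a total of $(\text{number of instances})\cdot Q(n)$, and the number of vertex-disjoint instances created in a phase can be polynomial in $n$ (e.g., $\Theta(\sqrt{n})$ instances of size $\Theta(\sqrt{n})$ each). What you actually need is $\sum_i Q(n_i)\le Q(\sum_i n_i)$, i.e., \emph{superadditivity} of $Q$, which does not follow from monotonicity alone.

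The paper closes this gap by observing that the SFM query complexity satisfies $Q(n)=\Omega(n)$, so one may take $Q(n)/n$ to be non-decreasing, which is exactly the condition implying superadditivity (the paper records this as a short Fact). With superadditivity in hand, the bound $\sum_{i,j} Q(n_{i,j})\le Q(n)\cdot\polylog(n)$ follows from the $\tO(n)$ bound on the total ground-set size of all recursive sub-instances (Claim 5.3 of \cite{GKYY25}) together with the fact that the isolating-cuts subroutine for symmetric submodular functions reduces to $O(\log n)$ SFM calls on ground sets whose sizes again sum to $O(n)$ (Theorem 2.4 of \cite{MN21}). So your high-level plan is sound, but you should replace the monotonicity-plus-padding step with a superadditivity argument, which in turn rests on the known $\Omega(n)$ lower bound for SFM.
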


\paragraph*{Fully-Dynamic Algorithms}
In the fully-dynamic setting, the input is a \emph{dynamic graph} presented as
a sequence of edge insertions and deletions over a fixed set of $n$ vertices,
and the goal is to maintain some function of the graph after each edge update. 
We present the first (non-trivial) fully-dynamic algorithm for APMC
(i.e., for maintaining a Gomory-Hu tree).
Furthermore, our algorithm is deterministic, and thus robust to adversarial updates.
Even for (single-pair) minimum $s,t$-cut, all previous algorithms
either return an approximate solution \cite{ADKKP16,GRST21,BBG+22}
or are limited to instances where $\lambda_G(s,t)\le\log^{o(1)}n$ \cite{GI91,Frederickson97,HK99,JS21}.

\begin{theorem}
\label{theorem:main-result-fully-dynamic}
There exists a deterministic fully-dynamic algorithm
that maintains a Gomory-Hu tree for a dynamic unweighted graph on $n$ vertices, 
using $n^{3/2+o(1)}$ worst-case update time.
\end{theorem}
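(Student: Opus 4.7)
The overall strategy is to reduce the dynamic maintenance of a Gomory-Hu tree for $G$ to the dynamic maintenance of the APMC sparsifier $H$ produced by the paper's main technical contribution, and to extract a Gomory-Hu tree by running a static algorithm on $H$ after each update. Since $H$ preserves all minimum $s,t$-cuts of $G$ on the vertex set $V\subseteq U(H)$, and since $H$ is significantly sparser than $G$, any update cost that is the sum of (refresh cost of $H$) and (static Gomory-Hu cost on $H$) will yield the target $n^{3/2+o(1)}$ bound, provided each summand fits in that budget.

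I would first describe how to recompute $H$ from scratch deterministically in $n^{3/2+o(1)}$ time after each edge update. The construction of $H$ only invokes approximate max-flow (this is exactly the point of the main technical result), so plugging in the current state-of-the-art \emph{deterministic} $(1+\epsilon)$-approximate undirected max-flow solver, whose running time is $m^{1+o(1)}$, yields the required bound \emph{if} $m=\tilde{O}(n)$ at each call. To ensure this even when $G$ is dense, I would precede every approximate max-flow invocation by a deterministic $(1\pm\epsilon)$-cut sparsification step (e.g., combining Nagamochi--Ibaraki truncation with a deterministic approximate cut sparsifier), reducing each input to $\tilde{O}(n)$ edges before invoking the solver. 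With $\polylog(n)$ such calls in the construction, the total deterministic rebuild cost is $n^{3/2+o(1)}$.

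Next I would describe how to extract the Gomory-Hu tree from $H$. By the Gomory-Hu-to-max-flow reduction in \cite{AKT22b,AKL+25,GKYY25}, constructing a Gomory-Hu tree of $H$ requires $\polylog(n)$ \emph{exact} max-flow computations on graphs whose total size is $\tilde{O}(|E(H)|)=\tilde{O}(n^{3/2})$. Plugging in the fastest deterministic exact max-flow algorithm, whose running time is $m^{1+o(1)}$, each such call costs at most $n^{3/2+o(1)}$. By the APMC sparsifier property, the resulting tree is a Gomory-Hu tree of $G$ restricted to $V$, which is what we need to maintain.

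The main obstacle is to turn the per-update cost into a \emph{worst-case} rather than amortized bound. My plan is a standard de-amortization: at any time maintain a current sparsifier $H_{\mathrm{cur}}$ plus a log $L$ of the $O(1)$ pending updates since the last completion of a background rebuild; each update triggers $n^{3/2+o(1)}/\Theta(1)=n^{3/2+o(1)}$ units of work on a background reconstruction of $H$ using the snapshot of $G$ at the time the rebuild was started, combined with replaying $L$ on the completed sparsifier. Queries are answered from $H_{\mathrm{cur}}$ after patching in $L$, which is valid because each edge of $L$ can be modelled as a contraction or split in $H_{\mathrm{cur}}$ without breaking the APMC sparsifier invariant for the remaining pairs. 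The secondary difficulty is ensuring every subroutine in the pipeline (approximate max-flow, cut sparsifier, exact max-flow on $H$, and the Gomory-Hu reduction) is deterministic; I would verify that the chosen solvers indeed have deterministic variants within the same $m^{1+o(1)}$ budget, which is the most delicate part of the proof.
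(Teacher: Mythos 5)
Your high-level plan is the right one (maintain an APMC sparsifier, extract a Gomory-Hu tree on it with \cite{AKL+25}), but the core step of your argument — how to refresh the sparsifier within the budget — has two genuine gaps, and both are exactly where the paper's proof does its real work.

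First, your cost estimate for rebuilding the APMC sparsifier from scratch is wrong. The construction of the friendly cut sparsifier (\Cref{lemma:friendly-cut-sparsifier-AKT} combined with \Cref{theorem:expander-decomposition-deterministic-LS21}) runs in $m^{1+o(1)}$ time where $m=|E(G)|$, i.e.\ up to $n^{2+o(1)}$ when $G$ is dense. Your fix — pre-sparsify $G$ to $\tilde{O}(n)$ edges with a $(1\pm\epsilon)$-cut sparsifier before each call — does not work: the friendly cut sparsifier must be a \emph{contraction of the actual graph $G$} so that friendly cuts and their exact values are preserved (\Cref{definition:friendly-cut-sparsifier}); a contraction of an approximate cut sparsifier cannot exactly preserve cut values in $G$, and the $O(m)$ time spent outside the expander-decomposition calls (identifying shaved vertices, reconnecting edges to original endpoints, etc.) cannot be compressed away either. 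So a full rebuild after every update would give $n^{2+o(1)}$ per update, not $n^{3/2+o(1)}$.

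Second, your de-amortization sketch asserts precisely what must be proved. You say the pending update log $L$ has ``$O(1)$'' entries and that each such edge ``can be modelled as a contraction or split in $H_{\mathrm{cur}}$ without breaking the APMC sparsifier invariant.'' But (a) with $O(1)$ pending updates you cannot amortize an $n^{2+o(1)}$ rebuild down to $n^{3/2+o(1)}$; you need a window of $\Theta(\sqrt{n})$ updates during which the stale sparsifier remains valid; and (b) the claim that applying $k$ edge insertions/deletions to the friendly cut sparsifier keeps it a valid friendly cut sparsifier is a nontrivial structural lemma, not a free observation. The paper proves exactly this: the friendly cut sparsifier produced by \Cref{algorithm:friendly-cut-sparsifier} is $\sqrt{w}$-robust (\Cref{lemma:robustness-of-friendly-cut-sparsifier}), i.e.\ it remains a $(2\alpha,w)$-friendly cut sparsifier of the updated graph after up to $\sqrt{w}$ edge updates are applied to both $G$ and $\friendly$. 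This robustness result is what lets the paper divide the update sequence into epochs of length $\sqrt{n}$, mirror each edge update directly into $\friendly$ (and hence into $\allpairs$) in $O(1)$ time, and spread an $n^{2+o(1)}$ background rebuild over $\sqrt{n}/\log n$ updates, giving worst-case $n^{3/2+o(1)}$ per update. Your proposal is missing the robustness lemma and the resulting $\sqrt{n}$-length epoch structure, and without them the worst-case bound does not go through.
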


This theorem immediately implies an algorithm with $n^{3/2+o(1)}$ worst-case update time
for maintaining edge connectivity in a dynamic graph.
This is the first deterministic algorithm for the problem with $o(n^2)$ worst-case update time.
Furthermore, when $m\ge n^{37/22}\approx n^{1.7}$,
it matches, up to subpolynomial factors,
the state-of-the-art deterministic amortized-time complexity
\ifstoc
\else
of $\tO(\min(m^{1-1/12}, m^{11/13}n^{1/13}, n^{3/2}))$
\fi
\cite{VC25}.

\begin{corollary}
There exists a deterministic fully-dynamic algorithm that
maintains the value of a global minimum cut of a dynamic unweighted graph on $n$ vertices,
using $n^{3/2+o(1)}$ worst-case update time.
\end{corollary}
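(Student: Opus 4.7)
The plan is to derive the corollary directly from \Cref{theorem:main-result-fully-dynamic} by observing that the value of the global minimum cut of $G$ is encoded in a Gomory-Hu tree of $G$. Concretely, recall the defining property of a Gomory-Hu tree $T$ for $G$: for every $s,t\in V$, the value $\lambda_G(s,t)$ equals the minimum edge weight on the unique $s$-$t$ path in $T$. Taking the minimum over all pairs,
\[
\min_{s\neq t\in V} \lambda_G(s,t) \;=\; \min_{(u,v)\in E(T)} w_T(u,v),
\]
so the value of the global minimum cut is simply the minimum edge weight in $T$.

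First I would invoke \Cref{theorem:main-result-fully-dynamic} to maintain a Gomory-Hu tree $T$ of the dynamic graph with worst-case update time $n^{3/2+o(1)}$, which is deterministic by assumption. After each update, the algorithm of \Cref{theorem:main-result-fully-dynamic} produces the new tree $T$ (or a description of the edge-weight changes in $T$), and we must additionally report the minimum edge weight of $T$. A trivial implementation simply scans the $n-1$ edges of $T$ once per update in $O(n)$ time, which is absorbed by the $n^{3/2+o(1)}$ budget. Alternatively, maintaining a balanced search tree keyed on edge weights over the $n-1$ tree edges supports insertions, deletions, and weight changes in $O(\log n)$ time each; the number of edge-weight changes per update is at most $n^{3/2+o(1)}$, so this extra bookkeeping is also within budget.

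Because the underlying Gomory-Hu maintenance is deterministic, the overall algorithm is deterministic; because its update time is $n^{3/2+o(1)}$ in the worst case, and the auxiliary min-tracking step adds at most $\tO(n^{3/2+o(1)})$ time, the final worst-case update time is $n^{3/2+o(1)}$, as claimed.

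There is no serious obstacle in this reduction: the only conceptual point to verify is that \Cref{theorem:main-result-fully-dynamic} indeed exposes the tree's edges (or their updates) to the caller, rather than supporting only pairwise $\lambda_G$-queries; this is implicit in the phrasing ``maintains a Gomory-Hu tree''. Everything else is routine.
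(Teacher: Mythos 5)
Your proposal is correct and matches the paper's (implicit) reasoning: the paper treats this corollary as an immediate consequence of \Cref{theorem:main-result-fully-dynamic}, since the global minimum cut value is exactly the minimum edge weight of the maintained Gomory-Hu tree, and scanning or bookkeeping over the $n-1$ tree edges fits comfortably within the $n^{3/2+o(1)}$ update budget.
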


\paragraph*{Streaming Algorithms}
We provide the first (non-trivial) algorithm for all-pairs minimum cut
in dynamic streams,
where the graph is presented as a sequence of edge insertions and deletions.
Our algorithm uses $\tO(n^{3/2})$ space and makes two pass. 
Previously, the only known algorithm was for finding a (single-pair) minimum $s,t$-cut,
and it uses $\tO(n^{5/3})$ space in two passes \cite{RSW18}.
No bounds are known even with other restrictions,
e.g., insertion-only (no deletions) or another number of passes.

\begin{theorem}
    \label{theorem:main-result-streaming}
    Given an unweighted graph $G$ on $n$ vertices presented as a dynamic stream, 
    one can solve the all-pairs minimum cut problem using $\tO(n^{3/2})$ bits of storage in two passes.
    The algorithm is randomized and succeeds with probability $1-1/\poly(n)$.
\end{theorem}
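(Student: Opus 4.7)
My plan is to reduce the streaming APMC problem to the APMC sparsifier developed as the main technical contribution of this paper. Recall that this sparsifier $H$ has $\tO(n^{3/2})$ edges and can be constructed from $G$ using only $\polylog(n)$ approximate max-flow computations. The key observation is that approximate max-flow over a graph stream is readily available through $(1+\epsilon)$-cut sparsifiers, which can be built in a single pass using standard linear sketching over dynamic streams.

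In the first pass, I maintain a linear sketch that, after the pass, yields a $(1+\epsilon)$-cut sparsifier $\tilde{G}$ of $G$ at a cost of $\tO(n)$ space; this $\tilde{G}$ supports any sequence of \emph{adaptive} $(1+\epsilon)$-approximate min-cut queries offline. In parallel, I maintain auxiliary linear sketches---such as $\ell_0$-samplers on vertex neighborhoods and a low-arboricity spanner---intended to help extract exact edge information later. Between the passes, I run the APMC sparsifier construction algorithm end-to-end, answering every required approximate max-flow call by computing an exact min-cut in $\tilde{G}$. This produces a ``blueprint'' for $H$: which new vertices to introduce, which edge weights to place, and, crucially, which parts of $G$ still need to be read in detail.

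In the second pass, I stream through $G$ once more, using the blueprint together with the auxiliary sketches to extract exactly the edge-level information needed to finalize $H$, and then discard the rest. Since $|E(H)|=\tO(n^{3/2})$ and the blueprint itself fits in $\tO(n^{3/2})$ space, the overall storage stays within the target budget. After the second pass, I compute a Gomory-Hu tree of $H$ in memory via $\polylog(n)$ exact max-flow computations on the small sparsifier, and this tree answers every pairwise minimum-cut query among vertices of $V$.

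The main obstacle is handling the adaptive nature of the APMC sparsifier construction: the $i$-th approximate max-flow query may depend on the answers to earlier queries, which rules out a naive non-adaptive implementation that enumerates all queries during one pass. The crux of the plan is that the first-pass sparsifier $\tilde{G}$ acts as a \emph{frozen oracle} for all subsequent adaptive approximate queries during the interstitial computation, so the streaming algorithm needs only one more pass---devoted exclusively to exact edge extraction---to assemble $H$. Verifying that this extraction can be driven entirely from the sketches collected in the first pass, and that it stays within $\tO(n^{3/2})$ space while accommodating the construction's adaptive structure, is the most delicate part of the argument.
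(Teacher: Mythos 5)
The high-level shape of your plan (pass one builds an oracle, between passes run the construction, pass two extracts exact edges, then solve offline) matches the paper. However, the specific pass-one oracle you propose is insufficient, and this is not a cosmetic detail but exactly where the technical work of the paper's proof lies.

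The friendly cut sparsifier is built by a \emph{recursive expander decomposition} on contractions of $G$ (\Cref{algorithm:friendly-cut-sparsifier}, \Cref{lemma:friendly-cut-sparsifier-AKT}), and the expander condition is a property of the \emph{induced subgraphs} $G[W_i]$ of the decomposition. A $(1\pm\epsilon)$-cut sparsifier $\tilde G$ of $G$ does not provide cut guarantees for induced subgraphs: a cut $(S, W_i\setminus S)$ inside a cluster $W_i$ can be badly distorted in $\tilde G[W_i]$ even though every global cut of $G$ is preserved. This is precisely why the paper replaces ordinary cut sparsifiers by \emph{power cut sparsifiers} (\Cref{lemma:power-cut-sparsifier-contraction}, following \cite{FKM23}), which have additive error $\epsilon\cdot\vol{S}$ and, crucially, simultaneously sparsify $G\{C\}$ for every cluster $C$ in any partition, and by a linear-sketch construction that remains valid after contraction (the $\pm 1$ signing trick). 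Similarly, the shaving thresholds in \Cref{algorithm:friendly-cut-sparsifier} cannot be checked exactly from a sparsifier, which is why the algorithm is equipped with a slack parameter $\beta$ and is reproved to tolerate that slack (\Cref{lemma:friendly-cut-sparsifier-streaming}). Your ``frozen oracle'' built from a single multiplicative cut sparsifier would silently fail to certify expansion of clusters, so the expander decomposition recursion would not go through, and the resulting output would not be a friendly cut sparsifier. Mentioning $\ell_0$-samplers ``and a spanner'' gestures at the right toolkit but without the power-cut-sparsifier guarantee the argument has a hole.

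A second, smaller gap: you leave the ``blueprint'' unspecified and budget it at $\tO(n^{3/2})$. In the paper the blueprint is exactly the \emph{super-vertex partition} output by \Cref{lemma:friendly-cut-sparsifier-streaming}, which fits in $\tO(n)$ space; pass one therefore uses only $\tO(n)$ space, and the $\tO(n^{3/2})$ cost appears only in pass two, when the inter-cluster edges (with their original endpoints) and the vertex degrees are recovered by direct counting against the known partition. Pinning this down matters, because the entire two-pass structure hinges on the fact that the only adaptive, recursive computation (the expander decomposition recursion) can be driven offline from sketches collected in pass one, and its output is a small object (the partition). You should replace the vague ``blueprint'' with this partition, replace the $(1\pm\epsilon)$-cut sparsifier with power cut sparsifiers for contractions, and invoke the slack-tolerant version of the friendly-cut-sparsifier construction; your proposal as written is a plausible outline but does not close either of these gaps.
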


A summary of our results is presented in \Cref{tab:complexity}.
\begin{table}[ht]
    \centering
    \caption{Comparison of our algorithms for all-pairs minimum-cut with existing algorithms for the same problem and for single pair minimum $s,t$-cut, across different computational models.}
    \label{tab:complexity}
    \renewcommand{\arraystretch}{1.3}
    \ifstoc
    \footnotesize
    \setlength{\tabcolsep}{2pt}
    \begin{tabular}{|M{1.5cm}|M{1.2cm}|M{1.6cm}|M{2.8cm}|}
    \else
    \begin{tabular}{|M{2.9cm}|M{2.4cm}|M{3.4cm}|M{6cm}|}
    \fi
        \hline
        \textbf{Model} &
        \textbf{Measure} &
        \textbf{Our Result} &
        \textbf{Known Results} \\
        \hline
        cut query & queries & $\tilde{O}(n^{3/2})$ Thm.~\ref{theorem:main-result-cut-query}
        & \makecell{$\tilde{O}(n^{8/5})$, single pair \cite{JNS25}
        \\ $\tilde{O}(n^{7/4})$, all pairs \cite{KK25b}}
        \\
        \hline
        fully dynamic & worst-case update time & $n^{3/2+o(1)}$ Thm.~\ref{theorem:main-result-fully-dynamic}
        & $m^{1+o(1)}$, single pair \\
        \hline
        dynamic streaming & storage, two~passes & $\tilde{O}(n^{3/2})$ Thm.~\ref{theorem:main-result-streaming}
        & \makecell{$\tilde{O}(n^{5/3})$, single pair \cite{RSW18} \\ $\tilde{O}(m)$, all pairs} \\
        \hline
        two-party comm. & bits & $\tilde{O}(n^{3/2})$ Cor.~\ref{corollary:two-party-comm-result}
        & \makecell{$\tilde{O}(n^{3/2})$, single pair \cite{GJ25}\footnotemark \\ $\tilde{O}(m)$, all pairs} \\
        \hline
    \end{tabular}
\end{table}
\footnotetext{The results of \cite{GJ25} also apply to weighted graphs.}

\subsection{Related Work}

\paragraph*{All-Pairs Minimum Cut}
The all-pairs minimum cut problem has received significant attention in recent years, culminating in algorithms that make $O(\polylog n)$ calls to a minimum $s,t$-cut procedure and require in addition $\tilde{O}(m)$ time \cite{HKP07,BHKP07,AKT20,JPS21,AKT21a,AKT21b,AKT22b,AKLPST22,ALPS23,AKL+25,GKYY25}.
Over roughly the same time span, there has been significant progress on faster algorithms (in running time) for the minimum $s,t$-cut problem itself,
and currently the best algorithm runs in $m^{1+o(1)}$ time \cite{CKLPGS25}.  

\paragraph{Cut-Query Algorithms}
The study of cut-query algorithms was initiated in \cite{RSW18}, motivated by the relation between cut queries and submodular minimization.
For the global minimum cut problem, known algorithms in unweighted graphs use $O(n)$ randomized cut queries \cite{RSW18,AEGLMN22}, matching the lower bound \cite{GPRW20, LLSZ21}, and $\tO(n)$ randomized cut queries suffice in weighted graphs \cite{MN20};
in addition, known deterministic algorithms use $\tO(n^{5/3})$ queries in unweighted graphs \cite{ASW25}.
For minimum $s,t$-cut in unweighted graphs, known algorithms use $\tO(n^{8/5})$ cut queries \cite{RSW18,ASW25,JNS25},
leaving a gap as the only lower bound is $\Omega(n)$ queries.
Recently, it was shown that APMC in unweighted graphs can be solved using $\tO(n^{7/4})$ cut queries \cite{KK25b}.
Other related work include an algorithm for approximating the maximum cut in weighted graphs
using $\tO(n)$ cut queries \cite{PRW24},
and studying parallel efficiency by measuring the number of query rounds 
for both global and $s,t$ minimum cuts in \cite{KK25a}.

\paragraph{Cuts in Dynamic Streams}
In the streaming model, exact computation of a global (or $s,t$) minimum cut requires $\Omega(n^2)$ space in one pass \cite{Zelke11}.%
\footnote{These lower bounds do not apply in random order streams where better results are possible, e.g. \cite{DGL+25}.}
Consequently, research has primarily focused on approximate solutions;
one can obtain $(1\pm\epsilon)$-approximation using $\tilde{O}(\epsilon^{-2}n)$ space by employing cut sparsifiers, see e.g.\ \cite{AGM12}.
For global minimum cut in unweighted graphs this was later improved to $\tO(\epsilon^{-1}n)$ \cite{DGL+25}.
The study of exact solutions using more passes was motivated by an observation in \cite{ACK19} that the approach of \cite{RSW18}
for global and $s,t$ minimum cut in unweighted graphs
can be implemented in two passes
using $\tO(n)$ space and $\tO(n^{5/3})$ space, respectively;
see also \cite{AD21}.
For weighted graphs,
an $O(\log n)$-pass algorithm for global minimum cut using $\tO(n)$ space was presented in \cite{MN20},
and a smooth tradeoff of $\tO(rn^{1+1/r})$ space in $O(r)$ passes was later shown in \cite{KK25a}.
Finally, for finding a minimum $s,t$-cut in weighted graphs,
there exists a lower bound of $\Omega(n^2/p^5)$ space for $p$-pass streaming algorithms \cite{ACK19}.

\paragraph{Cuts in Fully-Dynamic Graphs}
The study of global minimum cut in the fully-dynamic setting has a rich history: it was originally studied for small values of edge connectivity,
namely $\lambda_G \eqdef \min_{s,t} \lambda_G(s,t) \le \log^{o(1)}n$,
see e.g.\ \cite{EGIN97,HLT01,KKM13,CGLNPS20,JST24}.
For general $\lambda_G$, there exists a deterministic algorithm that
maintains $\lambda_G$
in worst-case update time $\tO(\min\left(\lambda_{\max}^{5.5}\sqrt{n}\right))$,
where $\lambda_{\max}$ is the maximum edge connectivity of $G$ throughout the graph updates \cite{Thorup07,GHNSTW23,VC25}.
Allowing randomization, there exist algorithms that maintain $\lambda_G$ in worst-case update time $\tO(\min\set{n,(n/\lambda_G^2)^2})$ \cite{GHNSTW23,HKMR25,KK25c}.
Finally, one can achieve a $(1+o(1))$-approximation with worst-case update time $n^{o(1)}$ \cite{TK00,Thorup07,EHL25}.
The study of minimum $s,t$-cut in the fully dynamic setting has received less attention,
but features similar tradeoffs as global minimum cut: 
Existing algorithms either return an approximate solution \cite{ADKKP16,GRST21,BBG+22}
or are limited to instances where $\lambda_G(s,t)\le\log^{o(1)}n$ \cite{GI91,Frederickson97,HK99,JS21}.

\subsection{Future Work}

\paragraph*{Lower Bounds for Submodular Function Minimization}
One of the main motivations for studying APMC in the cut-query model
is its connection to submodular function minimization (SFM).
Graph-cut functions are natural examples of submodular functions and have been extensively studied in the context of SFM.
In particular, the best lower bounds known for SFM are based on graph-cut functions \cite{GPRW20, LLSZ21} (there are slightly stronger lower bounds for deterministic algorithms that are based on different functions \cite{CGJS23}).
The following lemma shows that maximum-flow algorithms in the cut-query model
have an inherent lower bound of $\Omega(n^{3/2})$ queries. 
It is based on a graph constructed in \cite{KL98},
where the maximum $s,t$-flow contains $\Omega(n^{3/2})$ edges,
and has been observed previously \cite{ASW25,JNS25} without a formal proof,
\ifstoc
and we provide a proof in the full version of the paper.
\else
hence we prove it for completeness in \Cref{sec:max-flow-cut-query-lower-bound}.
\fi

\begin{lemma}
\label{lemma:maximum-s-t-flow-lower-bound}
Every randomized cut-query algorithm that recovers all edges of a maximum $s,t$-flow in an unweighted graph $G$ and succeeds with high  constant probability requires $\Omega(n^{3/2}/\log n)$ cut queries.
\end{lemma}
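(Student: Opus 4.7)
The plan is to combine the graph family of \cite{KL98} with an information-theoretic argument against cut-query protocols, applied via Yao's minimax principle.

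First, I would recall from \cite{KL98} that there is an unweighted graph $G^\star$ on $O(n)$ vertices with designated terminals $s,t$ such that every integral maximum $s,t$-flow has support of $\Omega(n^{3/2})$ edges; these forced flow edges arise because the construction packs many edge-disjoint long paths and augments them with a carefully balanced cut structure that tightly binds every maximum flow. I would then build a hard distribution $\mathcal{D}$ over $n$-vertex graphs by attaching $\Omega(n^{3/2})$ independent Bernoulli switches $b_i \in \{0,1\}$ at positions along the forced flow support; each switch toggles between two local configurations that contribute identically to every cut (hence preserve all min-cut values, and in particular $\lambda_G(s,t)$) but route the unique integral max-flow edge in one of two distinct locations. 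As a consequence, the support of any valid max flow on a sample $G \sim \mathcal{D}$ encodes the bit vector $b$ in a one-to-one fashion.

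Second, I would apply Yao's principle to reduce to deterministic algorithms: a randomized cut-query algorithm succeeding with high constant probability on every input yields a deterministic algorithm succeeding with high constant probability on $\mathcal{D}$. Let $q$ denote its worst-case query count. The transcript is a sequence of $q$ integers in $\{0,\ldots,\binom{n}{2}\}$, each carrying $O(\log n)$ bits. The deterministic output, being a function of the transcript, must determine $b$ with constant probability, so Fano's inequality gives $I(\text{transcript};b) = \Omega(n^{3/2})$. Combined with $I(\text{transcript};b) \le H(\text{transcript}) \le O(q \log n)$, this yields $q = \Omega(n^{3/2}/\log n)$, as claimed.

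The main obstacle is designing the gadget so that its two settings are indistinguishable at the level of cut values but distinguishable at the level of max-flow edge sets. Because the cut structure of the Karger--Levine graph is tightly balanced, the gadget must swap between locally symmetric configurations — for instance, two parallel two-edge sub-paths through a fresh auxiliary vertex labelled in one of two ways — which contribute equally to every cut while only one of the two sub-paths can carry unit flow under any integral max flow. Verifying this symmetry against the global cut structure of $G^\star$, and absorbing the auxiliary vertices into the $n$-vertex budget, is the main technical task; these details can likely be extracted from the informal observations in \cite{ASW25, JNS25}, which state the same bound without proof.
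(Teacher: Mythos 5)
Your high-level plan — a hard distribution, Yao's minimax principle, and Fano's inequality against the transcript entropy — is exactly the strategy the paper takes. But the specific hard distribution you propose has a fatal flaw, and it is not the one the paper actually uses.

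You ask for gadgets whose two settings ``contribute identically to every cut.'' If such a gadget existed, then the transcript of any cut-query algorithm would be statistically independent of the switch bits $b$, giving $I(\text{transcript};b)=0$. Fano would then say no number of cut queries can recover the max-flow support — but this is false: $\binom{n}{2}$ queries trivially recover the entire edge set and hence any max flow. So the gadget you require is impossible. Your concluding step (``$I(\text{transcript};b)\le H(\text{transcript})\le O(q\log n)$'') only buys anything if the transcript \emph{can} depend on $b$, i.e.\ if the configurations \emph{do} affect some cut values; the ``invisible to all cuts'' requirement you impose actively undermines the argument. Beyond this, you openly defer the gadget construction (``the main technical task''), so the core of the proof is left undone.

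The paper avoids all of this by making the hard distribution entirely explicit: a layered graph where $s$ and $t$ are each connected to $\Theta(n)$ vertices, those are fully connected to an end layer, and $\Theta(n/\sqrt{n})$ intermediate layers of size $\Theta(\sqrt{n})$ are each joined to the next by a \emph{uniformly random} set of $\Theta(n)$ edges. The inter-layer edges form the global min cut, so every one of them must carry unit flow in any max flow; hence recovering a max flow forces recovering these random edge sets, which have total entropy $\Omega(n^{3/2})$. Crucially, these random edges are \emph{not} hidden from cut queries — they genuinely change cut values — and the argument only uses that each cut query is a number in $[0,n^2]$ and therefore leaks $O(\log n)$ bits. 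The same Yao + Fano accounting you sketch then gives $q=\Omega(n^{3/2}/\log n)$ directly, with no gadget, no symmetry verification, and no appeal to unproven claims from \cite{ASW25,JNS25}.
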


Previous algorithms for minimum $s,t$-cut in the cut-query model \cite{RSW18,ASW25,JNS25}
are affected by this lower bound, e.g., they compute a maximum $s,t$-flow.
In contrast, our algorithm for all-pairs minimum cut circumvents this lower bound
by avoiding explicit recovery of any maximum $s,t$-flow.
However, it still does not manage to break the $O(n^{3/2})$ barrier,
which remains open.
We conjecture that this is not possible
and that every algorithm for minimum $s,t$-cut in the cut-query model requires $\tilde{\Omega}(n^{3/2})$ queries.
Finally, \Cref{theorem:submodular-gomory-hu}
(which constructs a Gomory-Hu tree for a symmetric submodular function
using $\polylog(n)$ calls to an SFM procedure on ground sets of size $n$)
yields an exciting possible avenue for proving lower bounds for SFM.

\paragraph*{Extension to Weighted Graphs}
The structural results that form the basis of our APMC sparsifier
do \emph{not} seem to extend to weighted graphs.
In the case of streaming algorithms,
there is an $\Omega(n^2/p^5)$ space lower bound for $p$-pass algorithms \cite{ACK19},
which rules out most practical algorithms.
However, in the cut-query and fully-dynamic models,
no such lower bounds are known,
and in fact we are still lacking non-trivial algorithms
for (single pair) minimum $s,t$-cut in weighted graphs;
thus, either algorithms or lower bounds would be very interesting.

\section{Technical Overview}
\label{sec:technical-overview}
Our work contains two main structural insights.
The first one is that the minimum $s,t$-cuts of a graph $G$ (for all pairs $s,t$) are encoded by the so-called friendly cuts of $G$ and the vertex degrees in $G$.
The concept of friendly cuts  
emerges naturally in social learning, statistical physics and set theory
under various names \cite{Morris00,GK00,BTV06,BL16,GMT20,FKN+22}.
Informally, a cut $C\subseteq V$ is called friendly
if every vertex $v\in V$ has at least as many edges to its side of the cut
as to the other side (see \cref{sec:structural-proof} for the formal definition). 
Throughout, let $E(A,B)$ denote the set of edges in $G$
with one endpoint in $A\subseteq V$ and the other in $B\subseteq V$;
when $A=\{v\}$, we simply write $E(v,B)$ instead of $E(\{v\},B)$.
In this notation, $C$ is friendly
if $|E(v,C)| \ge |E(v,V\setminus C)|$ for all $v\in C$,
and the reverse inequality for all $v\in V\setminus C$.

Recently, it was shown that one can take any unweighted graph $G$
and contract some of its vertices to obtain a graph $\friendly$,
such that the value of every friendly cut in $G$ is preserved,
i.e., for every friendly cut $C\subseteq V$
we have $\mintcut_G(C)=\mintcut_{\friendly}(C)$,
where $C$ is interpreted also as a cut in $\friendly$ in the natural way.
We call such a graph $\friendly$ a \emph{friendly cut sparsifier}
(see \Cref{definition:friendly-cut-sparsifier}).
A \emph{contraction} of $G=(V,E)$ is a graph that is obtained from $G$
by partitioning $V$ into subsets $u_1,\ldots,u_k\subseteq V$
and contracting each subset $u_i$ to form a \emph{super-vertex},
keeping only the edges that connect different subsets,
which may now become parallel edges.
To illustrate this, consider a dumbbell graph,
which consists of two cliques connected by an edge;
contracting each clique into a super-vertex yields
a graph consisting of a single edge,
which is clearly a friendly cut sparsifier.

Throughout the paper we assume that the graph contains no vertices of degree $1$ to simplify the proofs.
This limitation can be easily removed in all our algorithmic results by handling such vertices separately.%
\footnote{We can also incorporate vertices of degree $1$ into our cut encoding and APMC sparsifier
  (and thus handle them directly and not separately in our algorithmic results),
  however this requires additional technical details which we omit for clarity.
}
Our first structural result is the following.

\begin{theorem}
\label{theorem:encoding-structural-result}
Let $G=(V,E)$ be an unweighted graph
and let $\A_{s,t}\subseteq 2^V$ denote the set of minimum $s,t$-cuts in $G$ for $s,t\in V$.
Given a friendly cut sparsifier of $G$ and the vertex degrees of $G$,
one can recover for each pair $s,t\in V$ at least one cut $S\in \A_{s,t}$ along with its value.
\end{theorem}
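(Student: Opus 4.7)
The plan is to reduce \Cref{theorem:encoding-structural-result} to the structural identity
\begin{equation*}
  \lambda_G(s,t) \;=\; \min\bigl(\deg_G(s),\ \deg_G(t),\ \mu_G(s,t)\bigr),
\end{equation*}
where $\mu_G(s,t)$ denotes the minimum value of a friendly $s,t$-cut in $G$. Once this identity is in place, the recovery algorithm is direct: for each pair $s,t$, read $\deg_G(s)$ and $\deg_G(t)$ off the given degree sequence, and extract $\mu_G(s,t)$ together with a witnessing cut by computing a minimum $s,t$-cut in the friendly cut sparsifier $\friendly$ (which, by definition, preserves the values of friendly cuts of $G$). Output the smallest of the three candidates, together with the corresponding cut in $V$: either the singleton $\{s\}$ or $V\setminus\{t\}$, or the pullback to $V$ of the witness recovered from $\friendly$.

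The inequality $\lambda_G(s,t) \le \min(\deg_G(s),\deg_G(t),\mu_G(s,t))$ is immediate, because each of the three candidates is realized by a bona fide $s,t$-cut in $G$. For the reverse inequality I would fix $s,t$ and pick a minimum $s,t$-cut $(S,\bar S)$ that is extremal: among all minimum $s,t$-cuts I take one with $|S|$ smallest. A three-way case split then finishes the argument: if $S=\{s\}$ the value equals $\deg_G(s)$; if $\bar S=\{t\}$ the value equals $\deg_G(t)$; otherwise $|S|,|\bar S|\ge 2$ and the plan is to show that $(S,\bar S)$ is friendly, so its value is at least $\mu_G(s,t)$.

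The routine half of the case $|S|,|\bar S|\ge 2$ handles non-endpoints: if some $v\in S\setminus\{s\}$ satisfies $|E(v,S)|\le |E(v,\bar S)|$, then flipping $v$ to $\bar S$ produces an $s,t$-cut of equal or smaller value, contradicting either minimality of the cut value or the extremal choice of $|S|$. Hence every $v\in V\setminus\{s,t\}$ is strictly happy in $(S,\bar S)$. The main obstacle is the endpoints $s$ and $t$, which are forbidden to move across and therefore not controlled by the extremal argument. The key sub-claim to establish is: if $s$ is not strictly happy in the extremal $(S,\bar S)$, then the trivial cut $\{s\}$ is itself a minimum $s,t$-cut, which collapses us back to the trivial case (and symmetrically for $t$). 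My plan to prove this sub-claim is to combine (i) the aggregate inequality obtained by summing strict happiness over $v\in X\eqdef S\setminus\{s\}$, which relates $|E(X,X)|$, $|E(X,s)|$, and $|E(X,\bar S)|$, with (ii) the assumed weak unhappiness $|E(s,X)|\le|E(s,\bar S)|$, plugging these into a max-flow/min-cut computation between $s$ and $\bar S$ that saturates the edges from $s$ into $X$ and thereby forces $\mintcut_G(S)=\deg_G(s)$.

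I expect this endpoint analysis to be the technical crux of the proof, since it is the only place where the interaction between the endpoint and the rest of its side must be quantitatively exploited; everything else is either a direct consequence of the definition of friendly cut sparsifier or the standard extremal-cut manipulation. Once the structural identity is established, both the value and an actual witness cut for every pair $s,t$ are produced as described above, yielding \Cref{theorem:encoding-structural-result}.
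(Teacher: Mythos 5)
The core identity you propose, $\lambda_G(s,t) = \min\bigl(\deg_G(s),\deg_G(t),\mu_G(s,t)\bigr)$ where $\mu_G(s,t)$ is the minimum value of a \emph{friendly} $s,t$-cut, is false, and so is the sub-claim you plan to prove (that if $s$ is not strictly happy in the extremal minimum $s,t$-cut then $\{s\}$ is itself a minimum $s,t$-cut). Here is a counterexample schema: let $s$ have degree $D$, connected to a set $X$ of $k$ vertices arranged in a cycle and to a large set $Y$ of $D-k-1$ further vertices, plus a vertex $z$; put $c < k$ edges from $X$ to $z$, and give each $y\in Y$ one edge to $s$ but two edges landing in $V\setminus(\{s\}\cup X)$, so that moving any $y$ into $s$'s side strictly increases the cut. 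Then $S=\{s\}\cup X$ is the unique minimal minimum $s,t$-cut with value $(D-k)+c < D = \deg(s)$, and the friendliness ratio of $s$ in $S$ is $k/D$, which can be made smaller than any fixed $\alpha$. Since every vertex added to $s$'s side to raise $s$'s friendliness ratio strictly increases the cut value, any $\alpha$-friendly $s,t$-cut has value strictly larger than $\lambda_G(s,t)$, and your three candidates all overshoot. Concretely with $D=110$, $k=10$, $c=5$ one gets $\lambda_G(s,t)=105$ while $\deg(s)=110$, $\deg(t)$ arbitrarily large, and every $\tfrac16$-friendly $s,t$-cut has value at least $114$.

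The idea you are missing is that when the minimum $s,t$-cut $S=\{s\}\cup X$ is unfriendly only because of $s$, one should \emph{not} look for a friendly $s,t$-cut: the relevant friendly set is $X = S\setminus\{s\}$, which is friendly but is \emph{not} an $s,t$-cut (it contains neither $s$ nor $t$). This is exactly Lemma~\ref{lemma:structure-min-st-cut}: $X$ is $\tfrac16$-friendly with at most $2\deg(s)\le 2n$ edges, hence its cut is preserved in $\friendly$. The paper then recovers the answer via the decomposition $\mintcut_G(X\cup\{s\}) = \mintcut_G(X) - 2|E(X,s)| + \deg(s)$; each term on the right is computable from $\friendly$ (which stores edges with their original endpoints) and the degree sequence, and the algorithm minimizes $f_s(X)=\mintcut_\friendly(X)-2w_\fr(X,s)+\deg(s)$ over all $X\subseteq V\setminus\{s,t\}$. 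Your plan, which only inspects $\deg(s)$, $\deg(t)$, and friendly $s,t$-cuts, has no access to the quantity $\mintcut_G(X)-2|E(X,s)|$ and therefore cannot produce the correct value in the example above. To salvage the approach you would have to replace $\mu_G(s,t)$ by something like $\min_{X\subseteq V\setminus\{s,t\}} f_s(X)$ and $\min_{X} f_t(X)$, which is precisely the paper's proof.
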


\begin{remark}
\Cref{theorem:encoding-structural-result} only guarantees recovering at least one minimum $s,t$-cut for each pair $s,t$.
However, if no vertex of degree at most $2$ in $G$ is contracted during the construction of the friendly cut sparsifier,
then one can actually recover the entire set $\A_{s,t}$ for each pair $s,t$.
We omit the details for brevity.
\end{remark}

Friendly cut sparsifiers were previously used to design
fast algorithms for APMC in \cite{AKT22b,KK25b}.
Their approach is to divide the optimal cuts into friendly and unfriendly;
friendly cuts are found efficiently by applying known algorithms on the sparsifier,
whereas unfriendly cuts admit fast algorithms that find them directly in the original graph $G$.
In contrast, we show that the friendly cut sparsifier already encodes all the minimum $s,t$-cuts without further access to $G$ except for the vertex degrees.
Unfortunately, recovering the minimum $s,t$-cuts from this encoding naively requires iterating over all the cuts of $\friendly$, which might take exponential time.
While this may be acceptable in the streaming and cut-query settings,
where one is mostly concerned with storage and query complexities,
this approach is too inefficient for the fully-dynamic setting,
and here comes our second structural contribution,
which is to construct a small APMC sparsifier based on the encoding.
The advantage of the APMC sparsifier is that it is a graph,
and thus one can apply to it known Gomory-Hu tree construction algorithms,
and thereby recover all minimum $s,t$-cuts in the original graph efficiently.

Our APMC sparsifier construction is based on a new \emph{star transform} operation,
that rearranges the vertices forming a super-vertex 
of the friendly cut sparsifier 
into a star graph, as follows.
Given a super-vertex $u$ of $\friendly$,
split $u$ back into its constituent vertices $\set{v}_{v\in u}$,
and reconnect every edge incident to $u$ back to its original endpoint $v\in u$.
In addition, add a new \emph{proxy vertex} $u'$ and connect it to every vertex $v\in u$ with edge weight $w(v,u') \eqdef |E(v,u\setminus v)|$,
which is the degree of $v$ in the induced subgraph $G[u]$. %
See \Cref{fig:splitting-process} for illustration.
We prove that performing this star transform to \emph{all super-vertices} in $\friendly$
yields a graph $\allpairs$ that is an APMC sparsifier.
We stress that all the information needed for the star transform
is encoded in $\friendly$ and the vertex degrees of $G$,
simply because $|E(v,u\setminus v)| = \deg(v)-|E(v,V\setminus u)|$
and the subtracted quantity counts edges that appear in $\friendly$,
and can be recovered since we store the edges of $\friendly$ along with their original endpoints in $G$.

Our star transform bears superficial resemblance to the star-mesh transform
from electrical network theory \cite{Bedrosian61,VO73},
which replaces a star with a weighted clique on the non-center vertices (of the star),
with edges weights that preserves the resistance between every pair of non-center vertices.
The special case of 3 non-center vertices is known as the Wye-Delta transform,
and has been used in cut sparsification of planar graphs \cite{KR14,GHP20,KK21}.
However, there are some crucial differences. 
First, our star transform increases the number of vertices in the graph
(recall it adds a proxy vertex), 
and thus it is more similar to the \emph{inverse} of the star-mesh transform,
which does not exist in general.
Second, the star-mesh transform preserves cuts only in planar graphs,
while our star transform is applicable to general graphs.

\begin{theorem}
\label{theorem:apmc-sparsifier-construction}
Let $G$ be an unweighted graph on $n$ vertices.
Given access to a friendly cut sparsifier $\friendly=(V_{\friendly}, E_{\friendly})$ of $G$
and to the vertex degrees of $G$,
one can construct an APMC sparsifier $\allpairs=(U,E_{\ap})$ of $G$
with $|E_{\ap}| \le |E_{\friendly}|+n$ in time $O(|E_{\friendly}|+n)$.
\end{theorem}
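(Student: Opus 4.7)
The plan is to verify three claims about the star-transform construction: the edge count, the running time, and the APMC sparsifier property. The first two are immediate from the definition of the construction: the star transform at super-vertex $u$ re-routes each edge of $\friendly$ incident to $u$ to its stored endpoint in $u$ (contributing $|E_{\friendly}|$ edges in total), and introduces exactly $|u|$ new star edges on $u \cup \{u'\}$ whose weights $|E(v, u \setminus v)| = \deg_G(v) - |E(v, V \setminus u)|$ can each be computed in $O(1)$ from the given data. Summing $|u|$ across super-vertices yields $n$, so $|E_{\ap}| = |E_{\friendly}| + n$ and the total construction time is $O(|E_{\friendly}| + n)$.

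For the sparsifier property, I would split the argument into two directions. For soundness, I would take any $s,t$-cut $S_H \subseteq U$ in $\allpairs$, set $S = S_H \cap V$, and show $\mintcut_{\allpairs}(S_H) \ge \mintcut_G(S)$ by grouping crossing edges per super-vertex. The re-routed inter-super-vertex edges contribute identically to both sides, because $S_H \cap V = S$ and every such edge keeps its original $G$-endpoints. For the star at super-vertex $u$, writing $S_u = S_H \cap u$ and $\bar{S}_u = u \setminus S_H$, and assuming w.l.o.g.\ $u' \in S_H$, the crossing star weight is
\begin{equation*}
\sum_{v \in \bar{S}_u} |E(v, u \setminus v)| \;=\; 2\,|E(\bar{S}_u, \bar{S}_u)| + |E(S_u, \bar{S}_u)| \;\ge\; |E(S_u, \bar{S}_u)|,
\end{equation*}
the right-hand side being exactly the contribution of $G[u]$ to $\mintcut_G(S)$. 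Summing over super-vertices gives $\mintcut_{\allpairs}(S_H) \ge \mintcut_G(S)$, and hence $\lambda_{\allpairs}(s,t) \ge \lambda_G(s,t)$.

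For completeness, I would invoke \Cref{theorem:encoding-structural-result} (and its proof) to obtain, for each pair $s,t \in V$, a canonical minimum $s,t$-cut $S^*$ of $G$ whose restriction to every super-vertex $u$ has the \emph{leaf-like} property that at least one of the two sides $(S^* \cap u,\, u \setminus S^*)$ is an independent set in $G[u]$. I would then lift $S^*$ to $S_H \subseteq U$ by setting $S_H \cap V = S^*$ and placing each proxy $u'$ on the side whose complement is independent in $G[u]$; in the per-super-vertex computation above this choice kills the excess term $2|E(\bar{S}_u, \bar{S}_u)|$ (or $2|E(S_u, S_u)|$), so the star contribution matches the $G[u]$ contribution exactly. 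Combined with the soundness direction, $S_H$ is a minimum $s,t$-cut in $\allpairs$ with $\mintcut_{\allpairs}(S_H) = \mintcut_G(S_H \cap V) = \lambda_G(s,t)$, which is the APMC property. The main obstacle is verifying the leaf-like structural claim; I expect this to fall out of the recovery procedure underlying \Cref{theorem:encoding-structural-result} together with a submodularity-based exchange argument showing that any minimum $s,t$-cut that splits a super-vertex with internal edges on both sides can be deformed, without increasing its value, into one with an independent side, so that a leaf-like representative always exists in $\A_{s,t}$.
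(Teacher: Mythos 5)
Your proposal follows essentially the same structure as the paper's proof of \Cref{claim:star-transform-yields-apmc-sparsifier}. The soundness direction is the same per-super-vertex computation: for a lift $S_H$ of $S=S_H\cap V$ with $u'\in S_H$, the crossing star weight at $u$ is $\sum_{v\in u\setminus S_H}|E(v,u\setminus v)| = 2|E(u\setminus S_H,\,u\setminus S_H)| + |E(S_H\cap u,\,u\setminus S_H)| \ge |E(S_H\cap u,\,u\setminus S_H)|$, which is exactly the inequality the paper uses (symmetrically when $u'\notin S_H$). For completeness, the paper lifts a canonical minimum cut coming from \Cref{lemma:structure-min-st-cut} and places $u'$ opposite $s$; your ``leaf-like'' criterion (one side of $(S^*\cap u,\,u\setminus S^*)$ is independent in $G[u]$) is a correct abstraction of why that lift is exact.

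The one thing to tighten is your hedge about establishing the leaf-like property. You suspect it needs, on top of the recovery procedure behind \Cref{theorem:encoding-structural-result}, a submodularity-based exchange argument that deforms an arbitrary minimum $s,t$-cut so that each super-vertex gains an independent side. That extra argument is neither needed nor clearly valid as a standalone claim; \Cref{lemma:structure-min-st-cut} already gives something strictly stronger. A minimal $1/6$-unfriendly minimum $s,t$-cut $S$ with low-friendliness vertex $s\in S$ has the form $S=X\cup\{s\}$ where $X$ is a $1/6$-friendly cut with at most $2n$ edges, so $X$ splits no super-vertex of $\friendly$. Hence $S$ splits exactly one super-vertex, namely the one containing $s$, and the split side is the singleton $\{s\}$; every other super-vertex lies wholly on one side. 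Each super-vertex therefore has an empty or singleton side, which is trivially independent, and the leaf-like property is immediate without any exchange argument. (As a minor point, the edge bound is $|E_{\ap}| \le |E_{\friendly}|+n$ rather than equality, since singleton nodes of $\friendly$ receive no proxy.)
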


\begin{figure}[h]
    \centering
    \includegraphics[width=0.8\columnwidth]{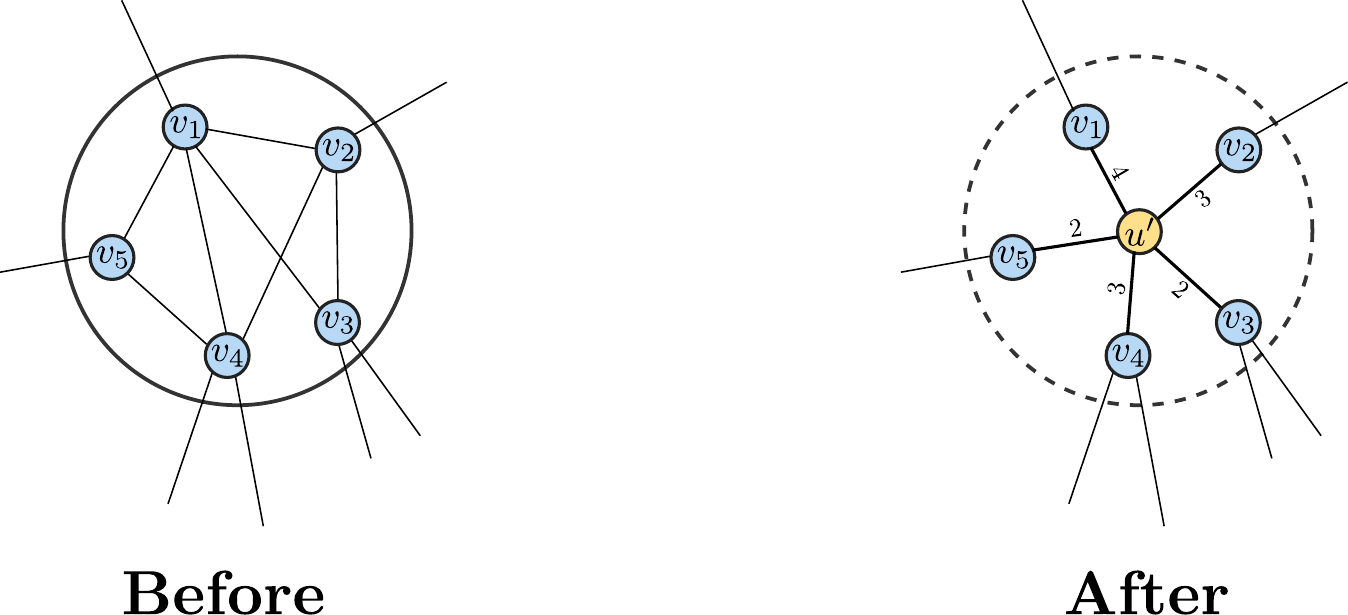}
    \caption{
        An illustration of the star-transform operation. 
        The super-vertex $u=\set{v_1,\ldots,v_5}$ is split into its constituent vertices, adding a proxy vertex $u'$ that is connected to every vertex $v_i\in u$ with edge weight equal to its degree in the induced graph $G[u]$.
        For example, $v_5$ has two neighbors in $G[u]$,
        hence it is connected to $u'$ with an edge of weight $2$.
        }
    \label{fig:splitting-process}
\end{figure}

One main advantage of our APMC sparsifier, in comparison to the Gomory-Hu tree, is that its construction does not require computing any \emph{exact} minimum $s,t$-cut in $G$.
Instead, it relies only on friendly cut sparsifiers,
which are based on expander decomposition,
and thus can be computed using \emph{approximate} minimum $s,t$-cuts.
This method suffices for the cut-query and streaming models,
where we can leverage existing algorithms for expander decomposition \cite{FKM23,CKMM24,KK25b}.
For the fully-dynamic model, we provide a new guarantee,
namely, that existing constructions of friendly cut sparsifiers are robust to edge updates, 
i.e., given a sequence of $\tO(\sqrt{n})$ edge updates to $G$ one can apply the same updates to $\friendly$
and obtain a friendly cut sparsifier of the updated graph.
This guarantee allows us to build a new friendly cut sparsifier from scratch every $\tO(\sqrt{n})$ updates,
and thus maintain an APMC sparsifier using $\tO(n^{3/2})$ worst-case update time.

The rest of the paper is organized as follows.
In \Cref{sec:structural-proof} we prove \Cref{theorem:encoding-structural-result,theorem:apmc-sparsifier-construction}.
We then use these structural results  in \Cref{sec:applications}
to design algorithms for the cut-query, fully-dynamic, and streaming models, proving \Cref{theorem:main-result-cut-query,theorem:main-result-fully-dynamic,theorem:main-result-streaming}.
\ifstoc
In addition, we provide in \Cref{sec:clusters}
more details for the construction of friendly cut sparsifiers in the fully-dynamic model.
The details of the construction for the streaming model are deferred to the full version of the paper.
\else
Finally, we provide in \Cref{sec:clusters,sec:dynamic-streams}
more details for the construction of friendly cut sparsifiers in the streaming and fully-dynamic models.
\fi

\section{Encoding All Minimum $s,t$-Cuts}
\label{sec:structural-proof}
In these section we prove \Cref{theorem:encoding-structural-result,theorem:apmc-sparsifier-construction}, showing how to encode all the minimum $s,t$-cuts of a graph using a friendly cut sparsifier and the degree of every vertex, and how to construct an all-pairs minimum-cut (APMC) sparsifier using this encoding.

We begin by formally defining the notion of friendly cuts and friendly cut sparsifiers for a graph $G=(V,E)$. 
Given a cut $S\subseteq V$, let $\cross{v}{S}$ be
the number of edges incident to $v$ that cross the cut $E(S,V\setminus S)$, 
and let us omit the subscript when clear from context.
Define the \emph{friendliness ratio} of a vertex $v\in V$
(with respect to $S$) as $1-\cross{v}{}/\deg(v)$,
where $\deg(v)$ is the degree of $v$ in $G$.
The cut $S$ is called \emph{$\alpha$-friendly}
if every vertex $v\in V$ has friendliness ratio $1-\cross{v}{S}/\deg(v)\ge \alpha$,
and otherwise it is called \emph{$\alpha$-unfriendly}. 

Throughout, a \emph{contraction} of $G=(V,E)$ is a graph $H$
obtained from $G$ by contracting some subsets of vertices.
A \emph{super-vertex} is a vertex in $H$ formed by contracting multiple vertices in $G$. 
Observe that contracting two vertices $u,v\in V$
is equivalent to adding an edge of infinite capacity between them. 
This view is useful because now $H$ has the same vertex set as $G$,
and furthermore, cuts in $H$ are at least as large as those in $G$
(when comparing the same $S\subset V$).
All friendly cut sparsifiers throughout the paper are contraction-based.

\begin{definition}%
\label{definition:friendly-cut-sparsifier}
An \emph{$(\alpha,w)$-friendly cut sparsifier} of a graph $G$ is a contraction $H$ of $G$
such that for  every $\alpha$-friendly cut $S\subseteq V$ of $G$ with at most $w$ edges we have $\mintcut_H(S)=\mintcut_G(S)$.
\end{definition}
\begin{remark}
The statements of \Cref{theorem:encoding-structural-result,theorem:apmc-sparsifier-construction} do not specify the parameters of the friendly cut sparsifier,
both theorems use a $(1/6,2n)$-friendly cut sparsifier.
\end{remark}
The main idea in proving \Cref{theorem:encoding-structural-result}
(encoding all the minimum $s,t$-cuts using a friendly cut sparsifier)
is provided in \Cref{lemma:structure-min-st-cut} below,
which in turn relies on the following easy claim.
We say that a minimum $s,t$-cut $S\subseteq V$ with $s\in S$ is \emph{minimal}
if for all $S'\subset S$ with $s\in S$ we have $\mintcut_G(S')>\mintcut_G(S)$.
\begin{claim}
\label{claim:at-most-one-unfriendly}
Let $S\subseteq V$ be a minimum $s,t$-cut in $G$
that is $\alpha$-unfriendly for some $\alpha\le 1/6$.
Then, exactly one vertex in $G$ has friendliness ratio less than $1/6$,
and it is either $s$ or $t$.
\end{claim}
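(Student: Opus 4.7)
My plan is to split the proof into two parts: first, a quick local argument ruling out any \emph{non-terminal} vertex with friendliness ratio below $1/6$; and second, a more delicate argument ruling out both terminals being simultaneously unfriendly. The second part will be the main obstacle, since terminal vertices cannot be moved to the other side of the cut without violating the $s,t$-cut constraint.

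For the first part, I would assume for contradiction that some $v \in V\setminus\{s,t\}$ has friendliness ratio below $1/6 < 1/2$, i.e., strictly more than half of $v$'s edges cross the cut. Then flipping $v$'s side (taking $S' = S \triangle \{v\}$) changes the cut value by $\deg(v) - 2\cross{v}{S} < 0$ while preserving $s \in S'$ and $t \notin S'$. Hence $S'$ is an $s,t$-cut of strictly smaller value, contradicting $\mintcut_G(S) = \lambda_G(s,t)$. So every vertex with ratio below $1/6$ must be $s$ or $t$.

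For the second part, I would assume for contradiction that both $s$ and $t$ have friendliness ratio below $1/6$, and decompose the relevant edges into six groups: $a \eqdef |E(s,t)|$, $e_1 \eqdef |E(s, S\setminus\{s\})|$, $e_3 \eqdef |E(s, \overline{S}\setminus\{t\})|$, $e_4 \eqdef |E(t, S\setminus\{s\})|$, $e_5 \eqdef |E(t, \overline{S}\setminus\{t\})|$, and $e_6 \eqdef |E(S\setminus\{s\}, \overline{S}\setminus\{t\})|$. The two unfriendliness hypotheses translate to $5 e_1 < a + e_3$ and $5 e_5 < a + e_4$. The extra leverage comes from two rival $s,t$-cuts, namely $\{s\}$ of value $\deg(s)$ and $V\setminus\{t\}$ of value $\deg(t)$, whose comparisons $\mintcut_G(S) \le \deg(s)$ and $\mintcut_G(S) \le \deg(t)$ yield $e_4 + e_6 \le e_1$ and $e_3 + e_6 \le e_5$. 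Iterating these four inequalities to eliminate the interior counts $e_3, e_4, e_6$, together with the simple-graph bound $a \le 1$ and the integrality of the $e_i$, forces all six counts to vanish; this gives $\deg(s) \le 1$, contradicting the standing assumption that $G$ has minimum degree at least~$2$. The conceptual challenge here is choosing the right pair of rival cuts: the more symmetric alternative $S \triangle \{s,t\}$ gives only the strictly weaker inequality $e_3 + e_4 \le e_1 + e_5$, which does not close the argument.
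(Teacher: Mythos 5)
Your proof is correct, and it rests on the same three ingredients as the paper's: a non-terminal with friendliness ratio below $1/2$ could be flipped to improve the cut; the rival $s,t$-cuts $\{s\}$ and $V\setminus\{t\}$ give $\mintcut_G(S)\le\min(\deg(s),\deg(t))$; and simplicity of $G$ caps the shared $s,t$-edge contribution at one. The difference is bookkeeping. The paper stays at the aggregate level: summing the two unfriendliness hypotheses gives $\cross{s}{S}+\cross{t}{S}>\tfrac{5}{6}(\deg(s)+\deg(t))\ge\tfrac{5}{3}\mintcut_G(S)$, while the observation that only the $s,t$-edge can be double-counted gives $\cross{s}{S}+\cross{t}{S}\le\mintcut_G(S)+1$, and these force $\mintcut_G(S)<3/2$ against $\mintcut_G(S)\ge\cross{s}{S}>\tfrac{5}{6}\deg(s)\ge\tfrac{5}{3}$. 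You instead split the picture into six edge counts and solve the resulting small linear system to force $\deg(s)\le a\le 1$; this is slightly longer but makes the role of each constraint transparent, and your aside on why the single rival cut $S\triangle\{s,t\}$ is too weak is a nice diagnostic of why both one-sided rival cuts are really needed. Both routes ultimately lean on the paper's standing no-degree-one assumption, yours via $\deg(s)\le 1$ and the paper's via $\mintcut_G(S)\ge 2$.
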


\begin{proof}
Every vertex besides $s,t$ must have friendliness ratio at least $1/2$,
as otherwise moving it to the other side of the cut would yield a cut with a smaller value,
contradicting the optimality of $S$.
The cut $S$ is $\alpha$-unfriendly for $\alpha\le 1/6$,
and thus at least one of $s,t$ has friendliness ratio less than $1/6$.
Assume towards contradiction this holds for both vertices,
then together with the optimality of $S$ we have
\[
  \cross{t}{S}+\cross{s}{S}
  > 5/6 \cdot [\deg(s) + \deg(t)]
  \geq 5/3 \cdot \mintcut(S).
\]
However, at most one edge in $E(S,V\setminus S)$ can be incident to both $s$ and $t$,
and hence $\cross{t}{S}+\cross{s}{S} \le \mintcut(S)+1$.
This yields a contradiction if $\mintcut(S)\ge 2$, 
which indeed holds because we assumed $G$ has no vertices of degree $1$ 
and thus $\mintcut(S)\ge \cross{s}{S} > 5/6\cdot \deg(s) \geq 5/3$. 
\end{proof}

\begin{lemma}
\label{lemma:structure-min-st-cut}
Let $S\subseteq V$ be a minimal minimum $s,t$-cut in $G$, 
and suppose $s\in S$ has friendliness ratio less than $1/6$.
Then $X=S\setminus \set{s}$ is a $\frac{1}{6}$-friendly cut in $G$
and furthermore $\mintcut_G(X) \le 2\deg(s)$.
\end{lemma}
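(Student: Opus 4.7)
The plan is to track how the cut value and each vertex's crossing count change when $s$ is detached from $S$, using three ingredients: (a) $\mintcut_G(S)\le\deg(s)$, since $\{s\}$ is itself an $s,t$-cut and $S$ is optimal; (b) \Cref{claim:at-most-one-unfriendly}, which bounds the crossing of every $v\in V\setminus\{s\}$ with respect to $S$ by either $\deg(v)/2$ or $5\deg(v)/6$ (the latter reserved for $v=t$); and (c) minimality of $S$, which upgrades (b) to a strict inequality for every $v\in S\setminus\{s\}$.

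For the bound on $\mintcut_G(X)$, I would decompose the edges crossing $X$ into those crossing $S$ but not incident to $s$, together with the edges from $s$ into $X = S\setminus\{s\}$. A direct count gives
\[
  \mintcut_G(X) \;=\; \mintcut_G(S) + \deg(s) - 2\cross{s}{S}.
\]
Since $\mintcut_G(S)\le\deg(s)$ and $\cross{s}{S}\ge 0$, the desired bound $\mintcut_G(X)\le 2\deg(s)$ follows at once.

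For $\tfrac{1}{6}$-friendliness of $X$, I would case-split on $v$. If $v=s$, then $\cross{s}{X}=\deg(s)-\cross{s}{S}<\deg(s)/6$ by the unfriendliness hypothesis. If $v\in V\setminus S$, detaching $s$ only removes a (possibly absent) $(v,s)$ edge from $v$'s crossing, so $\cross{v}{X}\le\cross{v}{S}$, and \Cref{claim:at-most-one-unfriendly} gives $\cross{v}{S}\le 5\deg(v)/6$ in all sub-cases (the case $v=t$ being the tight one). If $v\in S\setminus\{s\}$, the formerly internal $(v,s)$ edge now crosses $X$, so $\cross{v}{X}=\cross{v}{S}+\mathbf{1}[(v,s)\in E]$.

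The last case is the main obstacle. Applying minimality of $S$ to $S\setminus\{v\}$ (still an $s,t$-cut since $t\notin S$ and $v\neq s$) yields
\[
  \mintcut_G(S\setminus\{v\}) \;=\; \mintcut_G(S) + \deg(v) - 2\cross{v}{S} \;>\; \mintcut_G(S),
\]
hence $\cross{v}{S}<\deg(v)/2$. By integrality, $\cross{v}{S}\le \lfloor(\deg(v)-1)/2\rfloor$, and in the simple graph $G$ we have $\mathbf{1}[(v,s)\in E]\le 1$, so $\cross{v}{X}\le\lfloor(\deg(v)-1)/2\rfloor+1$. A short parity check using the standing assumption $\deg(v)\ge 2$ confirms $\cross{v}{X}\le 5\deg(v)/6$. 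This step is precisely where both the \emph{minimality} hypothesis and the \emph{no-degree-$1$} assumption are needed: without strict minimality, a degree-$2$ vertex $v$ adjacent to $s$ could have $\cross{v}{X}=2$, giving friendliness zero.
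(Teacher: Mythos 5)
Your proof is correct and follows essentially the same skeleton as the paper's: a vertex-by-vertex case analysis split into $s$, vertices in $V\setminus S$, and vertices in $S\setminus\{s\}$, combined with the decomposition identity for $\mintcut_G(X)$. The one substantive difference is your treatment of the last case. The paper first asserts that every $x\in X$ has $\deg(x)\ge 3$ (claiming degree $2$ would contradict minimality) and then applies the chain $1-(\cross{x}{S}+1)/\deg(x)\ge 1/2-1/\deg(x)\ge 1/6$, which genuinely requires $\deg(x)\ge 3$. That preliminary claim is not quite justified: if $\deg(x)=2$ and $\cross{x}{S}=0$ (both edges internal to $S$), then $\mintcut_G(S\setminus\{x\})=\mintcut_G(S)+2>\mintcut_G(S)$, so minimality is not violated and such a vertex can exist. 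You avoid this issue by using minimality to get the \emph{strict} bound $\cross{v}{S}<\deg(v)/2$, tightening it to $\lfloor(\deg(v)-1)/2\rfloor$ by integrality, and then verifying $\lfloor(\deg(v)-1)/2\rfloor+1\le\tfrac{5}{6}\deg(v)$ for all $\deg(v)\ge 2$ directly. This handles the degree-$2$ subcase that the paper's chain does not, so your route is both cleaner and slightly more careful on this point. The exact identity $\mintcut_G(X)=\mintcut_G(S)+\deg(s)-2\cross{s}{S}$ you use for the final bound is a harmless refinement of the paper's one-sided inequality $\mintcut_G(X)\le\mintcut_G(S)+\deg(s)$; both give $2\deg(s)$.
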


\begin{proof}
First observe that if $S=\set{s}$ then $X=\emptyset$ is trivially a friendly cut.
Next, we show that every vertex $x\in X$ has $\deg(x)\ge 3$.
Indeed, we assumed $G$ has no vertices of degree $1$, 
and $x$ also cannot have degree $2$,
as otherwise moving it to the other side of $S$ would violate minimality of $S$;
hence, $\deg(x)\ge 3$.

We proceed to prove that every vertex in $G$ has friendliness ratio at least $1/6$
with respect to the cut $X$.
Consider first vertices $x\in X$. 
We know that $\cross{x}{S}/\deg(x)\le 1/2$,
as otherwise moving $x$ to the other side of the cut yields a cut with a smaller value, in contradiction to optimality of $S$.
Furthermore, $|\cross{x}{X} - \cross{x}{S}| \le 1$ as removing $s$ from $S$
can change the cut by at most one edge incident to $x$. 
Therefore, the friendliness ratio of $x$ with respect to $X$ is at least
$1-(\cross{x}{S}+1)/\deg(x)\ge 1-1/2 - 1/\deg(x)\ge 1/6$.

Next, consider vertices in $V\setminus X$.
Vertices other than $s,t$ have, by same argument as before,
friendliness ratio at least $1/2\ge 1/6$ in the cut $S$,
which can only increase when moving $s$ to the other side of the cut.
The vertex $t$ has, by \Cref{claim:at-most-one-unfriendly}, 
friendliness ratio at least $1/6$ in the cut $S$,
which can only increase when we move $s$ to the other side of the cut.
For the vertex $s$, its friendliness ratio in the cut $S$ is less than $1/6$,
and thus in the cut $X$ it is 
$1-\cross{s}{X}/\deg(s)= 1-(\deg(s)-\cross{s}{S})/\deg(s) > 5/6$.
This concludes the bound $1/6$ for all vertices in $G$. 
    
Finally, the optimality of the cut $S$ implies
$\mintcut_G(X) \leq \mintcut_G(S) + \deg(s) \le 2\deg(s)$.
\end{proof}

\begin{proof}[Proof of \Cref{theorem:encoding-structural-result}]
Let $\friendly=(V,E_{\fr},w_{\fr})$ be a $(1/6,2n)$-friendly cut sparsifier of $G$,
and let $w_{\fr}(\cdot,\cdot)$ denote the total weight of corresponding edges in $E_{\fr}(\cdot,\cdot)$.
To simplify notation, let 
\begin{equation}
\label{eq:fv}
  f_v(X)\eqdef \mintcut_{\friendly}(X)-2w_{\fr}(X,v)+\deg(v) ,
\end{equation}
for $v\in V$ and $X\subseteq V\setminus\{v\}$,
and notice that it can be computed given $\friendly$ and the vertex degrees in $G$. 
The algorithm to recover a minimum $s,t$-cut in $G$ is as follows.
Find a minimum $s,t$-cut in $\friendly$ using any standard algorithm,
and denote it by $S_{\fr}\subset V$.
In addition, let
\begin{equation*}
    X^{(s)}
    = \arg\min_{X\subseteq V\setminus\set{s,t}}f_s(X)
    ,
\end{equation*}
and define $X^{(t)}$ similarly.
Finally, return a set associated with the minimum of those three values, 
namely, one of $S_{\fr}, X^{(s)}\cup\set{s}, X^{(t)}\cup\set{t}$
together with its associated value.

We proceed to prove the correctness of the algorithm.
We have $\mintcut_G(A\cup B) = \mintcut_G(A) + \mintcut_G(B) - 2\cdot |E(A,B)|$
whenever $A,B\subseteq V$ are disjoint,
and thus the value of a cut $S=X\cup\set{v}$ where $X\subseteq V\setminus \set{v}$ can be written as 
\begin{equation}
  \label{eq:min-cut-decomposition}
  \mintcut_G(S)
    = \mintcut_G(X)
    -2\cdot|E(X,v)|
    +\deg(v)
  .
\end{equation}
Let us now show that $\mintcut_G(X\cup\set{v}) \le f_{v}(X)$
by comparing \eqref{eq:min-cut-decomposition} with \eqref{eq:fv} term by term. 
Observe that $\mintcut_G(X) \le \mintcut_{\friendly}(X)$ for all $X\subseteq V$,
as $\friendly$ is a contraction of $G$.
In addition, $w_{\fr}(X,v)$ has the same value as $|E(X,v)|$
up to infinite-weight edges in $\friendly$ (arising from contractions).
However, changing the weight of such edges does not affect $f_v(X)$,
because if $v$ is contracted with a vertex of $X$
then the weight of the edge between them
cancels out completely in $f_v(X)$ (it appears in all three terms).
We conclude that $\mintcut_G(X\cup\set{v}) \le f_{v}(X)$;
furthermore, equality holds if and only if
no edge of $E(X,V\setminus X)$ is contracted in $\friendly$.
In addition, the value of the cut $S_{\fr}$ in $\friendly$
(which is a contraction of $G$) 
is clearly at least as large as a minimum $s,t$-cut value in $G$.
Hence, the algorithm's output value is always at least as large as $\lambda_{s,t}(G)$.

We now argue that the algorithm output is actually $\lambda_{s,t}(G)$
(together with a minimum $s,t$-cut in $G$).
If there exists a minimum $s,t$-cut that is $1/6$-friendly,
then it is preserved in $\friendly$,
and thus the algorithm will find its value when computing $S_{\fr}$.
Otherwise, fix some $1/6$-unfriendly minimal minimum $s,t$-cut $S\subset V$ in $G$, 
and assume without loss of generality that $s\in S$ has friendliness ratio less than $1/6$.
By \Cref{lemma:structure-min-st-cut},
$X=S\setminus\set{s}$ is a $1/6$-friendly cut with at most $2n$ edges.
Therefore, the cut of $X$ is preserved in $\friendly$,
i.e., no edge in $E(X,V\setminus X)$ is contracted in $\friendly$
and hence $f_s(X)=\mintcut_G(X\cup\set{s}) = \lambda_{s,t}(G)$,
and the algorithm will find this value when it evaluates $X^{(s)}$.
\end{proof}

Unfortunately, recovering a minimum $s,t$-cut using \Cref{theorem:encoding-structural-result} requires iterating over all the cuts of $\friendly$,
potentially taking exponential time
as $\friendly$ might have $\Omega(n)$ vertices.
While this is acceptable in the cut-query and streaming models, where one is mostly concerned about the number of queries or space,
it is excessive for the fully-dynamic setting where fast update times is crucial.
We therefore propose to modify the friendly-cut sparsifier into
an APMC sparsifier $\allpairs=(V_{\ap},E_{\ap},w_{\ap})$ of the input $G$,
which does admit efficient minimum $s,t$-cut algorithms. 

The APMC sparsifier is constructed by applying the star transform
to every super-vertex in a $(1/6,2n)$-friendly cut sparsifier $\friendly$ of $G$. 
Let us outline why this process yields an APMC sparsifier.
Fix $s,t\in V$, and recall that applying the star transform on $u$
decomposes $u$ into its constituent vertices,
and adds a new proxy vertex $u'$ connected to each vertex $v\in u$
with an edge of weight $|E(v,u\setminus v)|$.
Consider some minimal minimum $s,t$-cut $S\subseteq V$ in $G$ such that $s\in S$, and denote the super-vertex in $\friendly$ that contains $s$ by $u$.
If $S$ is $1/6$-friendly then it is preserved in $\friendly$.
Otherwise, by \Cref{lemma:structure-min-st-cut} we can write $S=X\cup \set{s}$,
where $X$ is a $1/6$-friendly cut with at most $2n$ edges. 
Now, we have $\mintcut_G(A\cup B) = \mintcut_G(A) + \mintcut_G(B) - 2\cdot |E(A,B)|$
whenever $A,B\subseteq V$ are disjoint,
and thus we can write $\mintcut_G(S)$ as 
\begin{align}
    &\mintcut_G(X\cup\set{s})
    \nonumber
    = 
    \mintcut_G(X)
    -2\cdot|E(X,s)|
    +\deg(s)
    \\
    \nonumber
    &= 
    \mintcut_G(X)
    -2\cdot|E(X,s)|
    + |E(s,u\setminus s)|
    + |E(s,X)|
    + |E(s,V\setminus (u\cup X))|
    \\
    \label{eq:cut-value-decomposition}
    &=
    \mintcut_G(X)
    -|E(X,s)|
    + 
    |E(s,u\setminus s)|
    + |E(s,V\setminus (u\cup X))|
    . 
\end{align}
Next, let us examine the corresponding cut $S=X\cup \set{s}$ in $\allpairs$,
but to be precise, we must assign each proxy vertices to some side of the cut.
For a super-vertex $y\in \friendly$ such that $y\subseteq S$ or $y\subseteq V\setminus S$,
we assign its proxy vertex $y'$ to the same side as (all of) $y$;
this way, none of the edges incident to $y'$ cross the cut.
Since $X$ is $1/6$-friendly,
this handles all the super-vertices in $\friendly$ except (possibly) $u$.
We also handle $u'$ by assigning it to the side opposite to $s$. 
Denote the corresponding cut value by $\mintcut_{\allpairs}(X\cup\set{s})$,
which has a slight abuse of notation because some proxy vertices are omitted.
We can write it, similarly to \eqref{eq:cut-value-decomposition}
but replacing $E(s,u\setminus s)$ with the edge $(s,u')$, 
as
\ifstoc
\begin{align*}
    &\mintcut_{\allpairs}(X\cup\set{s})
    \\
    &=
    \mintcut_{\allpairs}(X)
    -w_{\ap}(X,s)
    +w_{\ap}(s,u')
    +w_{\ap}(s,V\setminus (u\cup X))
    ,
\end{align*}
\else 
\begin{align*}
    &\mintcut_{\allpairs}(X\cup\set{s})
    =
    \mintcut_{\allpairs}(X)
    -w_{\ap}(X,s)
    +w_{\ap}(s,u')
    +w_{\ap}(s,V\setminus (u\cup X))
    ,
\end{align*}
\fi
where $w_{\ap}(\cdot,\cdot)$ is the total weight of the edges in $E_{\ap}(\cdot,\cdot)$.
The cut $X$ is preserved in $\friendly$ since it is $1/6$-friendly,
and hence $\mintcut_{\allpairs}(X)=\mintcut_G(X)$.
Furthermore, $w_{\ap}(X,s) = |E(X,s)|$
and $w_{\ap}(s,V\setminus (u\cup X)) = |E(s,V\setminus (u\cup X))|$
because they correspond to edges outside the super-vertices of $\friendly$.
Finally, by the definition of the star transform
$w_{\ap}(s,u') = |E(s,u\setminus s)|$.
Altogether, $\mintcut_{\allpairs}(X\cup\set{s})=\mintcut_G(X\cup\set{s})$.
We formalize this argument in
\Cref{claim:star-transform-yields-apmc-sparsifier} below;
the main difficulty is to show that $\allpairs$ does not have any ``new'' minimum $s,t$-cut,
e.g., by assigning the proxy vertices differently. 
We then address the running time in
\Cref{claim:summary-constructing-all-pairs-minimum-cut-sparsifier} below.
Putting the claims together immediately proves \Cref{theorem:apmc-sparsifier-construction}. 

\begin{claim}
\label{claim:star-transform-yields-apmc-sparsifier}
Let $\friendly$ be a $(1/6,2n)$-friendly cut sparsifier of
an unweighted graph $G=(V,E)$ on $n$ vertices.
Then performing a star transform on every super-vertex in $\friendly$
yields a graph $\allpairs=(V_{\ap},E_{\ap},w_{\ap})$
that is an APMC sparsifier of $G$ with $|E_{\ap}| \le |E_{\friendly}|+n$.
\end{claim}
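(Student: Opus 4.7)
The edge count is immediate: each super-vertex $u$ of $\friendly$ contributes $|u|$ proxy edges after the transform, and $\sum_u |u|=n$, so $|E_{\ap}|\le |E_{\friendly}|+n$. The substantial work is verifying the APMC property, which I plan to split into two directions: an \textbf{upper bound} $\lambda_{\allpairs}(s,t)\le\lambda_G(s,t)$, realized by an explicit cut $T$ in $\allpairs$ whose restriction $T\cap V$ is a minimum $s,t$-cut in $G$ of the same value, and a \textbf{lower bound} $\lambda_{\allpairs}(s,t)\ge\lambda_G(s,t)$ that rules out ``shortcut'' cuts in $\allpairs$ that exploit the proxy vertices.

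For the upper bound I follow the computation already sketched in the excerpt. Fix $s,t\in V$ and a minimal minimum $s,t$-cut $S$ in $G$. If $S$ is $1/6$-friendly, then every super-vertex of $\friendly$ lies entirely on one side of $S$; otherwise an internal contracted edge between the two sides would force $\mintcut_{\friendly}(S)=\infty$, contradicting that $S$ is preserved. Assigning each proxy $u'$ to the same side as its super-vertex makes all proxy edges non-crossing, so the resulting cut $T$ satisfies $T\cap V=S$ and $\mintcut_{\allpairs}(T)=\mintcut_G(S)$. Otherwise, by \Cref{claim:at-most-one-unfriendly} and \Cref{lemma:structure-min-st-cut} I may write $S=X\cup\{s\}$ (WLOG) where $X$ is $1/6$-friendly with at most $2n$ edges; let $u\ni s$ be its super-vertex in $\friendly$. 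Each super-vertex $y\ne u$ is wholly inside or outside $X$, and since $s\notin X$ we have $u\cap X=\emptyset$. Assign $y'$ for $y\ne u$ to the same side as $y$ and place $u'$ opposite to $s$; the decomposition \eqref{eq:cut-value-decomposition} together with its analog for $\mintcut_{\allpairs}$ in the excerpt then yields $\mintcut_{\allpairs}(T)=\mintcut_G(S)=\lambda_G(s,t)$.

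The lower bound is the principal new technical content, and I plan to bound the cut contribution super-vertex by super-vertex. Non-proxy edges of $\allpairs$ are in bijection with inter-super-vertex edges of $G$ and contribute to $\mintcut_{\allpairs}(T)$ exactly the total weight of $G$-edges that cross $T\cap V$ while lying between different super-vertices. For a super-vertex $u$, set $u_T\eqdef u\cap T$ and $u_{\bar T}\eqdef u\setminus T$; the crossing proxy edges at $u$ have total weight $\sum_{v\in u_T}|E(v,u\setminus v)|$ when $u'\notin T$, and $\sum_{v\in u_{\bar T}}|E(v,u\setminus v)|$ when $u'\in T$. These sums equal $2|E(G[u_T])|+|E(u_T,u_{\bar T})|$ and $2|E(G[u_{\bar T}])|+|E(u_T,u_{\bar T})|$ respectively, and are therefore each at least $|E(u_T,u_{\bar T})|$, which is exactly the internal contribution of $u$ to $\mintcut_G(T\cap V)$. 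Summing over super-vertices and adding the non-proxy term gives $\mintcut_{\allpairs}(T)\ge\mintcut_G(T\cap V)\ge\lambda_G(s,t)$, completing the lower bound. The main obstacle is precisely the super-vertex genuinely split by $T$: one must verify that no placement of $u'$ lets an adversary undercut $|E(u_T,u_{\bar T})|$. The identity $\sum_{v\in u_T}\deg_{G[u]}(v)=2|E(G[u_T])|+|E(u_T,u_{\bar T})|$ resolves this cleanly and also explains why the proxy weights are defined as $|E(v,u\setminus v)|$: this choice is exactly what forces every internal edge of $u$ to contribute at least its true amount to any cut of $\allpairs$, regardless of how $u'$ is placed.
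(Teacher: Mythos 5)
Your proposal is correct and follows essentially the same route as the paper's own proof: both use \Cref{claim:at-most-one-unfriendly} and \Cref{lemma:structure-min-st-cut} to reduce the exhibited cut to a friendly set plus a singleton, and both establish the lower bound super-vertex by super-vertex, observing that the proxy weight $\sum_{v\in u_T}|E(v,u\setminus v)|=2|E(G[u_T])|+|E(u_T,u_{\bar T})|\ge|E(u_T,u_{\bar T})|$ holds regardless of where $u'$ is placed. Your phrasing of the lower bound (over \emph{all} proxy placements, rather than restricting to the optimal placement $C_{\ap}$ as the paper does) and the explicit degree-sum identity are a cleaner way to state the same argument.
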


\begin{claim}
    \label{claim:summary-constructing-all-pairs-minimum-cut-sparsifier}
    Given a $(1/6,2n)$-friendly cut sparsifier $\friendly$ of an unknown unweighted graph $G=(V,E)$ on $n$ vertices, and the degree of every vertex in $G$,
    one can perform a star transform on every super-vertex in $\friendly$
    in time $O(|E_H|+n)$.
\end{claim}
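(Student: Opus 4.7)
The plan is to implement the star transform via a single linear pass over $E_{\friendly}$ combined with $O(n)$ bookkeeping, relying on the identity $|E(v, u\setminus v)| = \deg_G(v) - |E(v, V\setminus u)|$ already noted in the discussion preceding \Cref{theorem:apmc-sparsifier-construction}. The right-hand side is computable because $\deg_G(v)$ is given as input, while $|E(v, V\setminus u)|$ is precisely the number of edges of $E_{\friendly}$ incident to $v$ whose other original endpoint lies outside the super-vertex $u$ that contains $v$; recall that $\friendly$ stores each of its edges together with its original endpoints in $G$.

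First, I would initialize an integer array $c[\cdot]$ of length $n$, indexed by $V$, to zero. I then iterate once through $E_{\friendly}$: for each edge whose stored original endpoints $a,b$ lie in distinct super-vertices of $\friendly$, I increment both $c[a]$ and $c[b]$; edges whose two original endpoints lie in the same super-vertex (i.e.\ contracted into a self-loop) are skipped. After this pass, $c[v] = |E(v, V\setminus u)|$ for every $v\in V$, and the pass takes $O(|E_{\friendly}|)$ time.

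Next, I allocate the proxy vertices: for each super-vertex $u$ of $\friendly$, introduce a fresh vertex $u'$. Then, scanning $V$ once, for each $v\in V$ lying in super-vertex $u$, I set $w_{\ap}(v,u') \eqdef \deg_G(v) - c[v]$ and insert the edge $(v,u')$ into $\allpairs$ whenever the weight is positive. Because the constituent sets of the super-vertices partition $V$, this loop touches each vertex once, costs $O(n)$ time, and contributes at most $n$ edges. The remaining edges of $\allpairs$ are exactly the edges of $E_{\friendly}$ attached to their original endpoints in $G$ (self-loops can be dropped, as they contribute to no cut); this is free if $\friendly$ is already represented that way, and $O(|E_{\friendly}|)$ otherwise. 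Summing the three steps yields the claimed bound $O(|E_{\friendly}| + n)$, and the edge count $|E_{\ap}| \le |E_{\friendly}| + n$ needed for \Cref{theorem:apmc-sparsifier-construction} falls out directly.

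The main potential pitfall is not algorithmic but representational: we need constant-time access, for every edge of $E_{\friendly}$, to its two original endpoints in $G$, and for every $v\in V$, to the identifier of the super-vertex containing $v$. Both are standard for any contraction-based representation of $\friendly$, and they are precisely the features already invoked in \Cref{theorem:encoding-structural-result} and \Cref{claim:star-transform-yields-apmc-sparsifier}; no deeper structural fact is required.
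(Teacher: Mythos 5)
Your proof is correct and follows essentially the same approach as the paper's: compute $|E(v,u\setminus v)| = \deg_G(v) - |E(v,V\setminus u)|$ by a single pass over $E_{\friendly}$ (using the stored original endpoints), then add proxy vertices and reconnect the sparsifier's edges, giving the $O(|E_{\friendly}|+n)$ bound. Your version simply spells out the bookkeeping (the counter array, the per-vertex super-vertex lookup) that the paper leaves implicit.
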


\begin{proof}[Proof of \Cref{claim:star-transform-yields-apmc-sparsifier}]
Fix a minimal minimum $s,t$-cut $S\subseteq V$ in $G$,
let $U$ be the set of super-vertices in $\friendly$,
and let $U'=\set{u' \mid u\in U}$ be the set of their corresponding proxy vertices in $\allpairs$.
    For every cut $C\subseteq V$ let $C_{\ap}=A\cup C$ where $A=\arg\min_{T\subseteq U'} \mintcut_{\allpairs}(T\cup C)$ (breaking ties arbitrarily), which is the minimum cut corresponding to $C$ in $\allpairs$.
    We begin by showing that for any other $s,t$-cut $S'\subseteq V$ we have $\mintcut_{\allpairs}(S_{\ap}')\ge \mintcut_G(S)$, and then consider $S_{\ap}$ itself and show that $\mintcut_G(S)=\mintcut_{\allpairs}(S_{\ap})$.
    Notice that we only need to consider cuts of the form $C_{\ap}$ for some $C\subseteq V$,
    since they minimize over all cuts in $\allpairs$
    that assign the vertices of $V$ to two sides in the same way ($C$ and $V\setminus C$).
    Combining the two results,
    we obtain that $S_{\ap}$ is a minimum $s,t$-cut in $\allpairs$
    and its value is equal to the minimum $s,t$-cut value in $G$.
    Therefore, $\allpairs$ is an APMC sparsifier of $G$.

    Let $S'\subseteq V$ be some $s,t$-cut in $G$, $U$ be the super-vertices of $\friendly$, and for each $u\in U$ denote $S_u'=(S'\cap u)$.
    We can write the value of the cut $A$ in $G$ as
    \begin{equation*}
        \mintcut_G(S')
        = \sum_{u\in U} 
        |E(S_{u},u\setminus S')|
        + |E(S_{u},V\setminus (S'\cup u))| 
        ,
    \end{equation*}
    by partitioning the edges into those internal to super-vertices and those external to them.
    The term $|E(S_{u}',V\setminus (S'\cup u))|$ is preserved in $\friendly$ and in $\allpairs$
    because it comprises of edges external to super-vertices.
    
    Now, fix some super-vertex $u\in U$ and examine the term corresponding to the internal edges $|E(S_{u}',u\setminus S')|$.
    In the induced graph $\allpairs[u\cup\set{u'}]$, all edges are of the form $(v,u')$ for $v\in u$.
    If $u'\in S_{\ap}'$, the corresponding term is the weight of all the edges between $u'$ and $u\setminus S_{u}'$, given by $w_{\ap}(u\setminus S_{u}', u')=\sum_{v\in u\setminus S_{u}'}|E(v,u\setminus v)|\ge|E(S_{u}',u\setminus S')|$, where the equality is by the properties of the star transform.
    Similarly, if $u'\not\in S_{\ap}'$, the corresponding term is $w_{\ap}(S_{u}',u')=\sum_{v\in S_{u}'}|E(v,u\setminus v)|\ge |E(S_{u}',u\setminus S')|$, where again the equality is by the properties of the star transform.
    Therefore, we find that $\mintcut_{\allpairs}(S_{\ap}')\ge \mintcut_G(S')\ge \mintcut_G(S)$, where the last inequality is by the minimality of $S$.

    We now turn to prove that $\mintcut_G(S)=\mintcut_{\allpairs}(S_{\ap})$.
    If the cut $S$ is $1/6$-friendly then it is preserved in $\friendly$ and hence also in $\allpairs$.
    Otherwise, either $s$ or $t$ is the unique vertex with friendliness ratio $\le 1/6$ by \Cref{claim:at-most-one-unfriendly}, and assume without loss of generality that it is $s$ and that $s\in S$.
    By \Cref{lemma:structure-min-st-cut}, $S=X\cup\set{s}$ where $X$ is a $1/6$-friendly cut with at most $2n$ edges.
    We argue that $\mintcut_{\allpairs}(X_{\ap}\cup\set{s})= \mintcut_G(S)$ which yields the required result.

    Examine the cut $X_{\ap}\cup\set{s}$ in $\allpairs$ and let $u$ be the super-vertex in $\friendly$ that contains $s$.
    Recall that similarly to \eqref{eq:cut-value-decomposition},
    \ifstoc
    \begin{align*}
        &\mintcut_G(X\cup\set{s})
        \\
        &=\mintcut_G(X)
        -|E(s,X)|
        +|E(s,V\setminus (u\cup X))|
        +|E(s,u\setminus\set{s})|
        \\
        &\mintcut_{\allpairs}(X_{\ap}\cup\set{s})
        \\
        &=\mintcut_{\allpairs}(X_{\ap})
        -w_{\ap}(s,X)
        +w_{\ap}(s,V\setminus (u\cup X))
        +w_{\ap}(s,u')
        ,
    \end{align*}
    \else
        \begin{align*}
        \mintcut_G(X\cup\set{s})
        &=\mintcut_G(X)
        -|E(s,X)|
        +|E(s,V\setminus (u\cup X))|
        +|E(s,u\setminus\set{s})|
        \\
        \mintcut_{\allpairs}(X_{\ap}\cup\set{s})
        &=\mintcut_{\allpairs}(X_{\ap})
        -w_{\ap}(s,X)
        +w_{\ap}(s,V\setminus (u\cup X))
        +w_{\ap}(s,u')
        ,
    \end{align*}

    \fi 
    where in the last equation we used that the proxy vertex $u'$ is on the opposite side of the cut from $s$.
    Observe that the proxy vertex $u'$ is not in $X_{\ap}$ since a friendly cut cannot partition any super-vertex in $\friendly$ and $s\not\in X$.
    Hence, $u\cap X_{\ap}=\emptyset$ and therefore including $u'$ in $X_{\ap}$ would increase its value, and we conclude that $u'\not\in X_{\ap}\cup\set{s}$.
    We next prove that these two quantities above are equal,
    by showing that each term is equal to its corresponding term.    

    We begin by showing that $\mintcut_{\allpairs}(X_{\ap})=\mintcut_G(X)$.
    Recall that $X$ is a $1/6$-friendly cut in $G$ and $\mintcut_G(X)\le 2\deg(s)\le 2n$.
    Hence, no edge of $E(X,V\setminus X)$ is contracted in $\friendly$ (and in $\allpairs$).
    In addition, if a cut $A\subseteq V$ includes all the vertices inside a super-vertex $u$, then $S_{\ap}'$ must also include $u'$, and we conclude $\mintcut_{\allpairs}(X_{\ap})=\mintcut_G(X)$.
    Since $E(s,X)$ and $E(s,V\setminus u)$ are sets of edges external to super-vertices of $\friendly$,
    they are also preserved in $\allpairs$.
    Finally, by the star transform operation we obtain $w_{\ap}(s,u')=|E(s,u\setminus s)|$, and $\mintcut_{\allpairs}(X\cup\set{s})=\mintcut_G(X\cup\set{s})$ as required.

    To conclude the proof we show the bound on the number of edges in $\allpairs$.
    Every edge of $E_{\ap}$ that was not added during the star transform operation was already in $E_{\friendly}$.
    Furthermore, for every super-vertex $u$ in $\friendly$
    we add exactly $|u|$ new edges incident to the proxy vertex $u'$.
    Since $\sum_{u} |u| \le n$ we find that $|E_{\ap}|\le |E_{\friendly}|+n$.
\end{proof}

\begin{proof}[Proof of \Cref{claim:summary-constructing-all-pairs-minimum-cut-sparsifier}]
For every super-vertex $u\in \friendly$ and vertex $v\in u$ compute $|E(v,u)|=\deg(v)-E(v,V\setminus u)$,
which altogether takes $O(|E_{\friendly}|)$ time by iterating over each edge in $\friendly$ once.
Performing the star transform only requires knowing
$|E(v,u\setminus v)|=|E(v,u)|$ for every $v\in u$.
It takes $O(|E_{\friendly}|)$ time to reconnect the edges to their original endpoints
and an additional $O(n)$ time to add the edges incident to proxy vertices.
\end{proof}

\section{Applications}
\label{sec:applications}
In this section we leverage the structural results
from \Cref{sec:technical-overview} (proved in \Cref{sec:structural-proof})
to construct an $\sparsifiername$ of an input graph $G$,
in order to prove our main theorems for the cut-query, streaming, and fully-dynamic models.
As seen above, the main ingredient in constructing an APMC sparsifier is a friendly cut sparsifier.
In the cut-query model, one can construct an $(\alpha,w)$-friendly cut sparsifier using the following lemma from \cite{KK25b},
and putting it together with \Cref{theorem:apmc-sparsifier-construction},
we immediately obtain \Cref{theorem:main-result-cut-query}.

\begin{lemma} [Lemma 1.7 of \cite{KK25b}]
\label{lemma:friendly-cut-sparsifier-cut-query}
    Given cut-query access to an unweighted graph $G$ on $n$ vertices
    and parameters $\alpha$ and $w$, 
    one can recover all the edges (along with their endpoints in $G$) of a friendly $(\alpha,w)$-cut sparsifier of $G$
    using $\tO(\alpha^{-1}n\sqrt{w})$ cut queries.
    The algorithm is randomized and succeeds with probability $1-1/\poly(n)$.
\end{lemma}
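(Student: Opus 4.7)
The plan is to combine expander decomposition in the cut-query model with efficient edge recovery. First, I would compute a hierarchical expander decomposition of $G$: a partition $V = V_1 \sqcup \cdots \sqcup V_k$ such that each induced subgraph $G[V_i]$ has conductance $\tOmega(\alpha)$, and moreover no cluster has more than $\tO(\sqrt{w}/\alpha)$ boundary edges (achieved by recursively refining any cluster whose boundary exceeds this threshold). Such a decomposition can be constructed with $\tO(n)$ cut queries using the machinery of \cite{FKM23,CKMM24,KK25b}, which provides expander decomposition primitives directly in the cut-query model.

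Second, I would show that the contraction $H$ obtained by collapsing each $V_i$ into a super-vertex preserves every $\alpha$-friendly cut of size at most $w$ exactly. The argument goes by contradiction: if some $V_i$ is split nontrivially by a friendly cut $S$, the conductance bound of $G[V_i]$ forces the internal cut across $V_i$ to have size $\Omega(\alpha \cdot \min(|S\cap V_i|,|V_i\setminus S|))$, which in turn produces a vertex on the smaller side whose crossing edges exceed an $\alpha$-fraction of its degree, contradicting $\alpha$-friendliness. Hence each $V_i$ lies entirely on one side of $S$, and the contraction preserves $\mintcut_G(S)$ exactly together with the identity of the cut on the super-vertex set.

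Third, I would recover every edge of $H$ together with its two original endpoints in $G$. The boundary cap gives $|E(H)| = \tO(\alpha^{-1} n \sqrt{w})$, and a peeling-and-binary-search scheme (cut queries on random vertex subsets detect parity changes in the cut value, and each revealed edge is then localized by halving the candidate endpoint set) extracts each edge with $O(\log n)$ cut queries, for a total of $\tO(\alpha^{-1} n \sqrt{w})$ queries. The recovered endpoints are precisely what \Cref{theorem:apmc-sparsifier-construction} needs in order to apply the star transform.

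The main obstacle I expect is the hierarchical refinement step in the first subroutine. A standard expander decomposition does not \emph{a priori} enforce the $\sqrt{w}/\alpha$ cap on cluster boundaries, and iterating the decomposition could in principle blow up either the conductance loss or the cut-query cost. Controlling this interplay — so that the refined decomposition still fits within $\tO(n)$ cut queries while simultaneously guaranteeing the boundary cap that underlies both cut preservation and efficient edge recovery — is where the bulk of the technical work lies, and it is also the step that ultimately produces the $\alpha^{-1}$ and $\sqrt{w}$ factors in the stated bound.
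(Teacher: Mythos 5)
This lemma is imported from~\cite{KK25b} and the paper does not prove it; the closest thing in the paper is the generic construction of \Cref{lemma:friendly-cut-sparsifier-AKT} via \Cref{algorithm:friendly-cut-sparsifier} (based on~\cite{AKT22b}). Your high-level pipeline (expander decomposition, contraction, efficient edge recovery) is in the right family, and step~3 is plausible, but step~2 --- the cut-preservation argument --- has a genuine gap, and it is precisely the gap that the shaving step in \Cref{algorithm:friendly-cut-sparsifier} exists to close. Your averaging over the expander $G[V_i]$ gives a vertex $v$ on the smaller-volume side of $S\cap V_i$ with $|E_{G[V_i]}(v, V_i\setminus S)| \ge \phi_0 \deg_{G[V_i]}(v)$ for $\phi_0 = \tilde\Omega(\alpha)$. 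But this is relative to $\deg_{G[V_i]}(v)$, not $\deg_G(v)$; a vertex with most of its edges leaving $V_i$ has $\deg_G(v) \gg \deg_{G[V_i]}(v)$, and the estimate collapses. Moreover, with the friendliness convention of this paper ($S$ is $\alpha$-friendly iff every vertex has $\cross{v}{S}/\deg(v)\le 1-\alpha$), a crossing fraction of $\tilde\Omega(\alpha)$ does not contradict $\alpha$-friendliness for small $\alpha$ such as $1/6$, since $\tilde\Omega(\alpha)\ll 1-\alpha$. Your argument also never uses the hypothesis $\mintcut_G(S)\le w$, and a ``boundary cap achieved by recursive refinement'' is not well-founded as stated --- splitting a cluster only increases total inter-cluster boundary and in any case caps the cluster's aggregate boundary, not the per-vertex boundary that the friendliness argument actually needs.

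The construction the paper relies on closes these gaps very differently. It shaves, from each cluster $W_i$, every vertex with degree below roughly $\alpha^{-1}\sqrt{w_j}$ and every vertex with more than an $\Omega(\alpha)$-fraction of edges leaving $W_i$, so that every surviving vertex has both high absolute degree and its degree dominated by intra-cluster edges; it adds $(\epsilon/\phi)^{-1}\sqrt{w_j}$ self-loops per vertex, so that any non-trivial split $(L,R)$ of a cluster has volume $\ge(\epsilon/\phi)^{-1}\sqrt{w_j}\cdot\min(|L|,|R|)$, which the expander property then forces to exceed $w$ unless $L$ or $R$ is tiny; and it uses a simple-graph counting bound $|E(x,L)| \le |x|\,|L|$ together with the friendliness of $S$ and the non-shaving condition to show that $|L|$ (and symmetrically $|R|$) cannot be tiny, finishing the contradiction via $\mintcut(S)\le w$. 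The procedure is then repeated over $O(\log n)$ geometrically decreasing scales $w_j$ to achieve the claimed $\sqrt{w}$ dependence. I would suggest replacing your step~2 with the shaving + self-loop + counting argument; without it, the conductance-averaging step does not reach a contradiction with the paper's notion of $\alpha$-friendliness.
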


\begin{proof}[Proof of \Cref{theorem:main-result-cut-query}]
Begin by constructing a $(1/6,2n)$-friendly cut sparsifier $\friendly$ of $G$ using \Cref{lemma:friendly-cut-sparsifier-cut-query} with $\alpha=1/6$ and $w=2n$.
In addition, recover the degree of each vertex $v\in V$ by querying the cut $\set{v}$
and then apply \Cref{theorem:apmc-sparsifier-construction}
to construct an APMC sparsifier $\allpairs$ of $G$.     
Finally, build a Gomory-Hu tree of $\allpairs$ using some offline algorithm.
The query complexity is dominated by the construction of $\friendly$, which takes $\tO(n^{3/2})$ cut queries by \Cref{lemma:friendly-cut-sparsifier-cut-query}.
\end{proof}

Unfortunately, in the streaming and fully-dynamic models,
efficient algorithms for constructing a friendly cut sparsifier are not known,
and we provide the first such algorithms,
based on the randomized construction of \cite{AKT22b}.%
\footnote{There exists also a deterministic construction with slightly worse guarantees.}
The heart of the algorithm is to recursively apply
an expander decomposition procedure on a contraction of the graph,
shave some vertices from each expander and then contract the shaved expanders.
We begin by defining the notion of expander decomposition.

\begin{definition}
    A graph $G$ is a said to be a \emph{$\phi$-expander} if 
    \begin{equation*}
        \forall S\subseteq V,
        \qquad
        \Phi_G(S) 
        \eqdef
        \frac{\mintcut_G(S)}{\min(\vol{S},\vol{V\setminus S})} 
        \ge \phi
        ,
    \end{equation*}
where $\vol{S} \eqdef \sum_{v\in S} \deg(v)$ is called the volume of $S$.
An \emph{$(\epsilon,\epsilon/\phi)$-expander decomposition} of a (multi)graph $G$
is a partition of $V$ into clusters $V=V_1\cup\ldots\cup V_k$
such that the number of edges between clusters is at most $\epsilon \cdot|E|$,
and every induced subgraph $G[V_i]$, for $i\in [k]$, is an $(\epsilon/\phi)$-expander. 
\end{definition}
To fit our needs, we need to strengthen the friendly cut sparsifier construction of \cite{AKT22b}. We make a few modifications.
The first one is to show that the construction succeeds
even if we allow vertices to be shaved when they are within factor $\beta\in(0,1)$ from the shaving conditions;
this is critical in the streaming model,
where one cannot store the entire graph and must rely on sparsifiers to approximate those conditions.
The second modifications is to show that the construction is robust to deletions and insertions.
The sparsifier $\friendly$ is called \emph{$k$-robust}
if given a sequence of $k'$ edge insertions $I=\set{e_1,\ldots,e_{k'}}$
and $k''$ edge deletions $D=\set{e_1,\ldots,e_{k''}}$
such that $k'+k'' \leq k$,
the graph $(\friendly\cup I)\setminus D$ is a $(2\alpha,w)$-friendly cut sparsifier of $(G\cup I)\setminus D$.
We show that the construction of \cite{AKT22b} is in fact $k$-robust for $k=O(\sqrt{w})$.
Finally, the proof of \cite{AKT22b} contains a flaw, which we fix in our construction.
Our next lemma describes an algorithm for constructing a robust friendly cut sparsifier;
its proof appears in \Cref{sec:clusters}, 
and the algorithm itself in \Cref{algorithm:friendly-cut-sparsifier}.

\begin{lemma}
\label{lemma:friendly-cut-sparsifier-AKT}
Suppose one has access to a procedure that,
given a contracted multigraph and $\epsilon\in(0,1)$,
computes for it an $(\epsilon,\epsilon/\phi)$-expander decomposition.
Then, given an unweighted graph $G$ on $n$ vertices,
parameters $\alpha\in(0,1/4),w\ge1$ and slack $\beta\in(0,1)$,
one can compute the super-vertices of
a $\sqrt{w}$-robust $(\alpha,w)$-friendly cut sparsifier $\friendly$ of $G$,
that has $\tO((\beta\alpha)^{-1}n\phi\sqrt{w})$ edges.
This algorithm makes $\polylog(n)$ calls to the expander decomposition procedure on contractions of $G$,
and spends $O(m)$ time outside these calls.%
\footnote{The slack parameter $\beta$ applies in models where one cannot access $G$ directly, but rather via cut sparsifiers. One example is the streaming model, where storing the entire graph is not feasible.}
\end{lemma}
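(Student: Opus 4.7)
The plan is to adapt the randomized construction of AKT22b, which iterates expander decomposition with controlled shaving, and to strengthen its analysis in three ways: to accommodate the slack parameter $\beta$, to establish $\sqrt{w}$-robustness, and to patch a bookkeeping flaw in the recursion. The algorithm maintains a sequence of contracted multigraphs $G_0 = G, G_1, G_2, \ldots$. At iteration $i$, it calls the expander-decomposition procedure on $G_i$ with $\epsilon = \Theta(\alpha / \polylog(n))$, producing clusters $C_1, \ldots, C_k$ that are $\epsilon/\phi$-expanders. Within each cluster $C$, a vertex $v$ is declared a \emph{core} vertex if $|E(v, V(G_i) \setminus C)| \leq \beta\alpha \cdot \deg_{G_i}(v)$; the core vertices of $C$ are contracted into a new super-vertex of $G_{i+1}$, while boundary vertices remain individual and are re-decomposed in the next round. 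After $O(\log n) = \polylog(n)$ rounds almost all edges lie between super-vertices, and the accumulated contractions define $\friendly$.

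For correctness, fix an $\alpha$-friendly cut $S \subseteq V$ in $G$ with $\mintcut_G(S) \leq w$. It suffices to show that every super-vertex $u$ formed by the algorithm satisfies $u \subseteq S$ or $u \subseteq V \setminus S$, which yields $\mintcut_{\friendly}(S) = \mintcut_G(S)$. Consider the cluster $C$ at the iteration when $u$ was contracted, and suppose both $u \cap S$ and $u \setminus S$ are nonempty. Each core vertex $v \in u$ has at most $\beta\alpha \deg(v)$ edges leaving $C$; combined with the $\alpha$-friendliness of $v$, this forces most of its $S$-boundary to lie inside $C$, so the edge expansion $\Phi_C \geq \epsilon/\phi$ lower-bounds $|E(u \cap S, u \setminus S)|$ by a quantity that exceeds $w$ once $\epsilon$ and $\beta$ are small enough relative to $\alpha$, contradicting $\mintcut_G(S) \leq w$. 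The edge-count bound then follows from the standard AKT22b accounting: each round discards at most an $\epsilon$-fraction of crossing edges, and within each super-vertex the shaved boundary contributes $\tO(\phi\sqrt{w}/(\beta\alpha))$ edges, summing to $\tO((\beta\alpha)^{-1} n \phi \sqrt{w})$ overall.

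For $\sqrt{w}$-robustness, consider $k \leq \sqrt{w}$ updates $I \cup D$ and any cut $S$ that is $2\alpha$-friendly in $(G \cup I) \setminus D$ with at most $w$ edges. Only $2k = O(\sqrt{w})$ vertex degrees change, and each by $\pm 1$. For a core vertex $v$ whose degree in $G$ is much larger than $\sqrt{w}$, its leaving-fraction from $C$ in the updated graph is at most $\beta\alpha + O(\sqrt{w}/\deg(v))$, which is still strictly less than $2\alpha$ once $\beta$ is a small absolute constant. The same expansion argument as in the static case then shows that every super-vertex $u$ satisfies $u \subseteq S$ or $u \subseteq V \setminus S$ in the updated graph, and therefore $\mintcut_{(\friendly \cup I) \setminus D}(S) = \mintcut_{(G \cup I) \setminus D}(S)$. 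Vertices of very small degree can be handled separately, consistent with the paper's blanket assumption excluding degree-$1$ vertices.

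The main obstacle will be the robustness analysis, which must hold uniformly over every $2\alpha$-friendly cut of at most $w$ edges; the delicate quantitative point is picking the slack $\beta$ so that it absorbs both the degree perturbations and the resulting shift in friendliness ratios, without inflating the edge-count bound. A secondary challenge is repairing the AKT22b bookkeeping flaw, which I anticipate to be an over-optimistic amortization of shaved edges across nested rounds; the fix will be to charge each shaved vertex against the cumulative leaving-fraction across all of its successive clusters and to verify that the telescoping sum over $\polylog(n)$ rounds still fits in the $\tO((\beta\alpha)^{-1} n \phi \sqrt{w})$ budget.
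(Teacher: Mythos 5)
Your high-level picture---iterate expander decomposition, shave boundary vertices, contract, and then analyze correctness via the expansion bound---matches the paper's template, and your robustness intuition (degree perturbations of magnitude $O(\sqrt{w})$ can only flip the friendliness ratio of low-degree vertices, so a constant factor of slack absorbs them) is essentially what the paper does in Lemma~\ref{lemma:robustness-of-friendly-cut-sparsifier}. However, there are several concrete gaps.

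First, you misidentify the flaw in~\cite{AKT22b}. It is not an over-optimistic amortization across rounds; it is a simple counting error. Their proof uses the bound $|E_{j-1}(x,S)| \ge \sum_{u\in x}|E(u,S)|$ for a super-vertex $x$, which is false because edges from $u\in x$ to other vertices inside $x$ are counted on the right but do not appear on the left (they are internal to the contracted node). The paper's fix (Claim~\ref{claim:edge-friendliness-lower-bound}) replaces this with the weaker but correct bound $|E_{j-1}(x,S)| \ge \frac{9\alpha}{10}\deg_{G_{j-1}}(x)$, proved by showing that internal edges of $x$ are a negligible fraction of its volume once the low-degree shaving has been applied. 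Your proposed fix---a telescoping charge of shaved vertices across rounds---does not touch this issue, since the error lives inside a single iteration's expansion argument, not in how edges accumulate across rounds.

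Second, your shaving rule keeps only the boundary-edge condition and drops the paper's \emph{low-degree} condition ($\deg_{G_{j-1}}(v) < 10\alpha^{-1}\sqrt{w_j}|v|$), together with the geometrically decreasing threshold $w_j = 4^{-j}(m/n)^2$ and the self-loops of multiplicity $(\epsilon/\phi)^{-1}\sqrt{w_j}$ added before each decomposition. These are not cosmetic: the self-loops inflate the volume of any layer that straddles a cut so that the expansion bound yields $\min(\vol{L},\vol{R}) \gtrsim (\epsilon/\phi)^{-1}w_j$, which is what contradicts the $w\phi/\epsilon$ upper bound; and the low-degree shaving is precisely what makes the internal-edge count small in Claim~\ref{claim:edge-friendliness-lower-bound}. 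Your sketch of the contradiction (``$\Phi_C\ge\epsilon/\phi$ lower-bounds $|E(u\cap S, u\setminus S)|$ by a quantity that exceeds $w$'') cannot be completed without these ingredients.

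Third, your robustness argument asserts ``the same expansion argument as in the static case then shows'' preservation after updates, but the static correctness statement you prove only covers $\alpha$-friendly cuts. After updates, a $2\alpha$-friendly cut in the new graph is not $\alpha$-friendly in $G$ at vertices whose degrees are comparable to $\sqrt{w}$. The paper handles this by proving the strictly stronger Claim~\ref{claim:correctness-of-iterative-friendly-cut-sparsifier}: the sparsifier preserves any cut $S$ of at most $w$ edges as long as every vertex of degree at least $10\alpha^{-1}\sqrt{w}$ has friendliness ratio $\ge\alpha$; the small-degree vertices need not be friendly at all. You would need to formulate and prove this stronger static guarantee before the robustness reduction goes through---``handling vertices of very small degree separately'' is not sufficient, because those vertices can genuinely participate in the cut $S$, not merely be discarded.
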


\paragraph*{Streaming Model.}
The main technical challenge in constructing a friendly cut sparsifier in the streaming model
is to apply an expander decomposition recursively
while making only a single pass over the data stream.
Recent work \cite{FKM23,CKMM24} has shown how to construct a single expander decomposition in a single pass.
Using standard techniques, we extend these results
to graphs $G'$ that are contractions of the input graph $G$,
and obtain the following generalization of \cite{FKM23},
whose proof appears in 
\ifstoc
the full version of the paper.
\else
\Cref{sec:dynamic-streams}.
\fi

\begin{theorem}
    \label{theorem:expander-decomposition-streaming-FKM23}
    Given a dynamic stream of an unweighted graph $G$ on $n$ vertices, and parameter $\epsilon\in(0,1)$, one can compute an $(\epsilon,\Omega(\epsilon/\log n))$-expander decomposition of a contraction of $G$ using $\tO(\epsilon^{-2}n)$ space in one pass.
\end{theorem}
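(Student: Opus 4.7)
The plan is to lift the expander decomposition algorithm of~\cite{FKM23} from the input graph $G$ to any contraction of $G$, by exploiting linearity of its sketching scheme. Recall that the FKM23 algorithm builds, in a single pass over the dynamic stream, a linear sketch $s_v$ attached to each vertex $v\in V$, such that running a post-processing routine on $\{s_v\}_{v\in V}$ returns an $(\epsilon,\Omega(\epsilon/\log n))$-expander decomposition of $G$ and the total storage is $\tO(\epsilon^{-2}n)$. The key property I would use is that each $s_v$ is a linear function of the signed edge-incidence vector of $v$ under a globally fixed random orientation of edges that is common to the sampling primitives used inside the sketch (e.g.\ $\ell_0$-samplers and sparse cut recovery).

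The next step is to show how to combine sketches to handle any contraction. Given a partition $\mathcal{P}=\{P_1,\ldots,P_k\}$ of $V$ specified after the stream, define $s_{P_i}\eqdef \sum_{v\in P_i}s_v$ for each $i\in[k]$. Since each internal edge of $P_i$ contributes with opposite signs to the incidence vectors of its two endpoints, its contribution cancels in $s_{P_i}$, while each edge crossing parts is represented in exactly the two parts it joins. Hence $\{s_{P_i}\}_{i\in[k]}$ is precisely the FKM23 sketch of the contracted multigraph $G/\mathcal{P}$, and feeding it to FKM23's extraction routine yields an $(\epsilon,\Omega(\epsilon/\log n))$-expander decomposition of $G/\mathcal{P}$. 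Only one pass is needed, because the combination and extraction are post-processing, and the streaming space stays $\tO(\epsilon^{-2}n)$, dominated by storing $\{s_v\}_{v\in V}$.

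The main obstacle will be to verify that FKM23 is genuinely a linear per-vertex sketch with these composition properties, and that its success probability holds uniformly over the (possibly adaptive) contractions invoked by the recursive friendly cut sparsifier in \Cref{algorithm:friendly-cut-sparsifier}. For the former, I would inspect their construction and confirm that all sampling randomness (hash functions of $\ell_0$-samplers, CountSketch-style linear maps, etc.) is shared across vertices, so that the collection $\{s_v\}_{v\in V}$ lies in a common linear space amenable to summation; a mild technical point is that internal edges of $P_i$ become self-loops in the formal sum and must be discarded, which happens automatically by the sign cancellation. For the latter, I would boost the per-instance failure probability to $1/\poly(n)$ by enlarging the sketch by a $\polylog(n)$ factor (absorbed in $\tO(\cdot)$) and then apply a union bound over the $\poly(n)$ adaptive contractions issued by the sparsifier construction.
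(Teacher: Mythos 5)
Your approach is essentially the paper's: exploit linearity of the per-vertex FKM23 sketches, use signed incidence entries so that internal edges cancel when you sum sketches over a part $P_i$, and then run the extraction. The paper routes this through the explicit ``power cut sparsifier'' abstraction (Theorem~1 of \cite{FKM23}), but the underlying mechanism --- summing signed linear per-vertex sketches for a contraction specified after the stream --- is the same as yours.

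There is, however, a concrete gap. The FKM23 extraction opens, for each vertex $u$, the sketch at a degree-dependent level $i^*(u)=\lfloor \log_2 d(u)/(2\gamma)\rfloor$. You treat the full per-vertex FKM23 sketch as one linear object whose summation over $P_i$ automatically gives ``the FKM23 sketch of $G/\mathcal{P}$,'' but the degree needed here is not a sum: the degree of a super-vertex $P_i$ in $G/\mathcal{P}$ is $|E(P_i, V\setminus P_i)|$, whereas $\sum_{v\in P_i}\deg_G(v)=\vol(P_i)=2|E(P_i,P_i)|+|E(P_i,V\setminus P_i)|$ counts every internal edge twice when it should not be counted at all. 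So you cannot recover $d(P_i)$ by summing per-vertex degree counters, and without it you cannot decide which level $i^*(P_i)$ to open. The paper's proof (Lemma~\ref{lemma:power-cut-sparsifier-contraction}, Section~\ref{sec:power-cut-sparsifier-contraction}) supplies exactly this missing piece: it maintains in parallel a $(1\pm 1/100)$-cut sparsifier $\cutspar$ of $G$ (Theorem~\ref{theorem:cut-sparsifier-streaming}), and uses $\mintcut_{\cutspar}(P_i)$ as the degree estimate for each super-vertex, which costs only $\tO(n)$ extra space. Your proposal needs this, or an equivalent device, to go through. Your closing remark about union-bounding over adaptive contractions is actually outside the scope of this particular theorem, which concerns a single contraction fixed after the stream; the adaptivity concern belongs to Lemma~\ref{lemma:friendly-cut-sparsifier-streaming}, where the paper observes that the $O(n)$ relevant induced subgraphs are all determined before the sketch is opened, so a standard non-adaptive union bound suffices --- a blanket ``union bound over $\poly(n)$ adaptive contractions'' as you phrase it would not by itself be sound.
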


Using this theorem together with \Cref{lemma:friendly-cut-sparsifier-AKT},
we can construct a friendly cut sparsifier in the streaming model,
as stated in the next lemma, whose proof appears in 
\ifstoc
the full version of the paper.
\else
\Cref{sec:dynamic-streams}.
\fi
The space requirement of this algorithm does not depend on $w$,
because it only outputs the super-vertices of the friendly cut sparsifier (without its edges).

\begin{lemma}
\label{lemma:friendly-cut-sparsifier-streaming}
Given a graph $G$ presented as a dynamic stream and parameters $\alpha\in(0,1/2),w\ge1$,
on can output the super-vertices of an $(\alpha,w)$-friendly cut sparsifier $\friendly$ that has $|E_{\friendly}|=\tO(\alpha^{-1}n\sqrt{w})$ edges,
using $\tO(\alpha^{-2}n)$ space in one pass.
\end{lemma}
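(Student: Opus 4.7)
The plan is to invoke Lemma \ref{lemma:friendly-cut-sparsifier-AKT} in the streaming setting, implementing its $\polylog(n)$ expander-decomposition calls via Theorem \ref{theorem:expander-decomposition-streaming-FKM23} while spending only a single pass over the stream. First, I would set the parameters of Lemma \ref{lemma:friendly-cut-sparsifier-AKT} to $\alpha' = \Theta(\alpha)$, the given $w$, slack $\beta = \Theta(1)$, and $\phi = O(1)$ (consistent with the $\Omega(\epsilon/\log n)$-expansion guaranteed by Theorem \ref{theorem:expander-decomposition-streaming-FKM23}, absorbing the $\log n$ factor into the $\tO$). This yields a $(\alpha,w)$-friendly cut sparsifier with $\tO(\alpha^{-1} n \sqrt{w})$ edges, as claimed.

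Second, the construction of Lemma \ref{lemma:friendly-cut-sparsifier-AKT} is recursive: its $i$-th call to the expander-decomposition subroutine operates on a contraction of $G$ determined by the outputs of the earlier calls. A naive implementation would therefore require $\polylog(n)$ passes. To compress into one pass, I would exploit that the streaming expander-decomposition primitive of Theorem \ref{theorem:expander-decomposition-streaming-FKM23} is realized via linear sketches of the edge-incidence structure of $G$, and such sketches can be evaluated on \emph{any} contraction of $G$ after the stream has ended (contractions correspond to deterministic linear combinations of rows of the incidence matrix, and thus of the sketch). Concretely, I would maintain $\polylog(n)$ independent copies of the Theorem \ref{theorem:expander-decomposition-streaming-FKM23} sketch with accuracy parameter $\epsilon = \Theta(\alpha)$, one for each recursion level, each of size $\tO(\alpha^{-2} n)$. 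After the pass, I would run the procedure of Lemma \ref{lemma:friendly-cut-sparsifier-AKT} offline: whenever it requests an expander decomposition at level $i$, I evaluate the $i$-th stored sketch on the contraction produced by the previous levels. Summing over the $\polylog(n)$ levels keeps the total space at $\tO(\alpha^{-2} n)$.

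Third, since the lemma only requires outputting the super-vertices of $\friendly$ (not its edges), no additional storage proportional to $|E_{\friendly}|$ is incurred; only the resulting partition of $V$ into super-vertices is recorded at the end. The slack parameter $\beta$ is what makes this pipeline correct: the streaming expander decomposition and any cut/volume estimates used by AKT's shaving-and-contracting step are only approximate, but the extra factor $\beta$ in Lemma \ref{lemma:friendly-cut-sparsifier-AKT} was introduced precisely to tolerate a constant multiplicative error in the shaving conditions, so the output remains an $(\alpha,w)$-friendly cut sparsifier (after adjusting $\alpha'$ by a constant).

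The main obstacle is verifying that Theorem \ref{theorem:expander-decomposition-streaming-FKM23}'s sketch genuinely supports being applied \emph{post hoc} to contractions chosen adaptively after the stream, rather than only to a contraction fixed upfront. I would check that the sketch of \cite{FKM23} is built on an $\ell_0$-sampling / linear-measurement primitive over the edge slots of $G$, which is closed under the vertex-identifications performed by AKT's recursion, and then argue that the $\Omega(\epsilon/\log n)$ expansion guarantee continues to hold on each contracted sub-instance with the parameters chosen above. Once this is established, combining the space bounds level-by-level and accounting for the $O(m)$ non-expander-decomposition work of Lemma \ref{lemma:friendly-cut-sparsifier-AKT} (which is executed offline and does not affect streaming space) completes the proof.
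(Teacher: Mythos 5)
Your high-level plan is on the right track and mirrors the paper's: run \Cref{lemma:friendly-cut-sparsifier-AKT} with a constant slack $\beta$, implement the recursive expander decompositions via linear sketches that can be applied post hoc to contractions, keep $\polylog(n)$ independent sketch copies (one per recursion level), and output only the partition into super-vertices. You also correctly identify the key technical question of whether the sketch can be evaluated on adaptively chosen contractions; the paper resolves this explicitly via \Cref{lemma:power-cut-sparsifier-contraction}.

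However, there is a genuine gap: you do not say how the \emph{shaving conditions} of \Cref{algorithm:friendly-cut-sparsifier} (line~\ref{lst:line:contraction-criteria}) are actually evaluated in the streaming model, and you cannot get them from the expander decomposition oracle alone. Each iteration must decide, for every super-vertex $u$, whether $\deg_{G_{j-1}}(u) < 10\alpha^{-1}\sqrt{w_j}|u|$ and whether $|E_{j-1}(u, V_{j-1}\setminus W_i)| > (\alpha/4)\deg_{G_{j-1}}(u)$, and this is what the slack $\beta$ is for — but ``the slack tolerates constant error'' is not a proof that any sketch you maintain actually delivers these two quantities to within that error. The paper spends most of its proof precisely here: it maintains a separate $(1\pm 1/100)$-cut sparsifier (\Cref{theorem:cut-sparsifier-streaming}) to estimate $\deg_{G_{j-1}}(u)$ as $\mintcut_{\cutspar}(u)$, and it uses a power cut sparsifier (\Cref{lemma:power-cut-sparsifier-contraction}) evaluated on the induced graphs $G[(V_{j-1}\setminus W_i)\cup\set{u}]$ to estimate the out-degree term, with a union bound over the $O(n)$ such induced graphs (which is delicate because the partition is adaptive, but determined by \emph{independent} sketches used for the expander decomposition, so the union bound is legitimate). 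It then verifies with an explicit calculation that the multiplicative and additive errors of these sketches land within slack $\beta = 1/2$. Without supplying these estimates and their error analysis, the invocation of \Cref{lemma:friendly-cut-sparsifier-AKT} is not actually implementable in one pass, so the argument is incomplete. A smaller omission: the self-loops added at each level of \Cref{algorithm:friendly-cut-sparsifier} must also be simulated as end-of-stream insertions, which the paper notes and you do not.

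Finally, a parameter note: you set the sketch accuracy to $\epsilon = \Theta(\alpha)$, but the paper needs a much finer $\epsilon = \Theta(\alpha\beta/\log^3 n)$ so that the additive $\epsilon\cdot\vol{\cdot}$ term of the power cut sparsifier is dominated by the shaving thresholds; this is still $\tO(\alpha^{-2}n)$ space but the constant-$\epsilon$ choice would not yield the needed additive accuracy.
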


\begin{proof}[Proof of \Cref{theorem:main-result-streaming}]
    In the first pass over the stream, find the super-vertices of a $(1/6,2n)$-friendly cut sparsifier $\friendly$ of $G$ using \Cref{lemma:friendly-cut-sparsifier-streaming}.
    In the second pass, recover the edges of $\friendly$ (along with original endpoints in $G$)
    and the degree of every vertex $v\in V$,
    and use them to construct an APMC sparsifier $\allpairs$ using \Cref{theorem:apmc-sparsifier-construction}.
    Finally, compute the minimum $s,t$-cuts in $\allpairs$ using some offline algorithm.
    The space complexity of the algorithm is dominated by storing the edges of $\friendly$, of which there are at most $\tO(n^{3/2})$ by \Cref{lemma:friendly-cut-sparsifier-AKT}.
\end{proof}

\paragraph*{Fully Dynamic.} 
In the fully dynamic model, we leverage the fact that our friendly cut sparsifier construction is robust to edge updates.
This allows us to use a static algorithm to construct a $(\polylog n, 2n)$-friendly cut sparsifier $H$ of $G$, which is then valid for $\sqrt{n}$ edge updates, during which we can construct a new friendly cut sparsifier $\friendly'$ in the background.
Then, while the friendly cut sparsifier is valid every edge update requires $O(1)$ updates to the $\sparsifiername$ $F$.
To obtain a deterministic algorithm, we employ a deterministic expander-decomposition procedure from \cite{LS21}.%
\footnote{In \cite{AKT22b} there is a different algorithm for deterministic construction of a friendly cut sparsifier.
  It leverages a stronger notion of expander decomposition,
  and employs a single expander decomposition instead a recursive one.
  To simplify our presentation, we use the same meta-algorithm
  for both the deterministic and randomized versions.
}

\begin{theorem}[Corollary 2.5 of \cite{LS21}]
    \label{theorem:expander-decomposition-deterministic-LS21}
    There exists a deterministic algorithm that, given a weighted graph $G=(V,E,w)$ with $m$ edges with weight bounded in $1\le w_e \le U$ for all $e\in E$, and parameters $\epsilon\in(0,1)$, $r\ge 1$, constructs an $(\epsilon,\epsilon/(\log^{O(r^4)}m \log U))$-expander decomposition of $G$ in time $m(mU)^{O(1/r)}\log (mU)$.
\end{theorem}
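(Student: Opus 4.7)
The plan is to invoke the standard recursive template for expander decomposition, instantiated with a deterministic balanced-sparse-cut oracle whose approximation and runtime are controlled by the parameter $r$. Concretely, the core primitive I would construct is a deterministic subroutine that, given a weighted multigraph $H$ with edge weights in $[1,U]$ and a target conductance $\phi$, either certifies that $H$ is a $\phi$-expander or outputs a cut $S$ of conductance $\tO(\phi)$ whose smaller side has volume $\Omega(\vol{H}/\polylog m)$. Assuming such an oracle, the decomposition proceeds by the usual trim-and-recurse scheme: apply the oracle; if it returns an expansion certificate, emit the whole piece as a cluster; otherwise split on $S$ and recurse on both sides. A potential argument on $\sum_i \vol{V_i} \log \vol{V_i}$ bounds the recursion depth by $O(\log m)$ and the total inter-cluster weight by $\epsilon \cdot |E|$, yielding the claimed $(\epsilon,\epsilon/(\log^{O(r^4)} m \log U))$-expander decomposition once $\phi$ is set to $\epsilon/(\log^{O(r^4)} m \log U)$.

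The first step, then, is to reduce the weighted case with weights in $[1,U]$ to an unweighted multigraph by standard bucketing into $O(\log U)$ weight classes and replacing each weighted edge by parallel unit-weight copies, losing only an $O(\log U)$ factor in the conductance approximation. The second step is to set up the recursion and the potential-function analysis that converts a worst-case $\tO(\phi)$-approximate balanced-cut oracle into an $(\epsilon,\epsilon/\polylog m)$-expander decomposition; this is essentially the Saranurak--Wang template and requires no new ideas, only a careful bookkeeping that accumulates a polylogarithmic (but not $r$-dependent) loss across the $O(\log m)$ levels of recursion.

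The main obstacle is the deterministic implementation of the balanced-cut oracle with the stated runtime/quality tradeoff, and here is where the parameter $r$ enters. I would instantiate a cut-matching game: the cut player produces, in $\polylog m$ rounds, a sequence of bisections of the current matching-induced embedding, while the matching player is asked in each round to route a perfect matching between the two sides inside $H$ via multicommodity flow. The cut player can be made deterministic using spectral-free constructions (e.g.\ expander routing on a known constant-degree expander template), and the bottleneck is the matching player. The plan is to implement it via a hierarchical multicommodity-flow routine with $r$ levels of recursion on flow networks of geometrically decreasing size; each level admits a deterministic almost-linear-time routing, and composing $r$ levels gives a routing in time $m \cdot (mU)^{O(1/r)} \log(mU)$ at the cost of an approximation ratio of $\log^{O(r^4)} m$ per matching step. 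The expansion lost per step propagates through the $O(\log m)$ cut-matching rounds and the $O(\log m)$ outer recursion levels, but these only inflate the exponent inside the polylog, not its order in $r$, and one still obtains the claimed $\log^{O(r^4)} m \cdot \log U$ conductance bound.

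Putting the three ingredients together — weight reduction, trim-and-recurse outer loop, and $r$-level hierarchical deterministic cut-matching game — yields a deterministic algorithm with the advertised guarantees. I expect the genuinely hard technical content to lie entirely in the inner matching player: both making the approximate multicommodity flow deterministic at all, and tuning the hierarchical depth so that the per-level approximation factor telescopes to $\log^{O(r^4)} m$ rather than blowing up super-polynomially. Everything else (weight reduction, oracle-to-decomposition reduction, and charging the output weight of inter-cluster edges to $\epsilon |E|$) follows from now-standard arguments.
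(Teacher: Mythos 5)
This statement is not proved in the paper at all: it is cited verbatim as Corollary~2.5 of \cite{LS21} and used as a black box, so there is no ``paper's own proof'' to compare against. What you have written is a plausible high-level reconstruction of the kind of argument behind deterministic expander decomposition (weight reduction by bucketing, a trim-and-recurse outer loop with a potential argument over $O(\log m)$ levels, and a cut-matching game whose matching player is an $r$-level hierarchical deterministic multicommodity-flow routine), and at that level it is broadly consistent with the architecture of \cite{LS21}. But it is a sketch, not a proof: the entire content of the cited result lives in the parts you explicitly flag as ``the genuinely hard technical content,'' namely building a \emph{deterministic} approximate balanced-cut / matching-player primitive with the specific $m(mU)^{O(1/r)}$-time vs.\ $\log^{O(r^4)}m$-quality tradeoff, and showing that the losses telescope to exactly $\log^{O(r^4)}m\cdot\log U$. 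Nothing in your outline actually derives the exponent $r^4$, nor establishes that a deterministic cut player and a deterministic hierarchical flow router with the stated parameters exist; you assert both. For the purpose of this paper that gap is immaterial, since the theorem is meant to be imported rather than reproven, but as a standalone proof attempt it begs the question by positing the oracle whose existence is the theorem. If you want to keep the statement self-contained the honest move is what the paper does: cite \cite{LS21} directly rather than attempt a from-scratch derivation.
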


We employ \Cref{theorem:expander-decomposition-deterministic-LS21} with $r=\log\log n$ and $U=O(n^2)$ to obtain an $(\epsilon,\epsilon/n^{o(1)})$-expander decomposition in time $m^{1+o(1)}$.
Plugging this into \Cref{lemma:friendly-cut-sparsifier-AKT} yields a deterministic algorithm for constructing a friendly cut sparsifier in time $m^{1+o(1)}$.
Finally, to construct the Gomory-Hu tree of $G$ we rebuild it from scratch on the $\sparsifiername$ after every update using the deterministic $m^{1+o(1)}$-time algorithm of \cite{AKL+25}.

\begin{theorem}
    \label{theorem:gomory-hu-tree-static-AKL+25}
    Given an undirected, weighted graph $G=(V,E,w)$ with polynomially bounded weights, a Gomory-Hu tree of $G$ can be constructed in $m^{1+o(1)}$ time.
\end{theorem}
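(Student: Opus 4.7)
The plan is to combine two ingredients that together deliver the bound: a deterministic reduction from Gomory-Hu tree construction to $\polylog(n)$ exact minimum $s,t$-cut computations, and a deterministic algorithm for a single minimum $s,t$-cut running in $m^{1+o(1)}$ time. Multiplying the two costs yields $m^{1+o(1)}$ overall.

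For the first ingredient, I would leverage the recent line of work on Gomory-Hu trees culminating in \cite{AKL+25,GKYY25}, which establishes that a Gomory-Hu tree can be constructed using $\polylog(n)$ calls to a minimum $s,t$-cut oracle plus $\tilde{O}(m)$ additional deterministic work. At a high level, such reductions proceed by recursive partitioning driven by a small number of well-chosen max-flow calls: each batch of calls produces an isolating family of cuts that certifies minimum cuts for roughly half of the remaining pairs, and the remaining pairs form subproblems on disjoint vertex sets whose total size does not blow up, so the recursion depth is $\polylog(n)$ and the total work stays near-linear. Since the theorem claims a deterministic construction, I would rely on a deterministic version of this reduction (the \cite{AKL+25} version), avoiding randomized constructions such as randomly sampled isolating cuts.

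For the second ingredient, I would plug in a state-of-the-art deterministic exact max-flow algorithm running in time $m^{1+o(1)}$ on weighted graphs with polynomially bounded weights, obtained as a derandomization of the almost-linear-time max-flow framework. The polynomial weight bound in the statement is precisely what these algorithms require, and it is also preserved by the reduction in the first ingredient, since contractions and residual graphs that arise during the recursion only create polynomially bounded weights.

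The main obstacle I anticipate is not the running-time accounting, which is routine once both ingredients are in place, but rather guaranteeing determinism throughout the pipeline. Classical isolating-cut techniques for Gomory-Hu construction rely on random hashing to separate vertex pairs, so a careful derandomization (for instance via small deterministic splitter families, or via a recursion guided by deterministic expander decompositions such as the one in \Cref{theorem:expander-decomposition-deterministic-LS21}) is needed to avoid any blowup in the number of max-flow calls. Once that is handled, the total cost telescopes to $\polylog(n)\cdot m^{1+o(1)} = m^{1+o(1)}$, as required.
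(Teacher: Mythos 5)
This theorem is imported into the paper as a black-box citation of \cite{AKL+25}; the paper itself supplies no proof, so there is nothing internal to compare your argument against. That said, your reconstruction does capture the correct high-level architecture used in the literature: a deterministic reduction from Gomory-Hu tree construction to $\polylog(n)$ exact min $s,t$-cut computations on graphs of near-linear total size, plugged into a deterministic $m^{1+o(1)}$-time exact max-flow algorithm for graphs with polynomially bounded weights. Your identification of determinism as the delicate point is also well placed, since the earlier reductions via randomly hashed isolating cuts are inherently randomized and cannot be used here.

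Two caveats keep your sketch from being a proof. First, you do not actually name a deterministic $m^{1+o(1)}$ max-flow algorithm; the widely cited almost-linear max-flow of \cite{CKLPGS25} is randomized, and the deterministic variant is a separate (and nontrivial) result that your argument crucially depends on but only gestures at. Second, the mechanism you describe for the reduction (isolating families derandomized by splitters or expander decompositions) is a reasonable guess but is not a verified account of what \cite{AKL+25} actually does, and the claim that recursive subproblems have near-linear \emph{total} size (so that $\polylog(n)$ many $m^{1+o(1)}$ calls sum to $m^{1+o(1)}$, rather than $m^{1+o(1)}\polylog n$ on a fixed $m$) is exactly the kind of accounting that needs a concrete recursion invariant, not just the assertion that it ``does not blow up.'' Since the paper treats this theorem as external, the right move in this context is simply to cite it, as the paper does, rather than attempt to re-derive it.
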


\begin{proof}[Proof of \Cref{theorem:main-result-fully-dynamic}]
The fully-dynamic algorithm is as follows.
Throughout the updates, use the method explained further below to maintain
a $(1/6,2n)$-friendly cut sparsifier $\friendly$ of the current graph $G$ with $|E_{\friendly}| = n^{3/2+o(1)}$ and also the list of degrees in $G$.
In addition, construct from it using \Cref{theorem:apmc-sparsifier-construction} 
an $\sparsifiername$ $\allpairs$ of $G$ with $O(|E_{\friendly}|+n) = n^{3/2+o(1)}$ edges.
Finally, build for $\allpairs$ a Gomory-Hu tree using \Cref{theorem:gomory-hu-tree-static-AKL+25}.
The time for constructing the APMC sparsifier and the Gomory-Hu tree
is thus $(O(|E_{\friendly}|+n))^{1+o(1)} = n^{3/2+o(1)}$ 
by the above bound on $|E_{\friendly}|$. 

It remains to show how to maintain a $(1/6,2n)$-friendly cut sparsifier $\friendly$ using $n^{3/2+o(1)}$ worst-case update time.
We divide the update sequence into epochs of $\sqrt{n}$ updates, and assume that at the beginning of each epoch we have a valid $(1/12,2n)$-friendly cut sparsifier $\friendly$ of the current graph $G$.
During the epoch, update the edges of $\friendly$ according to edge updates to $G$,
namely, whenever an edge is inserted/deleted into $G$,
insert/delete that edge into $\friendly$ (and update the degree list).
Since the friendly cut sparsifier is $\sqrt{n}$-robust,
it remains a $(1/6,2n)$-friendly cut sparsifier after $\sqrt{n}$ edge updates.
Therefore, throughout each epoch we can maintain $\friendly$ using $O(1)$ time per update.

We now explain how to construct $\friendly$ for the next epoch.
For the first epoch, this is trivial because the graph $G$ is empty
and thus $G$ itself is a friendly cut sparsifier.
During the last $n^{1/2}/\log n$ updates of each epoch,
construct in the background a new $(1/24,2n)$-friendly cut sparsifier $\friendly'$ of the current graph $G$.
The total time for constructing $\friendly'$ is $m^{1+o(1)}\le n^{2+o(1)}$,
where $m$ is the number of edges in $G$ at the beginning of the construction,
by \Cref{lemma:friendly-cut-sparsifier-AKT} and the complexity of the expander-decomposition procedure we use (\Cref{theorem:expander-decomposition-deterministic-LS21} with $r=\log\log n$ and $U=O(n^2)$).
Since this construction is spread over $\sqrt{n}/\log n$ updates,
it takes $n^{2+o(1)}/(\sqrt{n}/\log n) = n^{3/2+o(1)}$ worst-case time per update.
In addition, maintain the updates to $G$ that occur during this construction
in a buffer, and apply them to $\friendly'$ once it is constructed,
which takes $O(\sqrt{n}/\log n)$ time as each edge update takes $O(1)$ time. 
By the robustness property of $\friendly'$,
it remains a $(1/12,2n)$-friendly cut sparsifier of $G$ after applying these updates.
Overall, the algorithm uses $n^{3/2+o(1)}$ worst-case update time to maintain $\friendly$.
Finally, our parameter choice implies $\phi=n^{o(1)},\alpha=1/6,w=2n$ and $\beta=1$,
and thus by \Cref{lemma:friendly-cut-sparsifier-AKT}
we have $|E_{\friendly}| = n^{3/2+o(1)}$.
\end{proof}

\section{Constructing a Friendly Cut Sparsifier}
\label{sec:clusters}
In this section we present a generic algorithm for constructing a friendly cut sparsifier, based on the work of \cite{AKT22b}.
The algorithm is based on recursively applying an expander decomposition procedure on contractions of the graph, shaving some vertices from each expander and then contracting the shaved expanders.
A vertex $v$ is shaved from an expander if its degree is small compared to either the number of its incident edges exiting the expander or to a given threshold.
We note that the original proof of the correctness of the algorithm had a flaw, which we fix later in the section.
Furthermore, we strengthen the construction in two senses.
The first, is that we allow for slack in the shaving condition, i.e. we do not contract some vertices with degree up to $\beta^{-1}$ times as large as the threshold for some $\beta\in (0,1)$.
This is important in the streaming setting where one cannot directly access all graph edges, and consequently cannot verify contraction conditions exactly.

The second, is showing that the construction is $\alpha^{-1}\sqrt{w}$-robust, i.e. it continues to be a friendly cut sparsifier after $\alpha^{-1}\sqrt{w}$ edge insertions and deletions (and appropriately modifying the sparsifier).
To do so we first prove that the construction also preserves all cuts $S\subseteq V$ such that every vertex $v\in S$ with degree at least $10\alpha^{-1}\sqrt{w}$ has friendliness ratio at most $\alpha$ (but vertices with smaller degree may have friendliness ratio less than $\alpha$).
This is stronger than the original condition, which required all vertices in $S$ to have friendliness ratio at most $\alpha$.
The modified algorithm for constructing the friendly cut sparsifier is given in \Cref{algorithm:friendly-cut-sparsifier}.

\begin{algorithm}
    \caption{Friendly Cut Sparsifier Construction}
    \label{algorithm:friendly-cut-sparsifier}
    \begin{algorithmic}[1]
        \State \textbf{Input:} An unweighted graph $G=(V,E)$, parameters $w\ge1,\alpha\in(0,1/2),\beta\in(0,1)$
        \State \textbf{Output:} A robust $(\alpha,w)$-friendly cut sparsifier
        \Procedure{Friendly-Cut-Sparsifier}{$G,\alpha,w$}
        \State $\epsilon\gets 0.01 (\alpha\beta), j\gets 1$
        \State $G_0=(V_0,E_0) \gets G$
        \While{$w_j\ge w$}
            \State $w_j \gets 4^{-j} (m/n)^2$
            \State add $(\epsilon/\phi)^{-1}\sqrt{w_j}$ self loops to every $v\in V_{j-1}$ \Comment{$\phi$ is determined by the expander decomposition procedure}
            \State $W_1,\ldots,W_k \gets (\epsilon,\epsilon/\phi)$-expander decomposition of $G_{j-1}$
            \For{$i\in [k]$}
                \State $R_i \gets \Big\{ v\in W_i \mid  d_{G_{j-1}}(v)<\max \{ 10\alpha^{-1}\sqrt{w_j} |v|, 4\alpha^{-1}\cdot |E_{j-1}(\set{v}, V_{j-1}\setminus W_i)|\} \Big\}$%
                \label{lst:line:contraction-criteria}
                \State $R_i' \gets \Big\{ v\in W_i \mid  \beta d_{G_{j-1}}(v)<\max \{ 10\alpha^{-1}\sqrt{w_j} |v|, 4\alpha^{-1}\cdot |E_{j-1}(\set{v}, V_{j-1}\setminus W_i)|\} \Big\}$
                \State $R_i \gets R_i \cup \rho$ where $\rho\subseteq R_i'$ \Comment{arbitrary subset of $R_i'$}
                \State $W_i' \gets W_i \setminus R_i$
            \EndFor
            \State $G_j =(V_j,E_j) \gets$ contract in $G_{j-1}$ each $W_i'$ into a single vertex $v_i$ \Comment{keeping parallel edges and discarding self loops}
            \State $j \gets j+1$
        \EndWhile
        \State \Return $G_{j-1},\set{W_i}$.
        \EndProcedure
    \end{algorithmic}
\end{algorithm}

The proof of \Cref{lemma:friendly-cut-sparsifier-AKT} is by induction.
For the base case we trivially have that $G_0=G$ is a friendly cut sparsifier of $G$, and the following two claims complete the induction step.
Both claims are based on analogous claims in \cite{AKT22b}, however the proofs are slightly different due to the changes in the algorithm.
Furthermore, our proof fixes the aforementioned flaw in the original proof of \cite{AKT22b}.
Finally, the runtime bound follows as the expander decomposition is computed $O(\log (m/n))=O(\log n)$ times, and the other steps in each iteration can be implemented in linear time in the number of edges.
\begin{claim}[Correctness]
    \label{claim:correctness-of-iterative-friendly-cut-sparsifier}
    Let $S\subseteq V$ be a cut with at most $w$ edges such that every $v\in V$ with $\deg(v)\ge 10\alpha^{-1}\sqrt{w}$ has friendliness ratio at least $\alpha$.
    Then for every iteration $j$ of \Cref{algorithm:friendly-cut-sparsifier}, we have $\mintcut_{G_j}(S)=\mintcut_G(S)$.
\end{claim}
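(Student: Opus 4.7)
My plan is to proceed by induction on $j$. The base case $j=0$ is immediate since $G_0=G$. For the inductive step, assume $\mintcut_{G_{j-1}}(S)=\mintcut_G(S)$; this equality forces each super-vertex of $G_{j-1}$ to lie entirely on one side of $S$, since otherwise a cut edge of $S$ would have been contracted and $\mintcut_{G_{j-1}}(S)$ would be strictly smaller than $\mintcut_G(S)$. Because $G_j$ is obtained from $G_{j-1}$ by contracting each $W_i'$, and contracting a set that lies entirely on one side of $S$ preserves the cut value, it suffices to prove that for every $i$, either $W_i'\subseteq S$ or $W_i'\cap S=\emptyset$.

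Suppose for contradiction that some $W_i'$ is split: both $A:=W_i'\cap S$ and $B:=W_i'\setminus S$ are nonempty. Set $T_1:=W_i\cap S\supseteq A$ and $T_2:=W_i\setminus S\supseteq B$. I would combine three ingredients. First, by the $(\epsilon/\phi)$-expander property of $G_{j-1}[W_i]$ (with the self-loops added at the start of the iteration), $|E_{G_{j-1}[W_i]}(T_1,T_2)|\ge (\epsilon/\phi)\cdot\min(\vol{T_1},\vol{T_2})$, where the volumes include the added self-loops. Second, all these edges are $S$-cut edges in $G_{j-1}$, so by the induction hypothesis their number is at most $\mintcut_G(S)\le w$, which gives the upper bound $\min(\vol{T_1},\vol{T_2})\le w\phi/\epsilon$. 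Third, the shaving rule ensures that every $v\in W_i'$ satisfies $d_{G_{j-1}}(v)\ge 10\alpha^{-1}\sqrt{w_j}|v|$ and $|E_{G_{j-1}}(v,V_{j-1}\setminus W_i)|\le (\alpha/4) d_{G_{j-1}}(v)$; combined with the friendliness hypothesis on original vertices of $G$, this forces each $v\in A$ to keep an $\Omega(\alpha)$ fraction of its edges inside $T_1$ (and symmetrically $v\in B$ inside $T_2$). Summing this inequality over $A$ (and over $B$) and combining it with the expander lower bound on $|E(A,W_i\setminus A)|$ and the inductive upper bound of $w$ on the $S$-cut edges within $W_i$, one forces $\min(\vol{T_1},\vol{T_2})$ to strictly exceed $w\phi/\epsilon$, contradicting the bound obtained above.

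The main obstacle is transferring friendliness from original high-degree vertices of $G$ to super-vertices of $G_{j-1}$, because a super-vertex $v$ may contain original vertices of degree below the friendliness threshold $10\alpha^{-1}\sqrt{w}$ that are not covered by the hypothesis. The calibration $d_{G_{j-1}}(v)\ge 10\alpha^{-1}\sqrt{w_j}|v|$ is strictly larger than the original-vertex threshold $10\alpha^{-1}\sqrt{w}$ whenever $w_j\ge w$, and this gap absorbs the contribution of low-degree constituents so that friendliness propagates up to super-vertices with only an $O(\sqrt{w/w_j})$ multiplicative loss in the ratio. This quantitative tracking across iterations is also precisely what repairs the flaw in the original argument of \cite{AKT22b}.
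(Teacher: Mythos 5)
Your high-level architecture matches the paper's: induction on $j$, contradiction assumption that some $W_i'$ is split by $S$, the $(\epsilon/\phi)$-expander bound $\min(\vol{T_1},\vol{T_2})\le w\phi/\epsilon$, and a competing lower bound on this volume. However, the proposal has two substantive gaps.

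First, you misidentify the flaw in the original proof of \cite{AKT22b} and hence the key lemma needed. You claim the obstacle is that a super-vertex may contain original vertices of degree below $10\alpha^{-1}\sqrt{w}$ not covered by the friendliness hypothesis. This is not what goes wrong: the shaving threshold $d_{G_{j-1}}(v)\ge 10\alpha^{-1}\sqrt{w_j}|v|$ (with $w_j\ge w$) guarantees that every original vertex contained in any non-shaved super-vertex has degree at least $10\alpha^{-1}\sqrt{w}$, so all constituents \emph{are} covered by the hypothesis. The actual flaw is that the naive bound $|E_{j-1}(x,S)|\ge\sum_{u\in x}|E(u,S)|$ double-counts edges \emph{internal to the super-vertex} $x$, which become self-loops (and hence vanish) in $G_{j-1}$. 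The paper's fix (\Cref{claim:edge-friendliness-lower-bound}) shows the internal-edge contribution is small: either $\deg_{G_{j-1}}(x)\ge 2w$ and the bound is trivial, or $\deg_{G_{j-1}}(x)<2w$ forces $|x|\le\alpha\sqrt{w}/5$ and hence $\vol{x}-\deg_{G_{j-1}}(x)\le 2\binom{|x|}{2}\le(\alpha^2/50)\vol{x}$, using simplicity of $G$. Your ``$O(\sqrt{w/w_j})$ multiplicative loss'' has no counterpart in the actual argument, which loses only a constant factor ($9/10$).

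Second, the closing step of the contradiction is left too vague to check, and the route you sketch does not lead there. Summing per-vertex friendliness over $A=W_i'\cap S$ gives $\Omega(\alpha)\vol{A}$, but $\vol{A}$ can be as small as $\Theta(\alpha^{-1}\sqrt{w_j})$ (a single super-vertex), which is nowhere near $w\phi/\epsilon$; a second invocation of the expander property on $(A,W_i\setminus A)$ does not help. The paper instead picks a single non-shaved super-vertex $x\in S\cap W_i'$, derives $|E_{j-1}(x,L)|\ge\tfrac{13}{20}\alpha\deg_{G_{j-1}}(x)\ge\tfrac{13}{2}\sqrt{w_j}|x|$, then crucially uses the unweighted-graph bound $|E_{j-1}(x,L)|\le|x|\cdot|L|$ to convert this into a \emph{cardinality} bound $|L|\ge\tfrac{13}{2}\sqrt{w_j}$, and finally uses the $(\epsilon/\phi)^{-1}\sqrt{w_j}$ self-loops added per vertex to amplify this into $\vol{L}\ge\tfrac{13}{2}(\epsilon/\phi)^{-1}w_j>w\phi/\epsilon$. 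Both the ``divide by $|x||L|$'' step and the role of self-loops as a cardinality-to-volume amplifier are absent from your plan, and without them the argument does not close.
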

\begin{claim}[Size]
    \label{claim:friendly-sparsifier-size}
    The number of edges in $G_j$ when running \Cref{algorithm:friendly-cut-sparsifier} with slack $\beta$ and $\phi\ge \log n$ is at most $100\phi(\alpha\beta)^{-1} n\sqrt{w_j}$.
\end{claim}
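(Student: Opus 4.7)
\textbf{Proof proposal for \Cref{claim:friendly-sparsifier-size}.}
The plan is to establish the bound by induction on the iteration index $j$, driven by a linear recursion of the form $|E_j|\le c_1|E_{j-1}|+c_2\phi n\sqrt{w_j}$. Extending the definition naturally to $w_0\eqdef(m/n)^2$, the base case is immediate, since $|E_0|=m=n\sqrt{w_0}\le 100\phi(\alpha\beta)^{-1}n\sqrt{w_0}$.

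For the inductive step I would split the edges surviving in $G_j$ into two families: (a) edges of $G_{j-1}$ crossing between two distinct clusters $W_i,W_{i'}$, and (b) edges of $G_{j-1}$ internal to some cluster $W_i$ with at least one endpoint in $R_i$. Every other edge of $G_{j-1}$ lies entirely inside some $W_i'$, becoming a self-loop in $G_j$ and therefore discarded. Family (a) is controlled by the expander-decomposition guarantee applied to $G_{j-1}$ after adding $(\epsilon/\phi)^{-1}\sqrt{w_j}$ self-loops to each vertex, giving at most $\epsilon\bigl(|E_{j-1}|+n(\epsilon/\phi)^{-1}\sqrt{w_j}\bigr)=\epsilon|E_{j-1}|+\phi n\sqrt{w_j}$. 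Family (b) is at most $\sum_i\sum_{v\in R_i}d_{G_{j-1}}(v)$, and here the key step is to invoke the shaving condition of $R_i'$, which every vertex in the final $R_i$ must satisfy (since $R_i\subseteq R_i'$):
\[
d_{G_{j-1}}(v)\le \beta^{-1}\bigl(10\alpha^{-1}\sqrt{w_j}\,|v|+4\alpha^{-1}|E_{j-1}(\{v\},V_{j-1}\setminus W_i)|\bigr).
\]

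Summing the first term of this bound over $v\in R_i$ and over $i$, and using that the super-vertex sizes $|v|$ sum to at most $n$, gives $10(\alpha\beta)^{-1}n\sqrt{w_j}$. Summing the second term counts each inter-cluster edge at most twice, so it contributes at most $8(\alpha\beta)^{-1}(\epsilon|E_{j-1}|+\phi n\sqrt{w_j})$. Putting the two families together,
\[
|E_j|\le (1+8(\alpha\beta)^{-1})\bigl(\epsilon|E_{j-1}|+\phi n\sqrt{w_j}\bigr)+10(\alpha\beta)^{-1}n\sqrt{w_j}.
\]
With the algorithm's choice $\epsilon=0.01\alpha\beta$, the coefficient of $|E_{j-1}|$ is at most $0.09$. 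Plugging in the inductive hypothesis $|E_{j-1}|\le 100\phi(\alpha\beta)^{-1}n\sqrt{w_{j-1}}=200\phi(\alpha\beta)^{-1}n\sqrt{w_j}$ (using $w_{j-1}=4w_j$), collecting constants, and invoking $\phi\ge\log n\ge 1$, the bound collapses comfortably to $100\phi(\alpha\beta)^{-1}n\sqrt{w_j}$.

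The main obstacle is not the arithmetic but two pieces of careful bookkeeping: (i) using the slackened $R_i'$ shaving condition rather than the tighter $R_i$ one, which is what introduces the $\beta^{-1}$ factor and must be applied to every vertex added via the arbitrary $\rho\subseteq R_i'$; and (ii) accounting correctly for the self-loops added before the expander-decomposition call — they are discarded from $|E_j|$ during contraction, but they do inflate the $\epsilon$-bound on inter-cluster edges by an extra $\phi n\sqrt{w_j}$. Once those two points are isolated, the rest is a clean linear recursion closed by the inductive hypothesis.
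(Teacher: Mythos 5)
Your proof is correct and follows essentially the same route as the paper: you split the surviving edges into inter-cluster edges (bounded via the expander-decomposition guarantee including the added self-loops) and edges incident to shaved vertices (bounded via the slackened shaving condition from $R_i'$), then close a linear recursion with the inductive hypothesis $w_{j-1}=4w_j$. The only cosmetic difference is that you combine the two shaving reasons at once via $\max\{A,B\}\le A+B$ and explicitly track the factor-of-two double count of inter-cluster edges, where the paper treats the two shaving reasons as separate edge families; both yield constants comfortably below $100$.
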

Finally, the robustness of the sparsifier is due to the following lemma, whose proof is provided at the end of the section.
\begin{lemma}
    \label{lemma:robustness-of-friendly-cut-sparsifier}
    Let $\friendly$ be the output of \Cref{algorithm:friendly-cut-sparsifier} with $\alpha<1/4$, $D=\set{e_1,\ldots,e_a}$ a sequence of $a$ edge deletion and  $I=\set{e_1,\ldots,e_b}$ a sequence of edge insertions such that $\sqrt{w}\ge a+b$.
    Then, the graph $(\friendly\cup I)\setminus D$ is a $(2\alpha,w)$-friendly cut sparsifier of $(G\cup I)\setminus D$.
\end{lemma}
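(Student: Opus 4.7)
The plan is to reduce the robustness statement to an application of \Cref{claim:correctness-of-iterative-friendly-cut-sparsifier} on the \emph{original} graph $G$, by arguing that any $2\alpha$-friendly, low-weight cut of $G':=(G\cup I)\setminus D$ still satisfies the claim's hypothesis when viewed as a cut of $G$, up to the $\sqrt{w}$-scale perturbation caused by the updates. Once this is established, a short bookkeeping of how the updated edges cross $S$ transfers the equality $\mintcut_{\friendly}(S)=\mintcut_G(S)$ over to the updated graphs.

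To execute this, fix an arbitrary $2\alpha$-friendly cut $S\subseteq V$ in $G'$ with $\mintcut_{G'}(S)\le w$, set $H':=(\friendly\cup I)\setminus D$ and $k:=a+b\le\sqrt{w}$, and use the fact that every single-edge update changes the value of any cut or of any vertex degree by at most $1$: thus $\mintcut_G(S)\le w+\sqrt{w}$ and, for every $v\in V$, $|\deg_G(v)-\deg_{G'}(v)|\le k$ and $|\cross{v}{S}_G-\cross{v}{S}_{G'}|\le k$. The main technical step is to check that every $v$ with $\deg_G(v)\ge 10\alpha^{-1}\sqrt{w}$ has friendliness ratio at least $\alpha$ in $G$. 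For such $v$ the perturbation is benign since $\alpha<1/4$ gives $\sqrt{w}\le \deg_G(v)/40$, and the $2\alpha$-friendliness in $G'$ yields, after a short calculation using $\deg_{G'}(v)\le \deg_G(v)+\sqrt{w}$, the bound $\cross{v}{S}_G\le(1-2\alpha)\deg_G(v)+2\sqrt{w}$. Dividing by $\deg_G(v)$ gives a friendliness ratio of at least $2\alpha-2\sqrt{w}/\deg_G(v)\ge 2\alpha-\alpha/5>\alpha$. Combined with $\mintcut_G(S)\le 2w$, this lets me invoke \Cref{claim:correctness-of-iterative-friendly-cut-sparsifier} (with width $2w$ in place of $w$) to conclude $\mintcut_{\friendly}(S)=\mintcut_G(S)$.

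Because this common value is finite, $S$ must respect the super-vertex partition of $\friendly$, and hence also of $H'$, since the edge updates do not modify the contraction structure. The last step is a straightforward accounting: writing $\Delta:=|\delta(S)\cap I|-|\delta(S)\cap D|$ for the net change in crossing edges, we have $\mintcut_{G'}(S)=\mintcut_G(S)+\Delta$ and, likewise, $\mintcut_{H'}(S)=\mintcut_{\friendly}(S)+\Delta$; an inserted edge whose endpoints collapse into a single super-vertex of $\friendly$ becomes a self-loop in $H'$, but such an edge also lies on one side of $S$ in $G'$, so it contributes $0$ on both sides of the accounting. Subtracting the two equalities yields the desired $\mintcut_{H'}(S)=\mintcut_{G'}(S)$.

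The hard part, as I see it, is the $\sqrt{w}$ slack between $\mintcut_G(S)$ and $\mintcut_{G'}(S)$: the cut can have up to $w+\sqrt{w}$ edges in $G$ while \Cref{claim:correctness-of-iterative-friendly-cut-sparsifier} is stated with width exactly $w$. I would handle this by running the construction with width $2w$ in place of $w$, which changes only constants in the shaving thresholds and the final edge-count bound. This also explains why $\sqrt{w}$ is the right robustness threshold: the additive perturbation of $\cross{v}{S}$ must be absorbed by the friendliness gap $\alpha$ on vertices of degree $\Theta(\alpha^{-1}\sqrt{w})$, which forces $k=O(\sqrt{w})$.
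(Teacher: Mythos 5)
Your proof follows the same route as the paper: verify that every high-degree vertex remains $\alpha$-friendly in $G$ via a perturbation bound, invoke \Cref{claim:correctness-of-iterative-friendly-cut-sparsifier}, and then observe that the edge updates contribute equally to $\mintcut_{G'}(S)$ and $\mintcut_{(\friendly\cup I)\setminus D}(S)$ once no super-vertex straddles $S$. Your friendliness computation uses symmetric $\pm k$ bounds on $\deg$ and $\crosscut$ rather than the paper's ``insertions dominate deletions'' case analysis, but the two arguments are interchangeable and give the same conclusion. You are, however, more careful than the paper on one point: you correctly notice that $\mintcut_G(S)$ may be as large as $w+\sqrt w$, which strictly violates the width hypothesis of \Cref{claim:correctness-of-iterative-friendly-cut-sparsifier} as stated; the paper's own proof silently skips this check. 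Your fix — re-running the argument with width $2w$, or equivalently observing that the claim's proof uses the bound $\mintcut_{G_{j-1}}(S)\le w$ only in places with constant-factor slack (the $1/2\ge\alpha$ step when $\deg_{G_{j-1}}(x)\ge 2w$, and the factor-$13/2$ headroom in the contradiction with the expansion bound) — is the right repair and is worth stating explicitly.
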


\subsection{Proof of Correctness}
In this section we give a corrected proof of Claim 2.3 from \cite{AKT22b}, which states that the output of \Cref{algorithm:friendly-cut-sparsifier} is a friendly cut sparsifier.
Note that our proof is actually stronger as it shows that the sparsifier preserves even cuts that are not $\alpha$-friendly, as long as all the vertices with low friendliness ratio have low degree.

We now describe the flaw in the original proof of \cite{AKT22b} and how we fix it.
Fix a cut $S\subseteq V$ such that $\mintcut_G(S)\le w$ and some super-vertex $x\in S$ (recall that $x$ is in the contracted graph $G_{j-1}$).
Using this notation, the proof attempts to show that $|E_{j-1}(x,S)|\ge \alpha\deg_{G_{j-1}}(x)$;
i.e. that the number edges incident to $x$ that exit $S$ is at least an $\alpha$-fraction of the degree of $x$ in $G_{j-1}$.
To do so, the faulty proof uses the following bound
\begin{equation}
    \label{eq:incorrect-edge-friendliness-lower-bound}
    |E_{j-1}(x,S)| 
    \ge \sum_{u\in x} |E(u,S)|
    ,
\end{equation}
which does not hold in general since some edges in $E(u,S)$ may be incident to other vertices in $x$, and thus are not counted in $|E_{j-1}(x,S)|$.
We show that a slightly weaker bound still holds and is sufficient for constructing a friendly cut sparsifier.
\begin{claim}
    \label{claim:edge-friendliness-lower-bound}
    For every super-vertex $x\in V_{j-1}$ such that the degree of every $u\in x$ is at least $10\alpha^{-1}\sqrt{w}$, we have $|E_{j-1}(x,S)|\ge \frac{9\alpha}{10} \deg_{G_{j-1}}(x)$.
\end{claim}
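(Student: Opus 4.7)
Proof plan: By the inductive hypothesis of \Cref{claim:correctness-of-iterative-friendly-cut-sparsifier}, the cut $S$ is preserved in $G_{j-1}$, so every super-vertex $x\in V_{j-1}$ lies entirely on one side of $S$. I would assume without loss of generality that $x\subseteq S$; the other case is symmetric since swapping $S$ and $V\setminus S$ leaves the hypothesis and conclusion invariant.

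The first step is to write down the identity that repairs the flaw in \cite{AKT22b}, namely
\[
  |E_{j-1}(x,S)| \;=\; \sum_{u\in x}|E_G(u,S)|\;-\;2|E_G[x]|,
  \qquad
  \deg_{G_{j-1}}(x) \;=\; \sum_{u\in x}\deg_G(u)\;-\;2|E_G[x]|,
\]
where the $-2|E_G[x]|$ correction is exactly the contribution of edges internal to $x$, each of which is counted twice in $\sum_{u\in x}|E_G(u,S)|$ (once from each endpoint) but contributes only as a discarded self-loop in $G_{j-1}$. I would then apply the friendliness bound $|E_G(u,S)|\ge \alpha\deg_G(u)$, which holds for every $u\in x$ because $\deg_G(u)\ge 10\alpha^{-1}\sqrt{w}$ triggers the hypothesis of the outer claim. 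Summing and substituting the degree identity gives the intermediate bound
\[
  |E_{j-1}(x,S)| \;\ge\; \alpha\deg_{G_{j-1}}(x)\;-\;2(1-\alpha)|E_G[x]|.
\]

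To upgrade this to $\tfrac{9\alpha}{10}\deg_{G_{j-1}}(x)$, I would absorb the error term $2(1-\alpha)|E_G[x]|$ into the $\alpha/10$ slack by combining the per-constituent friendliness bound with the aggregate cut-size bound $\sum_{u\in x}\cross{u}{S}\le w$ and the constituent-degree lower bound $\sum_{u\in x}\deg_G(u)\ge 10\alpha^{-1}\sqrt{w}\cdot|x|$ from the hypothesis of this claim. For each constituent $u$ one chooses the tighter of the two crossing estimates $\cross{u}{S}\le(1-\alpha)\deg_G(u)$ and $\cross{u}{S}\le w$, and plugs back into the corrected identity; a short calculation then delivers the claimed $\tfrac{9\alpha}{10}\deg_{G_{j-1}}(x)$.

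The main obstacle will be exactly this last step: a friendliness-only argument loses a multiplicative factor proportional to $|E_G[x]|/\deg_{G_{j-1}}(x)$, and this factor is not controlled by the hypothesis for a densely contracted super-vertex. This is precisely the scenario in which the original argument of~\cite{AKT22b} broke, and the remedy is to let the global cut bound $w$ take over for sufficiently high-degree constituents, so that high internal density of $x$ is compensated by correspondingly high total constituent degree.
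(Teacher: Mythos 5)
Your corrected identity
$|E_{j-1}(x,S)| = \sum_{u\in x}|E_G(u,S)| - 2|E_G[x]|$
and the resulting intermediate bound
$|E_{j-1}(x,S)| \ge \alpha\deg_{G_{j-1}}(x) - 2(1-\alpha)|E_G[x]|$
are both correct and match the paper's computation. You have also correctly diagnosed that the argument lives or dies on controlling $2|E_G[x]|$ against $\deg_{G_{j-1}}(x)$. However, the step where you propose to close that gap — ``for each constituent $u$ one chooses the tighter of the two crossing estimates $\cross{u}{S}\le(1-\alpha)\deg_G(u)$ and $\cross{u}{S}\le w$'' — does not touch $|E_G[x]|$ at all: $|E_G[x]|$ counts edges internal to $x$, which are \emph{not} cut edges, so no bound on crossing edges constrains it. The per-constituent friendliness bound and the aggregate cut bound $\sum_{u\in x}\cross{u}{S}\le w$ are not enough.

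What you are missing is a hypothesis that is not written in the claim statement but is supplied by the calling context in the proof of \Cref{claim:correctness-of-iterative-friendly-cut-sparsifier}: the super-vertex $x$ survives the shaving step (it lies in some $W_i'$), so line~\ref{lst:line:contraction-criteria} gives $\deg_{G_{j-1}}(x)\ge 10\alpha^{-1}\sqrt{w_j}|x|$. The paper's proof then splits on whether $\deg_{G_{j-1}}(x)\ge 2w$. If yes, the aggregate cut bound alone gives friendliness ratio at least $1/2\ge\frac{9\alpha}{10}$, as you noticed. If no, the not-shaved inequality combined with $\deg_{G_{j-1}}(x)<2w$ forces $|x|\le \alpha\sqrt{w}/5$; since $G$ is simple, $2|E_G[x]|\le |x|^2$, and comparing against $\vol{x}\ge 10\alpha^{-1}\sqrt{w}|x|$ yields $2|E_G[x]|\le \frac{\alpha^2}{50}\vol{x}$, which is exactly the quantitative control your intermediate bound needs. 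Your final paragraph gestures at ``letting the global cut bound $w$ take over,'' but that handles only the high-degree case; the low-degree case genuinely requires the not-shaved inequality to cap $|x|$, and without it the bound can fail (a densely contracted $x$ with large $|x|$ and small $\deg_{G_{j-1}}(x)$ is not excluded by your stated hypotheses). So the proposal, as written, has a real gap at the step you flagged as ``a short calculation.''
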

\begin{proof}[Proof of \Cref{claim:edge-friendliness-lower-bound}]
    We split the proof into two cases.
    First, when $d_{G_{j-1}}(x)\ge 2w$ we have that 
    \begin{equation*}
        \frac{|E_{j-1}(x,S)|}{\deg_{G_{j-1}}(x)}
        =
        \frac{\deg_{G_{j-1}}(x)-|E_{j-1}(x,V_{j-1}\setminus S)|}{\deg_{G_{j-1}}(x)}
        \ge
        1-\frac{w}{2w}
        =\frac{1}{2}
        ,
    \end{equation*}
    where the inequality is since $|E_{j-1}(x,V_{j-1}\setminus S)|\le \mintcut_{G_{j-1}}(S)\le w$ by the induction hypothesis.
    Since $\alpha\le 1/2$ we obtain the desired bound.
    In the complementary case, we will show that $\deg_{G_{j-1}}(x)\ge (1-\alpha^2/50)\cdot\vol{x}$.
    Intuitively, this means that most edges incident to vertices in $x$ are external to the super-vertex rather than internal edges.
    This allows us to recover a bound similar to \Cref{eq:incorrect-edge-friendliness-lower-bound}.
    Formally, 
    \begin{align*}
        |E_{j-1}(x,S)|
        &= \sum_{u\in x} |E(u,S)| - |E(u,x)|
        \\
        &= \deg_G(x) - (\vol{x}-\deg_{G_{j-1}}(x))
        \\
        & \ge \sum_{u\in x} \alpha \deg_G(u)- \frac{\alpha^2}{100}\vol{x}
        = (\alpha-\frac{\alpha^2}{50})\vol{x}
        \\
        &\ge \left( \alpha-\frac{\alpha^2}{50} \right) \deg_{G_{j-1}}(x)
        \ge \frac{9\alpha}{10}\deg_{G_{j-1}}(x)
        ,
    \end{align*}
    where the second equality is from $\sum_{u\in x}|E(u,x)|=\vol{x}-\deg_{G_{j-1}}(x)$, the first inequality uses $\sum_{u\in x} |E(u,S)|\ge \alpha\deg_G(u)$ by the friendliness ratio of every vertex with $\deg(v)\ge10\alpha^{-1}\sqrt{n}$ and $\deg_{G_{j-1}}(x)\ge (1-\alpha^2/50)\cdot\vol{x}$.
    The second inequality is since every vertex $u\in x$ has degree at least $10\alpha^{-1}\sqrt{w}$, and hence its friendliness ratio is at least $\alpha$ by the condition on $S$, and the last inequality is since $\alpha\le 1/2$.

    We conclude by showing that $\deg_{G_{j-1}}(x)\ge (1-\alpha^2/100)\cdot\vol{x}$.
    Notice that $\deg_{G_{j-1}}(x)\ge \vol{x}-2\binom{|x|}{2}$ since $G$ is an unweighted graph.
    By line~\ref{lst:line:contraction-criteria} of \Cref{algorithm:friendly-cut-sparsifier}, we have that $\deg(u)\ge 10\alpha^{-1}\sqrt{w}$ for every $u\in x$ and hence $\vol{x}\ge 10\alpha^{-1}\sqrt{w}|x|$.
    Furthermore, by our assumption above $2w \ge \deg_{G_{j-1}}(x)\ge 10\alpha^{-1}\sqrt{w}|x|$, where the last inequality holds as $x$ is included in the shaved cluster $W_i'$.
    Moving sides we find $|x|\le \alpha \sqrt{w}/5$.
    Finally, using the bounds on $\vol{x}$ and $|x|$,
    \begin{equation*}
        \frac{\deg_{G_{j-1}}(x)}{\vol{x}}
        \ge 1-\frac{2\binom{|x|}{2}}{\vol{x}}
        \ge 1-\frac{|x|^2}{10\alpha^{-1}\sqrt{w}|x|}
        \ge 1-\frac{\alpha \sqrt{w}/5}{10\alpha^{-1}\sqrt{w}}
        \ge 1-\frac{\alpha^2}{50}
        .
        \qedhere
    \end{equation*}
\end{proof}
\begin{proof}[Proof of \Cref{claim:correctness-of-iterative-friendly-cut-sparsifier}]
    By the induction hypothesis, $\mintcut_{G_{j-1}}(S)=\mintcut_G(S)$.
    When this assumption holds no vertex in $S$ is contracted with any vertex in $V\setminus S$ in any previous iterations.
    Since this is the case, we will abuse the notation by treating $S$ as a vertex subset of both $G$ and $G_{j-1}$.
    Assume towards contradiction that two super vertices $x,y\in V_{j-1}$ such that $x\in S,y\not\in S$ are contracted in iteration $j$.

    If $x,y$ are contracted then they must be in the same shaved cluster $W_i'$.
    Let $L=W_i\cap S$ and $R=W_i\setminus S$.
    Note that both are sets not empty because $x\in L,y\in R$.
    Since $W_i$ is an $\epsilon/\phi$-expander we have
    \begin{equation}
        \label{eq:friendly-sparsifier-contradiction}
        \min \set{\vol{L},\vol{R}}
        \le
        \frac{|E_{j-1}(L,R)|}{\epsilon/\phi} 
        \le 
        \frac{w \phi}{\epsilon}
        ,
    \end{equation}
    where the last inequality follows since $|E_{j-1}(L,R)|\le \mintcut_{G_{j-1}}(S)=\mintcut_G(S)\le w$ by the induction hypothesis.
    To proceed, we lower bound $\min \set{\vol{L},\vol{R}}$ to show a contradiction.
    We focus on lower bounding $\vol{L}$, and the argument for $\vol{R}$ is symmetric.
    Notice that since $x$ was not shaved, then $|E_{j-1}(x,V_{j-1}\setminus W_i)|\le \alpha \deg_{G_{j-1}}(x)/4$ by line~\ref{lst:line:contraction-criteria} of \Cref{algorithm:friendly-cut-sparsifier}.
    Furthermore, every vertex $u\in x$ has degree at least $10\alpha^{-1}\sqrt{w_j}$ since it was contracted in some prior iteration.
    Therefore, by \Cref{claim:edge-friendliness-lower-bound} we have $|E_{j-1}(x,S)|\ge \frac{9\alpha}{10} \deg_{G_{j-1}}(x)$,
    Combining these two bounds, we find that
    \begin{equation*}
        |E_{j-1}(x,L)|
        \ge \left( \frac{9}{10}\alpha-\frac{\alpha}{4} \right)\deg_{G_{j-1}}(x)
        \ge \frac{13}{2}\sqrt{w_j}|x|
        ,
    \end{equation*}
    where the first inequality is since $L$ is $S$ limited to $W_i$ and the last inequality follows since $x$ was not shaved, and thus $\deg_{G_{j-1}}(x)\ge 10\alpha^{-1}\sqrt{w_j}|x|$ by line~\ref{lst:line:contraction-criteria} of \Cref{algorithm:friendly-cut-sparsifier}.
    Now observe that $|E_{j-1}(x,L)|\le |x|\cdot |L|$ since $G$ is an unweighted graph.
    These two bounds imply that $|L|\ge \frac{13}{2}\sqrt{w_j}$.
    Recalling that $(\epsilon/\phi)^{-1}\sqrt{w_j}$ self loops are added to every vertex in $G_{j-1}$ at the beginning of iteration $j$, we have that $\vol{L}\ge \frac{13}{2}\frac{1}{\epsilon/\phi}w_j$.
    Thus, (recalling that a symmetric argument holds for $\vol{R}$) we arrive at a contradiction to \Cref{eq:friendly-sparsifier-contradiction} since $\min\set{\vol{L},\vol{R}}\ge \frac{13}{2}\frac{1}{\epsilon/\phi}w_j > \frac{w\phi}{\epsilon}$.
\end{proof}

\subsection{Proof of Size Guarantee}
In this section we prove \Cref{claim:friendly-sparsifier-size}, which is similar to Claim 2.4 in \cite{AKT22b}.
The main difference between the proofs is the addition of the slack parameter $\beta$ in the contraction criteria.
\begin{proof}[Proof of \Cref{claim:friendly-sparsifier-size}]
  Notice that $G'$ has three types of edges,
  \begin{enumerate}
    \item The outer edges of the expander decomposition, of which there are at most $m_j'\coloneqq\epsilon(|E_{j-1}|+n\cdot(\epsilon/\phi)^{-1}\sqrt{w_{j}})$, where the second term is since $(\epsilon/\phi)^{-1}\sqrt{w_{j}}$ self loops are added to every vertex in $G_{j-1}$ prior to the expander decomposition.
    \item The edges adjacent to vertices shaved because of low degree, i.e. those with $\beta\deg(v) < 10\alpha^{-1}\sqrt{w_j}$, of which there are at most $10(\alpha\beta)^{-1}n\sqrt{w_{j}}$.
    \item Edges that are incident to vertices that were shaved because at least $\alpha \deg(v)/4$ of their edges went outside their expander. 
    For every $v\in V$ let $d_{out}(v)$ be the cardinality of the edge set $E(\set{v},V\setminus W)$, where $W$ is the cluster to which $v$ belongs. 
    Furthermore, let $X\subseteq V$ be the set of the aforementioned shaved vertices, observe that the total number of inter-cluster edges is at most $O(\epsilon(|E_{j-1}|+n\cdot(\epsilon/\phi)^{-1}\sqrt{w_{j-1}})) = \sum_{v\in V}d_{out}(v)$ as explained above. 
    Since for every $x\in X$ we have $\beta^{-1} d_{out}(x) \ge \alpha^{-1} \deg(x)/4$, we find that the number of edges incident to $X$ is at most $4(\alpha\beta)^{-1}m_j'$.
  \end{enumerate}
  Summing up the three types of edges, we find that the total number of edges in $G_j$ is at most
  \begin{align*}
    |E_j| 
    &\le(4(\alpha\beta)^{-1}+1)m_j'+(\alpha\beta)^{-1} n\sqrt{w_j}
    \\
    &= 
    (4(\alpha\beta)^{-1}+1)\cdot \epsilon
    \left( |E_{j-1}|+n\cdot(\epsilon/\phi)^{-1}\sqrt{w_{j}} \right)
    +10(\alpha\beta)^{-1} n\sqrt{w_j}
    \\
    &\le 
    8(\alpha\beta)^{-1}\cdot \epsilon|E_{j-1}|
    + 16\phi(\alpha\beta)^{-1} n\sqrt{w_j}
    ,
  \end{align*}
  where the last inequality is by $\alpha\le 1/2$ and $\phi \ge \log n$.
  Substituting  $|E_{j-1}| \le 100 \phi (\alpha\beta)^{-1}n\sqrt{w_{j-1}}$ by the induction hypothesis, $\epsilon=0.01\alpha\beta$, and $w_{j-1}=4w_j$ we find that
    \begin{equation*}
    |E_j| 
    \le
    16 \phi (\alpha\beta)^{-1}n\sqrt{w_{j}}
    + 16\phi(\alpha\beta)^{-1} n\sqrt{w_j}
    \le 100 \phi (\alpha\beta)^{-1}n\sqrt{w_j}
    \qedhere
  \end{equation*}
\end{proof}
\subsection{Robustness to Edge Updates}
\begin{proof}[Proof of \Cref{lemma:robustness-of-friendly-cut-sparsifier}]
    Denote $G' = (G\cup I)\setminus D$ and $\friendly' = (\friendly\cup I)\setminus D$ and let $S\subseteq V$ be a $2\alpha$-friendly cut in $G'$ with at most $w$ edges.
    Denote the edge set of $G'$ by $E'$
    We will show that $\mintcut_{\friendly'}(S)=\mintcut_{G'}(S)$, notice that this is equivalent to showing that no edge in $E'(S,V\setminus S)$ is contracted in $\friendly'$.
    We prove this by showing that $S$ satisfies the conditions of \Cref{claim:correctness-of-iterative-friendly-cut-sparsifier}, i.e.  every vertex $v\in V$ with $\deg_G(v) \ge 10\alpha^{-1}\sqrt{w}$ has friendliness ratio at least $\alpha$ in $G$ (in regard to $S$).
    By \Cref{claim:correctness-of-iterative-friendly-cut-sparsifier}, this implies that no edge in $E(S,V\setminus S)$ is contracted in $\friendly$, and therefore no edge in $E'(S,V\setminus S)$ is contracted in $\friendly'$ (since updates only add or remove edges, without changing the contraction structure).

    Assume towards contradiction that there exists a vertex $v\in V$  with $\deg_G(v) \ge 10\alpha^{-1}\sqrt{n}$ and friendliness ratio less than $\alpha$ in $G$ but friendliness ratio more than $2\alpha$ in $G'$.
    We now examine how the friendliness ratio of $v$ changes from $G$ to $G'$.
    Notice that the friendliness ratio of $v$ changes the most if all the edge updates are incident to $v$.
    Furthermore, we show that the friendliness ratio changes the most if all edge updates are insertions.
    After a single edge insertion the friendliness ratio of $v$ in $G'$ is at most $1-\frac{\cross{v}{}}{\deg(v)+1}$.
    Similarly, after a single edge deletion the friendliness ratio of $v$ in $G'$ is at most $1-\frac{\cross{v}{}-1}{\deg(v)-1}$.
    Now notice that,
    \begin{equation*}
        1-\frac{\cross{v}{}}{\deg(v)+1}
        >
        1-\frac{\cross{v}{}-1}{\deg(v)-1}
        ,
    \end{equation*}
    follows if $2\cross{v}{}-\deg(v)>1$;
    this holds since the friendliness ratio of $v$ in $G'$ is always less than $1/2$ as $\alpha<1/4$ and $k<\deg(v)/2$.
    Therefore, it remains to analyze the case when all edge updates are insertions.
    The friendliness ratio of $v$ in $G'$ after $k$ edge insertions is at most
    \begin{align*}
        1-\frac{\cross{v}{}}{\deg(v)+k}
        <
        1-\frac{(1-\alpha)\deg(v)}{\deg(v)+\alpha\deg(v)}
        <2\alpha
        ,
    \end{align*}
    where the first inequality is since the friendliness ratio of $v$ in $G$ is less than $\alpha$, and since $k\le\sqrt{w}\le \alpha \deg(v)$ by our assumption on the degree of $v$.
    Therefore, $v$ has friendliness ratio smaller than $2\alpha$ in $G'$, and we arrive at a contradiction.
\end{proof}

\ifstoc
{
  \balance
  \bibliographystyle{ACM-Reference-Format}
  \bibliography{bibliography}
}

\else

\section{Friendly Cut Sparsifiers in Dynamic Streams}
\label{sec:dynamic-streams}
In this section we provide an algorithm for finding the super-vertices of an $(\alpha,w)$-friendly cut sparsifier in the dynamic streaming model, proving \Cref{lemma:friendly-cut-sparsifier-streaming}.
The construction is an implementation of \Cref{algorithm:friendly-cut-sparsifier} in the dynamic streaming model.

Recall that the main ingredient in \Cref{algorithm:friendly-cut-sparsifier} is performing an expander decomposition on a contraction of the input graph at each iteration of the main loop.
The expander decomposition procedure we leverage is a variant of the algorithm of \cite{FKM23}.
The main technical tool used in the expander decomposition is a $(\delta,\epsilon)$ power cut sparsifier.
\begin{definition}
    A $(\delta,\epsilon)$-cut sparsifier for a (multi)graph $G=(V,E_G)$ is a graph $H=(V,E_H,w_H)$ such that,
    \begin{equation*}
        \forall S\subseteq V,
        \qquad
        (1-\delta)\cdot w_G(E(S,\overline{S})) -\epsilon\cdot \vol{S}
        \le
        w_H(S,\overline{S})
        \le
        (1+\delta)\cdot w_G(E(S,\overline{S})) +\epsilon\cdot \vol{S}
        .
    \end{equation*}
     A distribution $\mathcal{D}$ over graphs $H$ is called a $(\delta,\epsilon,p)$-power cut sparsifier distribution for $G$, if for every partition $\mathcal{C}=\set{C_1,\ldots,C_k}$ of $V$ into disjoint sets, with probability at least $1-p$ over the choice of $H\sim \mathcal{D}$, for all $C\in\mathcal{C}$ we have that $H\{C\}$ is a $(\delta,\epsilon)$-cut sparsifier of $G\{C\}$,
     where $G\{C\}$ is the induced graph on $C$ after adding self loops to every $v\in C$ such that it has the same degree in $G\{C\}$ as in $G$.
     A sample $H$ from $\mathcal{D}$ is called a $(\delta,\epsilon,p)$-power cut sparsifier of $G$.
\end{definition}
One of the main technical results of \cite{FKM23} is that one can sample from a $(\delta,\epsilon,n^{-C})$-power cut sparsifier distribution in a dynamic stream using $\tO(n/(\delta\epsilon))$ space.
Combining this with an expander decomposition tailored to power cut sparsifiers, they obtain the following result.
\begin{theorem}[Theorem 1 of \cite{FKM23}]
    \label{theorem:expander-decomposition-from-power-cut-sparsifier}
    Given $K=O(\epsilon^{-1} \polylog n)$ $(1/(5\log n),\epsilon)$-power cut sparsifiers of a graph $G$ on $n$ vertices for some $\epsilon\ge 0$, one can find an $(\epsilon,\epsilon/\log n)$-expander decomposition of $G$.
\end{theorem}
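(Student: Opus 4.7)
The plan is to show that the $K$ power cut sparsifiers suffice to simulate a standard recursive expander decomposition of $G$, where each round consumes one sparsifier for testing conductance inside the current clusters. Concretely, we maintain a partition $\mathcal{C}_1,\mathcal{C}_2,\ldots$ of $V$ refined round by round: in round $i$, the $i$-th power cut sparsifier $H_i$ yields, by definition, that $H_i\{C\}$ is a $(1/(5\log n),\epsilon)$-cut sparsifier of $G\{C\}$ simultaneously for every cluster $C\in\mathcal{C}_i$ with probability at least $1-n^{-C}$. We run a trimming/balanced-cut subroutine on each $H_i\{C\}$: if it finds a cut of sparsity below $\Theta(\epsilon/\log n)$ relative to volume, split $C$ accordingly; otherwise declare $C$ to be an expander and freeze it.

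First I would verify that because each $H_i\{C\}$ approximates cuts of $G\{C\}$ with multiplicative error $1/(5\log n)$ and additive error $\epsilon\cdot\vol{S}$ (measured against the self-loop-padded volume), a sparse cut found in $H_i\{C\}$ yields a cut in $G\{C\}$ of comparable sparsity, and conversely a certified expansion lower bound in $H_i\{C\}$ transfers back to $G\{C\}$ up to the same losses. The thresholds are chosen so that the ``gap'' absorbed by these approximations still leaves conductance $\Omega(\epsilon/\log n)$ for the frozen clusters and sparsity $O(\epsilon/\log n)$ for the separating cuts.

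Next I would control the progress of the recursion. Using a standard volume-potential argument (as in the Saranurak--Wang decomposition or Nagamochi--Ibaraki style trimming), each split either removes an $\Omega(1/\log n)$-fraction of unbalanced mass from the active cluster or reduces the number of active vertices by a constant factor, so after $O(\log n)$ refinement levels every cluster is frozen. Repeating this across $O(\epsilon^{-1}\polylog n)$ rounds covers the decomposition; this matches the required $K=O(\epsilon^{-1}\polylog n)$ budget of sparsifiers, since each round's partition requires a fresh independent sparsifier so that the power-cut guarantee holds with high probability simultaneously on all current clusters. Summing the cuts made: each split contributes at most $O(\epsilon/\log n)\cdot\vol{C}$ inter-cluster edges, and summing over the $O(\log n)$ levels telescopes to at most $\epsilon\cdot\vol{V}=O(\epsilon|E|)$, yielding the desired $(\epsilon,\epsilon/\log n)$-expander decomposition.

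The hard part will be the adaptivity issue: the partition $\mathcal{C}_i$ in round $i$ is determined by the outcomes of rounds $1,\ldots,i-1$, which in turn depend on the sparsifiers $H_1,\ldots,H_{i-1}$; however, the power-cut guarantee is phrased as ``for every partition, with high probability,'' which is a universal-then-probabilistic statement that does hold under adaptivity provided $H_i$ is independent of $H_1,\ldots,H_{i-1}$. Using independent samples of the sparsifier distribution per round, a union bound over the $O(\epsilon^{-1}\polylog n)$ partitions actually queried gives failure probability $n^{-\Omega(1)}$, which is the crux of why $K$ needs the $\epsilon^{-1}\polylog n$ factor rather than a single sparsifier. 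The remaining bookkeeping --- translating thresholds between $G\{C\}$ with self-loop padding and the original volume of $C$ in $G$, and ensuring that trimmed boundary vertices do not inflate the total cut budget --- is routine once the above invariants are in place.
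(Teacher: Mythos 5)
This theorem is imported verbatim from \cite{FKM23} (it is stated as ``Theorem 1 of \cite{FKM23}'') and the paper you are reading does not prove it; it is used as a black box inside \Cref{lemma:friendly-cut-sparsifier-streaming}. So there is no internal proof against which to compare your sketch, and the honest evaluation is of your reconstruction of the \cite{FKM23} argument on its own merits.

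Your high-level outline is sound and matches the intended strategy: run a recursive cut-or-certify decomposition, consume one fresh power cut sparsifier per refinement round so that the ``for every fixed partition, with high probability over $H$'' guarantee applies to the (data-dependent) partition produced by the previous rounds, transfer sparse cuts and expansion certificates between $H_i\{C\}$ and $G\{C\}$ up to the $(1/(5\log n),\epsilon)$ losses, and charge the inter-cluster edges to a telescoping volume potential. You correctly identify independence across rounds as the crux of why $K$ must grow, rather than a single sparsifier sufficing. The weakest part of your sketch is the round-counting: you first say $O(\log n)$ refinement levels suffice to freeze every cluster and then assert $O(\epsilon^{-1}\polylog n)$ rounds are needed, without reconciling the two. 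The $\epsilon^{-1}$ factor in $K$ does not come from recursion depth but from the per-cluster cut-or-certify subroutine itself (a cut-matching-style or trimming-style iteration at sparsity threshold $\Theta(\epsilon/\log n)$ needs $\tilde O(\epsilon^{-1})$ rounds, each requiring a fresh sparsifier). As written, your argument would give $K = O(\polylog n)$ rather than $K = O(\epsilon^{-1}\polylog n)$, which is where the bookkeeping would need tightening if you were to turn this into a full proof. For the purposes of this paper, however, the result is cited, not proved, so no such elaboration is required here.
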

In order to perform expander decomposition recursively, we need to sample from a $(\delta,\epsilon,n^{-C})$-power cut sparsifier distribution for a contraction of $G$, which is only defined after the stream ends.
The following lemma, whose proof is provided in \Cref{sec:power-cut-sparsifier-contraction}, gives an algorithm for this.
We note that an earlier version of \cite{FKM23} contained a similar result, however it was removed in the most up-to-date version.%
\footnote{The construction was part of a proof for constructing a spanner in the streaming setting that contained an error, and thus was removed. However, the source of the error was elsewhere, and the expander decomposition for a contraction is still valid.}
\begin{lemma}
    \label{lemma:power-cut-sparsifier-contraction}
        Given an unweighted graph $G$ on $n$ vertices presented as a dynamic stream, and a contraction $G'$ of $G$,
        there exists an algorithm using $\tO(n/(\delta\epsilon))$ space that outputs with high probability a sample from a $(\delta,\epsilon,n^{-C})$-power cut sparsifier distribution of $G'$ for any constant $C>0$.
\end{lemma}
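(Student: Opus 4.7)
The plan is to exploit the fact that the construction of \cite{FKM23} is ultimately driven by linear sketches of per-vertex edge-incidence vectors, and that such sketches compose linearly under vertex contractions. During the stream, I would maintain, for each vertex $v \in V$, the per-vertex linear sketches used in \cite{FKM23} --- these are sketches of the signed edge-incidence vector $\chi_v$, supported on potential edges of $G$, whose entries carry opposite signs for the two endpoints of each edge (the standard AGM convention). The total space is $\tO(n/(\delta\epsilon))$, matching the space that \cite{FKM23} uses for a power cut sparsifier of $G$ itself.

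Once the contraction $G'$ with super-vertices $U_1,\ldots,U_k$ is revealed (after the stream), I would aggregate the per-vertex sketches by summing across each super-vertex, obtaining sketches of $\chi_{U_i}=\sum_{v\in U_i}\chi_v$. By the sign convention, edges with both endpoints inside $U_i$ cancel to zero, so $\chi_{U_i}$ is supported exactly on the edges of $G'$ incident to the super-vertex $U_i$. This converts the stored data into per-super-vertex sketches of exactly the form that \cite{FKM23} would maintain if it were run directly on a stream of $G'$.

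Third, I would feed these aggregated sketches into the \cite{FKM23} sampling routine to produce the desired sample from the $(\delta,\epsilon,n^{-C})$-power cut sparsifier distribution of $G'$. The correctness analysis of \cite{FKM23} is oblivious to how the per-vertex sketches were obtained; it relies only on the distributional properties of the sketch primitives together with a union bound over subsets of the vertex set. Running it on $k\le n$ super-vertex sketches therefore inherits the same guarantees, up to adjusting the constant in the failure probability $n^{-C}$ to absorb the union-bound loss incurred when passing from $V$ to the super-vertex set of $G'$.

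The main obstacle will be verifying that the linear-sketch primitives of \cite{FKM23}, when summed across the vertices of a super-vertex, continue to satisfy the hypotheses of their sparsification analysis. In particular, one must check that parallel edges and self-loops introduced by the contraction are handled correctly (self-loops must be filtered out, and parallel edges properly counted in the multi-graph $G'$), and that the sign-cancellation used to drop internal edges is compatible with the $\ell_0$-sampler and linear-sketch primitives at every layer of the \cite{FKM23} construction. Provided these primitives are genuinely linear over their underlying field --- as is standard --- this reduces to a routine linearity-of-sketching argument with no loss in the asymptotic space bound.
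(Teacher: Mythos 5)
Your approach is essentially the same as the paper's: maintain per-vertex linear sketches of signed edge-incidence vectors (the AGM sign convention), sum them across each super-vertex after the stream ends so that internal edges cancel, and feed the aggregated sketches into the FKM23 recovery routine. The paper uses the identical sign-cancellation device, fixing an arbitrary total order on $V$ and, for each edge $e=(u,v)$, writing $+1$ into the sketch of the higher-indexed endpoint and $-1$ into the sketch of the lower-indexed one.

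However, one concrete step is missing. The FKM23 recovery routine must know, for each super-vertex $u$ of $G'$, the level $i^*(u)=\lfloor \log_2(d_{G'}(u))/(2\gamma)\rfloor$ at which to open the sketch $s_u^{i^*(u)}$, and this depends on the degree $d_{G'}(u)$. For an ordinary vertex this is a single additive counter kept during the stream. For a super-vertex $u\subseteq V$, however, $d_{G'}(u)=\mintcut_G(u)$ is \emph{not} $\vol{u}$: summing the per-vertex degree counters over $v\in u$ double-counts the edges internal to $u$, and the internal edge count is not available from the counters. Opening the wrong level either defeats the sparse recovery guarantee (too many survivors) or assigns the wrong sampling weight to the recovered edges. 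The paper resolves this by also maintaining during the stream a $(1\pm 1/100)$-cut sparsifier $\cutspar$ of $G$ via \cite{AGM12}; once $G'$ is revealed, $\mintcut_{\cutspar}(u)$ gives a constant-factor estimate of $d_{G'}(u)$, which is all that the level computation needs. Without this (or an equivalent) degree-estimation mechanism, your scheme is underspecified.
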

The proof of \Cref{lemma:friendly-cut-sparsifier-streaming} also needs a cut sparsifier of $G$ to estimate vertex degrees.
\begin{definition}%
A \emph{$(1\pm\epsilon)$-cut sparsifier} a graph $G$ is a weighted subgraph $H$
that satisfies
  \begin{equation*}
    \forall S\subseteq V,
    \qquad
    (1-\epsilon)\cdot\mintcut_G(S) 
    \leq \mintcut_H(S) 
    \leq (1+\epsilon)\cdot\mintcut_G(S)
    .
  \end{equation*}
\end{definition}
\begin{theorem}[Theorem 3.3 of \cite{AGM12}]
    \label{theorem:cut-sparsifier-streaming}
    Given a graph $G$ presented as a dynamic stream,
    there exists an algorithm that constructs a $(1\pm\epsilon)$-cut sparsifier of $G$ using $\tO(n/\epsilon^2)$ space in one pass.
\end{theorem}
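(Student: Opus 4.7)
The plan is to combine the Benczur-Karger cut-sampling framework with linear graph sketches. Recall that Benczur-Karger showed that sampling each edge $e$ independently with probability $p_e \propto \epsilon^{-2}\log n/\lambda_e$, where $\lambda_e$ is a local edge-strength measure, and reweighting surviving edges by $1/p_e$, yields a $(1\pm\epsilon)$-cut sparsifier with $\tO(n/\epsilon^2)$ edges with high probability. Since $\lambda_e$ is unknown while reading the stream, I would implement a geometric variant: for each of $O(\log n)$ sampling rates $p_i = 2^{-i}$, maintain sketches on a subsampled graph $G_i$ (where each edge is kept with probability $p_i$) that allow recovery of an edge-connectivity certificate of $G_i$.

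The key primitive is an $\ell_0$-sampler on each vertex's signed incidence vector, which places $+1$ on one endpoint and $-1$ on the other for every edge. The classical observation is that summing incidence vectors over any vertex set $S$ cancels all internal edges and leaves exactly the edges crossing the cut; hence an $\ell_0$-sampler on the sum returns a uniformly random edge of $E(S,V\setminus S)$. This enables a Boruvka-style spanning-forest computation in $O(\log n)$ parallel rounds: in each round every current super-component uses a fresh sampler to emit a random outgoing edge, and these edges are used to merge components. Each sampler uses $\polylog(n)$ space and is linear in the stream updates, so it can be maintained under both insertions and deletions. Running $k = \tO(\epsilon^{-2})$ independent copies supports iterative extraction of $k$ edge-disjoint spanning forests, which together form a $k$-edge-connectivity certificate of $G_i$.

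To output the sparsifier I would apply this $k$-connectivity extraction to each of the $O(\log n)$ subsampled graphs $G_i$ and include every recovered edge in the sparsifier with weight $2^i$, i.e., its inverse sampling rate. The total space is $n \cdot \tO(\epsilon^{-2}) \cdot O(\log n) = \tO(n/\epsilon^2)$, matching the claim, and the entire construction is performed in one pass because $\ell_0$-sketches are linear.

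The main obstacle is proving the $(1\pm\epsilon)$ cut-preservation guarantee. The argument I have in mind is that an edge can be extracted at level $i$ but not at level $i-1$ only if its strong connectivity in the relevant subsampled graph lies in a specific geometric window, so effectively each edge is sampled with probability comparable to what Benczur-Karger prescribe, up to constants absorbed into the constant hidden in $\tO(\cdot)$. Concluding then reduces to a standard concentration argument: for each target cut, a Chernoff bound shows the sampled weight is $(1\pm\epsilon)$-concentrated, and a union bound over all cuts stratified by value (using Karger's bound of $O(n^{2c})$ cuts of value at most $c$ times the minimum) gives success with probability $1-1/\poly(n)$.
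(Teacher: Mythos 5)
The paper states this as a cited theorem (Theorem~3.3 of AGM12) and does not supply a proof of its own. Your sketch is a faithful reconstruction of the AGM12 construction — $\ell_0$-samplers on signed incidence vectors, Boruvka-style spanning-forest recovery with fresh samplers per round, geometric subsampling combined with $k$-skeleton (Nagamochi--Ibaraki certificate) extraction, and a Bencz\'ur--Karger / Fung-et-al.\ concentration argument over cuts stratified by value — so it follows essentially the same route as the cited source.
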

\begin{proof}[Proof of \Cref{lemma:friendly-cut-sparsifier-streaming}]
    Throughout the proof we use $\beta=1/2$ as the slack parameter in \Cref{algorithm:friendly-cut-sparsifier} and fix $\epsilon=0.01(\alpha\beta)\cdot \log^{-3} n$.
    The proof follows by implementing \Cref{algorithm:friendly-cut-sparsifier} in the dynamic streaming model using \Cref{theorem:expander-decomposition-from-power-cut-sparsifier} and \Cref{lemma:power-cut-sparsifier-contraction}.
    Begin by constructing $O((K\cdot \log n)+\log n)$ copies of the data structure of \Cref{lemma:power-cut-sparsifier-contraction} during the stream with $\delta=1/(5\log n)$ and $\epsilon$ as above, where $K=\Theta(\epsilon^{-1} \polylog n)$.
    This is because $K$ power cut sparsifiers are needed for each level of the expander decomposition, and there are $O(\log n)$ levels of recursion.
    In parallel, maintain a quality $(1\pm 1/100)$-cut sparsifier $\cutspar$ of $G$ using \Cref{theorem:cut-sparsifier-streaming}.
    Therefore, the overall space complexity is $\tO(n/\epsilon^{-2})=\tO(n/\alpha^{-2})$.

    We now explain how to implement the main loop of \Cref{algorithm:friendly-cut-sparsifier} with slack parameter $\beta=1/2$.
    Fix some iteration $j$ of the main loop.
    Begin by adding $(\epsilon/\phi(n))^{-1}\sqrt{w_j}$ self loops to every $v\in V_{j-1}$ in every power cut sparsifier for this level.
    Note that this can be implemented by simulating the insertion of these self loops at the end of the stream.
    Then, find a $(\epsilon,\epsilon/\log n)$-expander decomposition $W_1,\ldots,W_k$ of $G_{j-1}$ by using \Cref{theorem:expander-decomposition-from-power-cut-sparsifier}.
    
    The main challenge is finding the shaved clusters $W_i'$.
    Let $\widetilde{\deg}(u)\eqdef\mintcut_{\cutspar}(u)$.
    We now present the modified shaving conditions that use $\widetilde{\deg}(u)$ to estimate the degree of $u$ and a power cut sparsifier to estimate the number of edges between $u$ and $V_{j-1}\setminus W_i'$.
    For each modified criteria we first show that it successfuly shaves every vertex satisfying the original criteria, and then show that any additional vertex is shaved has slack $1/2$.
    The first condition is to shave every super vertex $u\in V_{j-1}$ that satisfies
    \begin{align*}
        \widetilde{\deg}(u)/(1+1/100)<10\alpha^{-1}\sqrt{w_j}|u|
    \end{align*}
    Notice that by the guarantee of the cut sparsifier, if $\deg_{G_{j-1}}(u)<10\alpha^{-1}\sqrt{w_j}|u|$ then 
    \begin{equation*}
        \frac{\widetilde{\deg}(u)}{1+1/100}
        <\frac{(1+1/100)d_{G_{j-1}}(u)}{1+1/100}
        =d_{G_{j-1}}(u)
        <10\alpha^{-1}\sqrt{w_j}|u|
        ,
    \end{equation*}
    and thus the algorithm shaves $u$.
    Furthermore, no vertex of degree at least $10\alpha^{-1}\sqrt{w_j}|v|/(1-1/100)$ is shaved.
    Therefore, this implements the degree condition in line~\ref{lst:line:contraction-criteria} with slack $1/(1+1/100)>1/2$.
    
    For the second condition, we need to estimate the number of edges between $u$ and $V_{j-1}\setminus W_i$.
    Notice that $E_{G_{j-1}}(u,V_{j-1}\setminus W_i)$ corresponds to the cut $S=\set{u}$ in the induced graph $G[(V_j\setminus W_i)\cup \set{u}]$, and thus it can be estimated using a power cut sparsifier.
    There are $O(n)$ such induced graphs, and since they are all fixed in advance (before opening the sketch), taking a union bound we find that a single sample from the power cut distribution is sufficient to approximate cuts within all of them with probability at least $1-n^{-C+1}$.
    The second condition (the out-degree condition) is to shave every super vertex $u\in V_{j-1}$ is modified to be
    \begin{equation}
        \label{eq:outdegree-shaving-condition}
        \rho \cdot \widetilde{\deg}(u) 
        < 
        4\alpha^{-1}w_{\powercut}(E_{G_{j-1}}(u,V_{j-1}\setminus W_i))
        ,
    \end{equation}
    where $\powercut$ is a $(\delta,\epsilon)$-cut sparsifier for $G_{j-1}\{u\cup V_{j-1}\setminus W_i\}$ and $\rho\in(0,1)$ is some constant to be determined later.
    We begin by showing that if $4\alpha^{-1}|E_{G_{j-1}}(u,V_{j-1}\setminus W_i)|>\deg(u)$ then this condition holds.
    By the properties of the $(\delta,\epsilon)$-cut sparsifier
    \begin{align*}
        &4\alpha^{-1}w_{\powercut}(E_{G_{j-1}}(u,V_{j-1}\setminus W_i))
        \\
        &\ge
        4\alpha^{-1} \left( (1-\delta) \cdot |E_{G_{j-1}}(u,V_{j-1}\setminus W_i)|-\epsilon \cdot \deg(u) \right)
        \\
        &\ge 
        4\alpha^{-1} \left( \left( 1-\frac{1}{5\log n} \right) |E_{G_{j-1}}(u,V_{j-1}\setminus W_i)| 
        -\frac{0.01\alpha\beta}{\log^3 n} \deg(u)
         \right)
         \\
         &>
        \left( 1-\frac{1}{5\log n} -\frac{0.04}{\log^3 n}\right) \deg(u)
        \\
        &\ge \left( 1-\frac{1}{5\log n} -\frac{0.04}{\log^3 n}\right) \left( 1+\frac{1}{100} \right)^{-1}\widetilde{\deg}(u)
        ,
    \end{align*}
    where the last inequality follows from the assumption on $u$.
    Therefore, choosing $\rho=3/4$ we have that \Cref{eq:outdegree-shaving-condition} holds for any large enough $n$.
    We now show that if $8\alpha^{-1}|E_{G_{j-1}}(u,V_{j-1}\setminus W_i)|\le \deg(u)$ then the condition does not hold.
    Notice that in this case
    \begin{align*}
        &4\alpha^{-1}w_{H}(E_{G_{j-1}}(u,V_{j-1}\setminus W_i))
        \\
        &\le
        4\alpha^{-1}\left( 1+\frac{1}{5\log n} \right) |E_{G_{j-1}}(u,V_{j-1}\setminus W_i)| 
        +\frac{0.04}{\log^3 n} \deg(u)
        \\
        &\le 
        4\alpha^{-1}\left( 1+\frac{1}{5\log n} \right) \frac{\alpha}{8}\deg(u)
        +\frac{0.04}{\log^3 n} \deg(u)
        \\
        &\le 
        \left( \frac{1}{2}+\frac{1}{10\log n} + \frac{0.04}{\log^3 n}\right) \deg(u)
        \\
        &\le \left( \frac{1}{2}+\frac{1}{10\log n} + \frac{0.04}{\log^3 n}\right)\left( 1-\frac{1}{100} \right)^{-1} \widetilde{\deg}(u)
        \le \frac{3}{4}\widetilde{\deg}(u)
        ,
    \end{align*}
    where the second inequality is by the assumption on $u$ and the last inequality is true for large enough $n$.
    Therefore, we can implement line~\ref{lst:line:contraction-criteria} with slack $1/2$.
\end{proof}

\subsection{Power Cut Sparsifiers with Contraction}
\label{sec:power-cut-sparsifier-contraction}
In this section we show how to construct a power cut sparsifier for a contraction of a graph $G$ given in a dynamic stream.
The construction is a modification of the algorithm of \cite{FKM23}.
Note that while the result of \cite{FKM23} is stated for simple graphs, it is explicitly mentioned that this construction works for unweighted multigraphs as well.
Furthermore, an earlier version of \cite{FKM23} contained a similar result to \Cref{lemma:power-cut-sparsifier-contraction}, which was removed in the most up-to-date version.
The construction was part of a proof for constructing a spanner in the streaming setting that contained an error, and thus was removed.
However, the source of the error was elsewhere, and the expander decomposition for a contraction is still valid.

We begin by presenting the algorithm of \cite{FKM23} for constructing a $(\delta,\epsilon)$-power cut sparsifier in a dynamic stream, and then explain how to modify it to handle contractions.
To construct the sparsifier, sample each edge $(u,v)\in E$ independently with probability $p=\gamma(1/d(u)+1/d(v))$, where $\gamma=O(\frac{\log^2 n}{\epsilon\delta})$.
To implement this algorithm in a dynamic stream, initialize $\log n$ uniform hash functions $h_i:\binom{V}{2}\to \set{0,1}$, which take an edge and return whether it is sampled in the $i$-th level.%
\footnote{Note that naively storing each such function requires $O(n^2)$ space, but one can use a pseudorandom generator to reduce the space to $\tO(n)$.}
These hash functions induce $\log n$ graphs, $G_i=(V,E_i)$ where an edge $e$ is included $E_i$ if $h_j(e)=1$ for all $j\le i$.
Since it is impossible to store all the edges of $G_i$, the algorithm utilizes the following sparse-recovery procedure.
\begin{lemma}[Folklore, e.g. Theorem 2.2 of \cite{AGM12}]
    There exists a linear sketch that recovers a $k$-sparse vector $x\in \R^n$ from a dynamic stream with success probability $1-p$ using $O(k\log n\log 1/p)$ space.
\end{lemma}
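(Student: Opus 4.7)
The plan is to build the sketch in two layers: a $1$-sparse recovery gadget that fits in $O(\log n)$ bits per instance, and a hashing scheme that reduces $k$-sparse recovery to many independent instances of the $1$-sparse primitive.

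For the $1$-sparse primitive, I maintain three linear sums over the stream: $\sigma_0=\sum_i x_i$, $\sigma_1=\sum_i i\cdot x_i$, and a random fingerprint $\sigma_2=\sum_i r^i\cdot x_i \bmod q$ for a prime $q=\poly(n)$ and a uniformly random $r\in\mathbb{F}_q$. All three are linear in $x$ and fit in $O(\log n)$ bits (stream length and magnitudes are polynomial in $n$). If the sketched vector is truly $1$-sparse, say $x=v\,\mathbf{e}_j$, the index and value are read off as $j=\sigma_1/\sigma_0$ and $v=\sigma_0$; otherwise, the check $\sigma_2\stackrel{?}{=}r^j\sigma_0$ fails with probability at least $1-n/q$ by Schwartz--Zippel, giving a reliable ``is this bucket a singleton?'' test.

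For the $k$-sparse case, I instantiate $T=\Theta(\log(k/p))$ independent tables, each equipped with a pairwise-independent hash $h_t:[n]\to[B]$ with $B=2k$, and a copy of the $1$-sparse sketch in every bucket. Decoding proceeds by peeling: locate any bucket that passes the singleton test, recover its $(j,v)$, and, using linearity, subtract $v\,\mathbf{e}_j$ from the bucket $h_{t'}(j)$ in every other table $t'$; repeat until no singleton remains. Correctness follows because for any fixed nonzero coordinate $i$, pairwise independence gives isolation probability at least $1-(k-1)/B\ge 1/2$ per table, so the probability $i$ is isolated in \emph{no} table is at most $2^{-T}\le p/k$; a union bound over the at most $k$ nonzeros yields total failure probability at most $p$. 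The space is $T\cdot B\cdot O(\log n)=O(k\log n\log(1/p))$, as required.

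The main subtlety I expect is making the peeling argument rigorous: after we subtract a recovered element the buckets change, so one might worry the isolation events become coupled across iterations. The clean resolution is that the event ``$i$ is isolated from all other nonzeros of $x$ in table $t$'' depends only on the fixed hash values, not on the decoding order, so whenever this event holds for some $t$ the decoder will eventually reach a state in which $i$ is the unique survivor in its bucket (once every coordinate peeled earlier has been subtracted), and will then recover it. A small additional union bound over the at most $TB$ singleton checks performed during the run ensures that the fingerprint test is simultaneously reliable throughout the execution, which only loses another factor of $\poly(n)/q$ in the failure probability and can be absorbed by slightly enlarging $q$.
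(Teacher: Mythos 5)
The paper gives no proof for this lemma; it is cited as folklore via Theorem 2.2 of \cite{AGM12}, and your construction is essentially the standard one: a moment-plus-fingerprint $1$-sparse tester in each bucket, $O(k)$ buckets per table, and $O(\log(\cdot))$ tables with a peeling decoder. Two points are worth tightening. First, your union bound over the $k$ nonzero coordinates forces $T=\Theta(\log(k/p))$ tables, so the space you actually prove is $O(k\log n\log(k/p))$ rather than the stated $O(k\log n\log(1/p))$. In the regime the paper uses ($p=1/\poly(n)$, $k\le n$) these coincide, but as stated your bound carries an extra $\log k$; matching the claimed form in full generality requires a sharper analysis of the peeling process (e.g., emptiness of the $2$-core of the bucket hypergraph) rather than requiring every coordinate to be individually isolated in some table. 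Second, the union bound you invoke for fingerprint reliability over the ``at most $TB$ singleton checks'' is not quite the right object: the decoder's sequence of tests is adaptive, and the residual vector tested in a bucket depends on which earlier tests passed. The clean patch is to note that, conditioned on the hash functions and on no false positive having yet occurred, each bucket's residual content is determined by the current set of peeled coordinates; since a bucket only loses coordinates it passes through at most $k+1$ distinct contents, so one can union-bound over these $O(TBk)$ deterministic candidate vectors rather than over the decoder's random trajectory, which is still $\poly(n)$ and is absorbed by choosing $q=\poly(n)$. With these two adjustments the argument is sound.
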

For each $v\in V$, the algorithm maintains a sketch $s_v^i$ of the vector $x_v^i\in \mathbb{R}^{\binom{V}{2}}$ where $x_v^i(e)=1$ if $e$ is incident to $v$ in $G_i$, and $0$ otherwise.
For every vertex $v\in V$, let $i^*(v)=\lfloor \log_2(d(v))/(2\gamma) \rfloor$.
It is straightforward to see that with high probability each vertex $v\in V$ has degree at most $8\log n$ in $G_{i^*(v)}$.
Therefore opening the sketch $s_v^{i^*(v)}$ of $x_v^{i^*(v)}$ recovers all edges incident to $v$ in $G_{i^*(v)}$.
Furthermore, these edges were sampled with probability at least $\gamma/d(v)$.
The sparsifier is then formed by taking the union of all edges sampled for each $v\in V$, and giving each edge $e=(u,v)$ weight $2^{\min\left\{ i^*(u), i^*(v) \right\}}$.
One of the main theorems of \cite{FKM23} is that this algorithm yields an $(\delta,\epsilon)$-power cut sparsifier with high probability using $\tO(n/(\epsilon\delta))$ space.
We now show how to modify this algorithm to handle contractions.
\begin{proof}[Proof of \Cref{lemma:power-cut-sparsifier-contraction}]
    To construct the power cut sparsifier we now need an additional cut sparsifier $\cutspar$ of $G$ with quality $\epsilon=1/100$.
    This can be constructed using $\tO(n)$ storage in one pass of streaming using \Cref{theorem:cut-sparsifier-streaming}.

    We now describe the modified algorithm.
    Begin by assigning an arbitrary ordering on the vertices of $G$.
    During the stream perform the same algorithm, except when updating the sketch $s_v^i$ of $x_v^i$, for every edge $e=(u,v)$ insertion set $x_v^i(e)=1$ for the higher index vertex and $x_v^i(e)=-1$ for the lower index vertex.
    Then, at the end of the stream, once the algorithm is given the contraction $V'$ of $V$, let $s_u^i=\sum_{v\in u}s_v^i$ for every $u\in V'$ and $i\in[\log n]$.
    To estimate the degree of $u\in V'$, the algorithm uses the cut sparsifier $\cutspar$ to estimate $\mintcut_G(u)$ which is equal to the degree.
    Then, let $i^*(u)=\lfloor \log_2(d(u))/(2\gamma) \rfloor$ and recover the edges in the sketch $s_u^{i^*(u)}$.
    The rest of the algorithm is identical to the original and the correctness analysis follows similarly.
\end{proof}

{\small
  \bibliographystyle{alphaurl}
  \bibliography{bibliography}
} %
\appendix
\section{Gomory-Hu Trees for Symmetric Submodular Functions}
\label{sec:gomory-hu-submodular}
In this section we prove \Cref{theorem:submodular-gomory-hu}.
We begin by restating the theorem for ease of reference.
\ghsubmodular*
The proof is based on the reduction of \cite{GKYY25} from constructing a Gomory-Hu tree of a graph to $\polylog(n)$ calls to max-flow computations on graphs of total size $\tO(n)$.

\begin{proof}
    The correctness of the algorithm follows from the fact that $g$ is symmetric submodular, as is explicitly shown in \cite{GKYY25}.
    Therefore, it remains to analyze the query complexity of the algorithm.
    The algorithm is based on partitioning the set of terminals, and then recursing on each part.
    The following claim shows that the total number of elements in all the recursive calls is only $O(n\log^2 n)$.
    We note that the claim is stated for weighted graphs, but the same proof applies to symmetric submodular functions as well.
    \begin{claim}[Claim 5.3 of \cite{GKYY25}]
        \label{claim:total-ground-set-size-gomory-hu}
        A call on a function $g$ of ground set size $n$ to the Gomory-Hu tree algorithm, creates recursive calls on functions of total ground set size $O(n\log^2 n)$.
    \end{claim}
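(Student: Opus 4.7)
The plan is to observe that the stated claim is the verbatim analogue of Claim~5.3 of \cite{GKYY25}, and that the proof given there transfers to the symmetric-submodular setting unchanged: the bound $O(n\log^2 n)$ depends only on the combinatorial structure of the recursion tree of the Gomory-Hu algorithm, not on any property of the cut function beyond symmetric submodularity.

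Concretely, the $O(n\log^2 n)$ bound in \cite{GKYY25} rests on two combinatorial facts about the recursion tree. First, the recursion has depth $O(\log n)$: every recursive call partitions its input into pieces whose individual sizes shrink by at least a constant factor, so any root-to-leaf chain has length $O(\log n)$. Second, at each level of the recursion the sum of the ground-set sizes of the active subproblems is at most $O(n\log n)$; the extra $\log n$ factor comes from the probability amplification applied to the isolating-cut subroutine, which is repeated $\Theta(\log n)$ times per level so that every pair of terminals is separated by some repetition with high probability. Multiplying these two bounds yields $O(n\log^2 n)$.

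Both facts transfer from graphs to symmetric submodular functions without change. The way the algorithm partitions its input into subproblems is determined by the outputs of the min-cut computations, and whether those are executed by max-flow on a graph or by SFM on a symmetric submodular function is irrelevant to the sizes of the resulting pieces; the $\Theta(\log n)$ boosting repetitions per level are likewise a purely algorithmic construct that makes no use of the underlying function. The main thing to verify is that the proof of Claim~5.3 in \cite{GKYY25} never invokes a graph-specific inequality (e.g., one summing edge weights across a cut) and relies only on counting of ground-set elements; a line-by-line inspection should confirm this, as the combinatorial nature of the bound suggests no such step appears.
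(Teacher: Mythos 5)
Your proposal takes essentially the same route as the paper: the paper does not reprove this claim but simply imports it as Claim~5.3 of \cite{GKYY25} and observes that its bound is a purely combinatorial count on the recursion tree (depth and per-level ground-set mass) that never invokes a graph-specific inequality, so it carries over verbatim to symmetric submodular functions. Your additional heuristic decomposition of the $O(n\log^2 n)$ bound is consistent with this, though it is not part of the paper's argument, which stops at the observation that the original proof transfers.
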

    In each recursive call, the algorithm makes either $O(\log^2 n)$ calls to an SFM procedure and $O(\polylog n)$ calls to an isolating-cuts procedure.
    Outside these procedures the algorithm does not make any additional queries to $g$.
    The following theorem implements the isolating-cuts procedure using $O(\log n)$ calls to an SFM procedure.
    \begin{theorem}[Theorem 2.4 of \cite{MN21}]
        Let $g$ be a symmetric submodular function over a ground set $V$ of size $n$, let $R\subseteq V$ be a set of terminals, and $\A$ be an SFM procedure.
        Then, there exists an isolating-cuts procedure for $g$ and $R$ that makes $O(\log |R|)$ calls to $\A$ with ground set size $O(n)$ and an additional $|R|$ calls to $\A$ on ground sets $U_1,\ldots, U_{|R|}$ where $\sum_{i=1}^{|R|} |U_i| \le n$.
    \end{theorem}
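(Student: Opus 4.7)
The plan is to use the classical isolating-cuts technique of Li--Panigrahi / Abboud--Krauthgamer--Trabelsi--Wulff-Nilsen, adapted from graph cuts to symmetric submodular functions. The idea is to use $O(\log |R|)$ well-chosen bipartitions of $R$ to localize, for each terminal $v$, a set $U_v \subseteq V$ that contains its minimum isolating cut, and then to solve a small constrained SFM problem within each $U_v$.

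First I would label the terminals in $R$ by distinct $\lceil \log_2 |R| \rceil$-bit binary strings, and for each bit index $i$ define a bipartition $(A_i, B_i)$ of $R$ according to the $i$-th bit. For each $i$, I would call $\A$ to minimize $g(S)$ subject to $A_i \subseteq S \subseteq V \setminus B_i$; this is cast as unconstrained SFM on the submodular function $h_i(T) \eqdef g(A_i \cup T)$ over the ground set $V \setminus (A_i \cup B_i)$, which has size $O(n)$. Denote the resulting minimizer, oriented so that $A_i \subseteq S_i$, by $S_i$. These are the first batch of $O(\log |R|)$ calls on ground sets of size $O(n)$. Next, for each $v \in R$ define $U_v \eqdef \{x \in V : \mathds{1}[x \in S_i] = \mathds{1}[v \in S_i] \text{ for all } i\}$. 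Since distinct terminals have distinct bit patterns across the bipartitions, the sets $\{U_v\}_{v \in R}$ are pairwise disjoint subsets of $V$, so $\sum_{v} |U_v| \le n$, and $v \in U_v$. Finally, for each $v$ I would invoke $\A$ on the submodular function $T \mapsto g(T \cup \{v\})$ restricted to subsets of $U_v \setminus \{v\}$ to output the isolating cut for $v$; these form the second batch of $|R|$ calls, on ground sets of total size at most $n$, matching the claim.

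Correctness reduces to showing that the minimum isolating cut $S_v^*$ for $v$ lies entirely in $U_v$. I would take $S_v^*$ to be the unique minimal minimizer (the intersection of all minimum isolating cuts, which is itself a minimum by the lattice structure of SFM solutions). For any bipartition $i$ with $v \in A_i$, the set $S_v^* \cap S_i$ still contains $v$ and is disjoint from $R \setminus \{v\}$, hence is an isolating cut for $v$, while $S_v^* \cup S_i$ still contains $A_i$ and is disjoint from $B_i$, hence is an $A_i$-$B_i$ cut; submodularity then forces $g(S_v^* \cap S_i) = g(S_v^*)$ and $g(S_v^* \cup S_i) = g(S_i)$, and the minimality of $S_v^*$ gives $S_v^* \subseteq S_i$. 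For a bipartition with $v \in B_i$, the symmetry $g(S) = g(V \setminus S)$ allows one to reorient $S_i$ to its complement and conclude $S_v^* \subseteq V \setminus S_i$. Together these yield $S_v^* \subseteq U_v$. The main obstacle I anticipate is handling this orientation issue consistently: symmetry of $g$ makes the choice of $S_i$ versus $V \setminus S_i$ arbitrary, and a clean bookkeeping convention (always orient $A_i$ into $S_i$) together with the symmetric reorientation in the $v \in B_i$ case is what makes the uncrossing argument go through. A minor additional check is that the contraction $h_i$ and the restriction/contraction in Step~4 preserve submodularity, which is standard.
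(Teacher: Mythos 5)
Your proposal is correct and follows the standard isolating-cuts argument — binary-encoding bipartitions of $R$, uncrossing the minimal minimum isolating cut against each $S_i$ via submodularity and symmetry, and then running local SFM on the pairwise-disjoint cells $U_v$ — which is exactly the approach of the cited source \cite{MN21}; the paper itself imports this theorem without reproving it. The call counts and ground-set sizes in your construction match the statement, and the orientation and lattice-minimality points you flag are handled correctly.
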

    We now show that the function $Q(n)$ is superadditive, i.e., for every $n_1,\ldots,n_k$ such that $\sum_i n_i \le n$ we have $\sum_i Q(n_i) \le Q(n)$.
    To do so, we will use the following fact, whose proof is based on \cite{mathoverflow490528} and is included at the end of the section for completeness.
    \begin{fact}
        \label{fact:superadditive-function}
        Let $f:\R\to\R_+$ be a function such that $f(x)/x$ is non-decreasing, then $f$ is superadditive.
    \end{fact}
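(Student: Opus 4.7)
The plan is a short, direct argument using the monotonicity of $f(x)/x$. Given positive reals $x,y>0$, I would first observe that $x+y \geq x$ and $x+y \geq y$, so the monotonicity hypothesis on $f(x)/x$ yields the two inequalities
\[
\frac{f(x)}{x} \leq \frac{f(x+y)}{x+y}
\qquad\text{and}\qquad
\frac{f(y)}{y} \leq \frac{f(x+y)}{x+y}.
\]

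From here the proof is a one-line rearrangement: multiplying the first inequality by $x$ and the second by $y$ gives $f(x) \leq \frac{x}{x+y}\cdot f(x+y)$ and $f(y) \leq \frac{y}{x+y}\cdot f(x+y)$, and summing the two yields
\[
f(x)+f(y) \leq \frac{x+y}{x+y}\cdot f(x+y) = f(x+y),
\]
which is exactly superadditivity. Iterating this base case over $k-1$ pairwise combinations then gives $\sum_{i=1}^k f(n_i) \leq f(\sum_i n_i) \leq f(n)$ for any $n_1,\dots,n_k$ with $\sum_i n_i \leq n$, where the last inequality uses that $f$ itself is non-decreasing (which follows from $f(x)/x$ being non-decreasing together with $f\geq 0$, since $f(x)=x\cdot f(x)/x \leq y \cdot f(y)/y = f(y)$ whenever $x\leq y$).

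There is essentially no obstacle here; the only care needed is to handle the case where some $n_i=0$ (trivial, since $f\geq 0$ means $f(0)\geq 0$ and we may assume $f(0)=0$ or drop zero terms), and to check that the inequality $f(x+y)\geq f(x)+f(y)$ suffices to imply the $k$-ary form by induction. Once these routine points are dispatched, the claim follows, and plugging into the preceding argument gives $\sum_i Q(n_i)\leq Q(n)$ as required to complete the query-complexity bound for the Gomory--Hu tree construction.
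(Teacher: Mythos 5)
Your proof is correct and follows essentially the same approach as the paper: using monotonicity of $f(x)/x$ to bound $f(x)$ and $f(y)$ each by the appropriate fraction of $f(x+y)$ and summing. The extra remarks about iterating to the $k$-ary case and the monotonicity of $f$ itself are sound but not part of the stated fact.
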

    Notice that this holds for $Q(n)$ from the assumption that $Q(n)=n\cdot g(n)$ where $g(n)$ is non-decreasing.
    Therefore, denoting the number of vertices in the $j$-th partition of the $i$-th recursive call by $n_{i,j}$, the total query complexity of the algorithm is
    \begin{equation*}
        \sum_i \sum_j Q(n_{i,j})
        \le  \sum_i Q(n)\cdot \polylog n
        \le Q(n)\cdot \polylog n
        , 
    \end{equation*}
    where the first inequality is since $n_{i,j} \le n$ and $\sum_j n_{i,j} \le n\cdot\polylog n$,
    and the last inequality is by \Cref{claim:total-ground-set-size-gomory-hu}.
\end{proof}
\begin{proof}[Proof of \Cref{fact:superadditive-function}]
    The proof is based on \cite{mathoverflow490528} and is included for completeness.
    Let $f:\R\to\R_+$ be a function such that $f(x)/x$ is non-decreasing.
    Notice that $f(x)\le xf(x+y)/ (x+y)$ for every $x,y>0$ and similarly $f(y)\le yf(x+y)/(x+y)$.
    Therefore,
    \begin{equation*}
        f(x)+f(y)
        \le \frac{(x+y)f(x+y)}{x+y}
        = f(x+y)
        \qedhere
    \end{equation*}
\end{proof}

\section{Max-Flow Cut-Query Lower Bound}
\label{sec:max-flow-cut-query-lower-bound}
In this section we prove \Cref{lemma:maximum-s-t-flow-lower-bound}, proving a lower bound on the number of cut queries needed to recover every edge in some maximum $s,t$-flow in an unweighted graph $G$.
\begin{proof}
The lower bound is based on a graph family closely related to a hard example provided in \cite{KL98}.
We define a family $\G$ of unweighted graphs on $n+2$ vertices with two designated vertices $s,t$.
The family is defined as follows.
Initialize an empty graph $G$ on $n+2$ vertices.
Let $n'\eqdef 8\cdot \lfloor n/8\rfloor$ and ignore the excess $n-n'$ vertices for the rest of the proof.
In addition, let $\f$ be the closest integer to $n'$ such that $n'/\sqrt{\f}$ is integral and $\f$ is divisible by $8$;
note that $\f = \Theta(n)$.
The graph is $G$ is obtained by picking $n'/4$ arbitrary vertices $s_1,\ldots,s_{n'/4}$, and adding the edges $\{s,s_i\}_{i=1}^{n'/4}$ and similarly picking $n'/4$ different vertices $t_1,\ldots,t_{n'/4}$ and adding the edges $\{t,t_i\}_{i=1}^{n'/4}$.
Then, partition the remaining $n'/2$ vertices into layers $L_1,\ldots,L_{n'/\sqrt{\f}}$, where each layer $L_i$ contains $\sqrt{\f}/2$ vertices.
Connect each layer $L_i$ to layer $L_{i+1}$ by sampling uniformly a random set of cardinality $\f/8$ from $L_i\times L_{i+1}$.
Finally, add a complete bipartite graph between the sets $\{s_1,\ldots,s_{n'/4}\}$ and $L_1$, and $\{t_1,\ldots,t_{n'/4}\}$ and $L_{n'/\sqrt{\f}}$.

It is straightforward to see that the maximum $s,t$-flow in $G$ is $\f/8$, since each layer $L_i$ is connected to the next layer $L_{i+1}$ by $\f/8$ edges.
Furthermore, every edge between each consecutive layers $L_i,L_{i+1}$ participates in the maximum flow.
Therefore, to recover every edge in the maximum flow, we must recover all the edges between each pair of layers $L_i,L_{i+1}$.
Notice that the number of bits needed to encode the edges between two consecutive layers is $\log_2 \binom{\f/4}{\f/8} \le \f$.
Since there are $n/\sqrt{\f}$ such layers, the total number of bits needed to encode all the edges in the maximum flow is $n'\sqrt{\f}\ge n^{3/2}/64$.

We now use the above construction to prove a lower bound on the number of cut queries needed to recover all the edges in the maximum flow.
Denote the success probability of the algorithm by $p=\Pr_{G\sim \G}[A=G]$ and recall that $p$ is a constant.
It is clear that recovering every edge of the maximum flow is equivalent to recovering the graph $G$.
To prove the lower bound on the number of cut queries needed to recover $G$, we will use an information-theoretic argument.
By Yao's minimax principle, it suffices to prove a lower bound for deterministic algorithms on a random graph $G$ sampled from the above distribution.
Let $A\in \G$ be the output of the deterministic cut-query algorithm after making a sequence $R$ of $q$ cut queries on a graph $G$ uniformly sampled from $\G$.
It is clear that $H(G')\le H(R) \le 2q\log n$ since each cut query returns a value in the range $[0,n^2]$.
Therefore,
\begin{equation*}
    H(G \mid A) 
    = H(G,A) - H(A)
    \ge H(G) - H(A)
    \ge n^{3/2}/64-2q\log n
    .
\end{equation*}
On the other hand, by Fano's inequality we can upper bound $H(G \mid A)$ as follows,
\begin{equation*}
    H(G \mid A) 
    \le H(\text{error}) + \Pr[\text{error}]\cdot \log_2 |\G|
    \le 1 + (1-p)\cdot n^{3/2}/64
    .
\end{equation*}
Combining the two we bounds we find that $q\ge \Omega(pn^{3/2}/\log n)$.
\end{proof}

\fi

\end{document}